\documentclass[10pt]{amsart}

\usepackage{enumerate}

\newtheorem{theorem}{Theorem}[section]

\newtheorem{lemma}[theorem]{Lemma}

\newtheorem{remark}[theorem]{Remark}
\newtheorem{example}[theorem]{Example}

\begin{document}

\title[Exponential utility maximization under model uncertainty]{
Exponential utility maximization under model uncertainty for unbounded endowments}
\author[Daniel Bartl]{Daniel Bartl$^*$}	
\thanks{$^*$Department of Mathematics, University of Konstanz, daniel.bartl@uni-konstanz.de\\
The author would like to thank Michael Kupper and two anonymous referees for valuable comments.	
The author is supported by the Austrian Science Fund (FWF) under grant Y00782.}

\keywords{utility maximization, robust finance, duality, dynamic programming}
\date{\today}
\subjclass[2010]{91B16, 49L20, 60G42} 

\begin{abstract} 
	We consider the robust exponential utility maximization problem in discrete time:
	An investor maximizes the worst case expected exponential utility with respect to
	a family of nondominated probabilistic models of her endowment by dynamically
	investing in a financial market, and statically in available options.
	
	We show that, for any measurable random endowment
	(regardless of whether the problem is finite or not)
	an optimal strategy exists, a dual representation in terms of (calibrated) martingale
	measures holds true, and that the problem satisfies the dynamic programming principle
	(in case of no options).
	Further it is shown that the value of the utility maximization problem converges
	to the robust superhedging price as the risk aversion parameter gets large, 
	and examples of nondominated probabilistic models are discussed.
\end{abstract}

\maketitle
\setcounter{equation}{0}

\section{Introduction}

In this article we study the problem of robust exponential utility maximization
in discrete time. 
Here the term robust reflects uncertainty about the true probabilistic model 
and the consideration of a whole family of models as a consequence.
This is not a new concept and since the seminal papers 
\cite{gilboa1989maxmin} and \cite{maccheroni2006ambiguity}
it has gained a lot of attention,
see  e.g.~\cite{acciaio2013model,beiglbock2013model,beiglbock2015complete,bouchard2015arbitrage,burzoni2016pointwise,cheridito2015representation,denis2013optimal,hobson1998robust,neufeld2015robust,nutz2014utility,peng2007g}
and \cite{cheridito2016duality} for an overview.
To state our problem more precisely, given the exponential utility function
\[ U(x):=-\exp(-\gamma x) \]
with risk-aversion parameter $\gamma>0$, a possibly nondominated set of probabilistic models
$\mathcal{P}$ and the agent's random endowment $X$,
we are interested in the optimization problem 
\begin{align}
\label{eq:intro.primal}
\sup_{(\vartheta,\alpha)\in\Theta\times\mathbb{R}^e}\inf_{P\in\mathcal{P}} E_P[ U(X+(\vartheta\cdot S)_T+\alpha(g-g_0))].
\end{align}
Here $g^1,\dots,g^e$ are traded options available for buying and selling at time 0 for the prices $g_0^1,\dots,g_0^e$, the set $\Theta$ consists of all predictable dynamic trading strategies for the (discounted) stock $S$,
and $(\vartheta\cdot S)_T+\alpha(g-g_0)$ is the outcome of a semistatic trading strategy 
$(\vartheta,\alpha)\in\Theta\times\mathbb{R}^e$.

The first immediate question when investigating the optimization problem \eqref{eq:intro.primal}
is whether an optimal strategy $(\vartheta,\alpha)$ 
(which should be defined simultaneously under all models $P\in\mathcal{P}$) exits.
Due to the absence of a measure capturing all zero sets and the failure of classic
arguments such as Komlos' theorem as a consequence, this is nontrivial.
Our second interest lies in the validity of a dual representation with respect to linear pricing measures,
namely if \eqref{eq:intro.primal} is equal to
\[ -\exp\big(-\inf_{Q\in\mathcal{M}} \big( \gamma E_Q[X] + H(Q,\mathcal{P}) \big) \big),\]
where $\mathcal{M}$ denotes the set of all martingale measures $Q$ for the stock $S$ calibrated to the options
(i.e.~$E_Q[g^i]=g^i_0$ for $1\leq i\leq e$)  under which the robust entropy $H(Q,\mathcal{P})$ is finite.
Finally we study if \eqref{eq:intro.primal} satisfies the dynamic programming
principle (in case without options), 
meaning that it is possible to analyze the problem locally
and later ``glue" everything together. In particular this implies that a strategy
which is optimal at time 0, will be optimal again, if one starts to solve the optimization
problem at some positive time $t$.

\vspace{0.5em}

The main contribution of this paper is to show that positive answers to all three questions,
namely the existence of an optimal strategy, duality, and dynamic programming 
can be given under weak assumptions, see Theorem \ref{thm:main} and Theorem \ref{thm:main.options}.
Further, it is shown that a scaled version of \eqref{eq:intro.primal} converges to the 
minimal superhedging price of $X$ if the risk-aversion parameter $\gamma$ tends to infinity,
see Theorem \ref{thm:limit.superhedg}.
In fact, we adopt the setting suggested by Bouchard and Nutz in 
the milestone paper \cite{bouchard2015arbitrage} 
and show by means of optimal control, that for any unbounded measurable (lower semianalytic) 
random endowment $X$ (regardless of whether the optimization problem \eqref{eq:intro.primal} 
is finite or not), existence, duality, and the dynamic programming principle hold true.

\vspace{0.5em}

Needless to say, utility maximization is an important topic in mathematical finance
starting with \cite{kramkov1999asymptotic,merton1971optimum}.
In case of exponential utility function 
(though in a continuous-time and non-robust setting),
\cite{frittelli2000minimal} and \cite{delbaen2002exponential} were the first to prove
duality and existence, which lead to further analysis, for example 
a BSDE characterization of the optimal value and solution in an incomplete 
market and under trading constraints is given in \cite{hu2005utility},
and the dynamics and asymptotics in the risk-aversion parameter $\gamma$
are studied in \cite{mania2005dynamic}.
In the presence of uncertainty, starting with \cite{quenez2004optimal} and \cite{schied2004risk}, 
most results are obtained under the assumption that
$\mathcal{P}$ is dominated, see e.g.~\cite{backhoff2015robust,gundel2005utility,owari2011robust}.
The literature focusing on a nondominated set $\mathcal{P}$ is still comprehensible and in
continuous-time results are given in \cite{denis2013optimal,matoussi2015robust,neufeld2015robust}.

In the present setting (that is discrete-time and a nondominated set $\mathcal{P}$), 
the dynamic programming principle and the existence of an optimal strategy are first shown in \cite{nutz2014utility},
where the author considers a random utility function $U$ defined on $\Omega\times\mathbb{R}_+$ 
satisfying a certain boundedness 
(which would correspond to a random endowment that is bounded from below in our setting).
More recently, there are three papers generalizing the result of \cite{nutz2014utility}.
In \cite{blanchard2016robust}, the boundedness of the random utility 
(still defined on the positive real line) 
is replaced by a certain integrability condition and dynamic programming as well as the
existence of an optimal strategy is shown. 
In \cite{neufeld2016robust,neufeld2017nonconcave}, the random utility function
(which may be non-concave in the second work) is no longer defined on the positive real line,
but satisfies certain boundedness similar to \cite{nutz2014utility}.
Moreover, the market is more general and includes e.g.~trading constraints or proportional transaction cost.
Convergence of the utility indifference prices (to the superheding price) is shown in 
\cite{blanchard2017convergence}.
Duality on the other hand is shown in Section 4.2 of \cite{cheridito2016duality}
under a compactness condition on the set $\mathcal{P}$ and (semi-)continuity of the
random endowment $X$.

\vspace{0.5em}

In order to lighten notation, we will assume without loss of generality that the prices
of the traded options are 0 and, instead of \eqref{eq:intro.primal}, consider the equivalent problem
\[ \inf_{(\vartheta,\alpha)\in\Theta\times\mathbb{R}^e}
\sup_{P\in\mathcal{P}} \log E_P[\exp(X+(\vartheta\cdot S)_T + \alpha g)].\]
It is clear that both problems are one to one, except all results for an 
endowment $X$ in the original problem hold for $-X$ in the transformed one, and vice versa.

The remainder of this paper is organized as follows: 
Section \ref{sec:main} contains the setting, all main results, 
a discussion of the assumptions, and some examples. 
Sections \ref{sec:one.period} and Section \ref{sec:multi.period} are devoted to the proofs for the one period and general case, respectively. 
Finally, technical proofs are given in Appendix \ref{sec:app.proofs} and a brief introduction
to the theory of analytic sets is given in Appendix \ref{sec:app.analytic}.

\section{Main results}
\label{sec:main}

\subsection{Setting}

Up to a change regarding the no-arbitrage condition 
(discussed in Remark \ref{rem:main.discussion}), we work in the setting 
proposed by Bouchard and Nutz \cite{bouchard2015arbitrage}, which is briefly summarized below.
Analytic sets and the general terminology are shortly discussed in Appendix \ref{sec:app.analytic}.
Let $\Omega_0$ be a singleton and $\Omega_1$ be a Polish space. 
Fix $d,T\in\mathbb{N}$, let $\Omega_t:=\Omega_1^t$, and define
$\mathcal{F}_t$ to be the universal completion of the Borel $\sigma$-field on $\Omega_t$
for each $0\leq t\leq T$.
To simplify notation, we denote $(\Omega,\mathcal{F})=(\Omega_T,\mathcal{F}_T)$
and often consider $\Omega_t$ as a subset of $\Omega$.
For $s<t$, some fixed $\omega\in\Omega_s$, and a function $X$ with domain $\Omega_t$, we consider
$X(\omega,\cdot)$ as a function with domain $\Omega_{t-s}$, i.e.~$\Omega_{t-s}\ni\omega'\mapsto X(\omega,\omega')$.
For each $0\leq t\leq T-1$ and $\omega\in\Omega_t$, there is a given convex and nonempty
set of probabilities $\mathcal{P}_t(\omega)\subset\mathfrak{P}(\Omega_1)$, 
which can be seen as all possible probability scenarios 
for the price of the stock at time $t+1$, given the history $\omega$.
The assumption throughout is that the stock $S_t\colon\Omega_t\to\mathbb{R}^d$ is Borel
and that the set-valued mapping $\mathcal{P}_t$ has analytic graph.
The latter in particular ensures that 
\begin{align}
\label{eq:P.time.consistent}
 \mathcal{P}:=\{ P=P_0\otimes\cdots\otimes P_{T-1} : P_t(\cdot)\in\mathcal{P}_t(\cdot) \}
\end{align}
is not empty.
Here, each $P_t$ is a selector of $\mathcal{P}_t$ , i.e.~a universally measurable function
$P_t\colon\Omega_t\to\mathfrak{P}(\Omega_1)$ satisfying $P_t(\omega)\in\mathcal{P}_t(\omega)$
for each $\omega$, and the probability $P$ on $\Omega$ is defined by
$P(A):=\int_{\Omega_1}\cdots\int_{\Omega_1} 1_A(\omega_1,\dots,\omega_T)
P_{T-1}(\omega_1,\dots,\omega_{T-1},d\omega_T)\cdots P_0(d\omega_1)$.
The set of all dynamic trading strategies is denoted by $\Theta$ and an element $\vartheta\in\Theta$
is a vector $\vartheta=(\vartheta_1,\cdots,\vartheta_T)$ consisting of $\mathcal{F}_{t-1}$-measurable
mappings $\vartheta_t\colon\Omega_{t-1}\to\mathbb{R}^d$. 
The outcome at time $t$ of trading according to the dynamic strategy $\vartheta$ starting at time $s\leq t$ is given by
\[ (\vartheta\cdot S)_s^t:=\vartheta_{s+1}\Delta S_{s+1} +\cdots + \vartheta_t\Delta S_t,\quad
\text{where}\quad \Delta S_u:=S_u-S_{u-1} \]
and $\vartheta_u\Delta S_u:=\sum_{i=1}^d \vartheta_u^i\Delta S_u^i$ is the inner product.
As $\mathcal{P}$ has a dynamic form, one can consider both a local and a global no arbitrage condition:
The global $\mathrm{\mathop{NA}}(\mathcal{P})$ condition is satisfied if $(\vartheta\cdot S)_0^T\geq 0$ $\mathcal{P}$-q.s.~implies 
$(\vartheta\cdot S)_0^T= 0$ $\mathcal{P}$-q.s.~for every $\vartheta\in\Theta$, 
and the local NA($\mathcal{P}_t(\omega)$) condition (for fixed $0\leq t\leq T-1$ and $\omega\in\Omega$)
is satisfied if $h\Delta S_{t+1}(\omega,\cdot)\geq 0$
$\mathcal{P}_t(\omega)$-q.s.~implies $h\Delta S_{t+1}(\omega,\cdot)= 0$ 
$\mathcal{P}_t(\omega)$-q.s.~for every $h\in\mathbb{R}^d$.	
Throughout this article, we assume that 
\begin{align}
\label{eq:no.arbitrage}
\mathrm{\mathop{NA}}(\mathcal{P}_t(\omega)) \text{ holds for every } 0\leq t \leq T-1
\text{ and }\omega\in\Omega_t.
\end{align}
Note that this assumption is purely technical, 
as $\mathrm{\mathop{NA}}(\mathcal{P})$ holds true if and only if the set of all 
$\omega$ such that $\mathrm{\mathop{NA}}(\mathcal{P}_t(\omega))$ fails for some $t$ is 
a zero set under all $P\in\mathcal{P}$, see \cite[Theorem 4.5]{bouchard2015arbitrage}
and Remark \ref{rem:main.discussion} for a discussion.
Finally, define
\[ \mathcal{M}
=\big\{ Q\in\mathfrak{P}(\Omega): 
S \text{ is a martingale under $Q$ and } H(Q,\mathcal{P})<+\infty\big\}, \]
to be the set of martingale measures with finite robust entropy 
\[ H(Q,\mathcal{P}):=\inf_{P\in\mathcal{P}} H(Q,P)
\quad\text{where}\quad
H(Q,P):=\begin{cases} 
E_P\big[\frac{dQ}{dP}\log\frac{dQ}{dP}\big]&\text{if }Q\ll P,\\
+\infty &\text{else.}
\end{cases} \]
Throughout the convention $E_P[X]:=E_P[X^+]-E_P[X^-]$ with $E_P[X]:=-\infty$ if $E_P[X^-]=+\infty$ is in force; 
in particular $X$ is integrable with respect to $P$ if and only if $E_P[X]\in\mathbb{R}$.

\subsection{Main results}

\begin{theorem}[Without options]
\label{thm:main}
	Let $X\colon\Omega\to(-\infty,+\infty]$ be upper semianalytic.
	Then 
	\begin{align}
	\label{eq:optim.problem}
	\inf_{\vartheta\in\Theta}\sup_{P\in\mathcal{P}}
	\log E_P\big[\exp\big(X + (\vartheta\cdot S)_0^T\big)\big]
	=\sup_{Q\in\mathcal{M}}	\big(E_Q[X] -H(Q,\mathcal{P}) \big)
	\end{align}
	and both terms are not equal to $-\infty$.
	Moreover, the infimum over 	$\vartheta\in\Theta$ is attained and the optimization
	problem satisfies the dynamic programming principle;
	see Theorem \ref{thm:multiperiod} for the precise formulation of the last statement.
\end{theorem}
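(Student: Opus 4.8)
The plan is to prove the theorem by reducing the multi-period problem to a one-period problem via dynamic programming, and to establish duality and the existence of an optimizer at each step simultaneously. The key observation is that the one-period problem has trading strategies living in $\mathbb{R}^d$, so finite-dimensional convex-analytic tools (minimax theorems, compactness after reducing to the effective domain using NA) are available there, whereas the infinite-dimensional multi-period problem is not directly amenable to such arguments.

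\textbf{Step 1: The one-period case.} First I would fix $t$ and $\omega\in\Omega_t$, and study the local problem $\inf_{h\in\mathbb{R}^d}\sup_{P\in\mathcal{P}_t(\omega)}\log E_P[\exp(Y + h\Delta S_{t+1}(\omega,\cdot))]$ for a suitable upper semianalytic $Y$. The NA($\mathcal{P}_t(\omega)$) condition guarantees that one may restrict the infimum to a compact set of $h$'s (standard Bouchard--Nutz argument: if the infimum were attained only "at infinity", one could extract a direction $h$ with $h\Delta S_{t+1}\geq 0$ q.s.\ but not $=0$ q.s., contradicting NA). One then applies a minimax theorem — the convexity of $\mathcal{P}_t(\omega)$ and convexity/concavity in $(h,P)$ make this work — to interchange $\inf_h$ and $\sup_P$, and identifies the dual object via the Legendre transform of relative entropy, obtaining $\sup_{Q}(E_Q[Y]-H(Q,\mathcal{P}_t(\omega)))$ over one-step martingale measures $Q$. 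Measurable selection (using the analytic-graph hypothesis and the theory in Appendix \ref{sec:app.analytic}) then yields a universally measurable optimizer $h(\omega)$ and preserves upper semianalyticity of the value function.

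\textbf{Step 2: Dynamic programming / pasting.} Define recursively $V_T:=X$ and $V_t(\omega):=$ the value of the one-period problem at $(t,\omega)$ with terminal function $V_{t+1}(\omega,\cdot)$. By induction, each $V_t$ is upper semianalytic and each one-period problem admits an $\mathcal{F}_t$-measurable optimizer $\vartheta_{t+1}$; concatenating these gives a strategy $\vartheta\in\Theta$. The duality at time $0$ follows by iterating the one-period duality and gluing the one-step martingale measures and relative entropies — here the tower property of conditional entropy, $H(Q,\mathcal{P})=E_Q[\sum_t H(Q_{t+1},\mathcal{P}_t)]$ for the disintegrations, is what makes the telescoping work. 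That both sides are strictly greater than $-\infty$ comes from the assumption $X>-\infty$ and nonemptiness of $\mathcal{M}$ (which in turn needs NA, in the slightly strengthened local form, to produce at least one martingale measure with finite robust entropy).

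\textbf{Main obstacle.} The hard part will be handling the unboundedness of $X$ rigorously inside the dynamic programming recursion: a priori $V_t$ may fail to be integrable under the relevant measures, conditional expectations of $\exp(V_{t+1})$ may be $+\infty$ on non-negligible sets, and the minimax/selection arguments need the value functions to be well-behaved (upper semianalytic, and the $\log E_P[\exp(\cdot)]$ finite enough to manipulate). The strategy to bypass this, as hinted in the introduction, is precisely that the one-period \emph{duality} — once established — provides exactly the regularity (finiteness of $V_t$ where it matters, measurable structure of the dual minimizers) needed to push the induction through; so the delicate point is to set up the one-period statement with hypotheses weak enough to be reproduced by the recursion yet strong enough to run the minimax and selection machinery. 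Keeping track of the $+\infty$-valued cases and the $\mathcal{P}$-q.s.\ exceptional sets throughout the induction is where most of the technical care goes.
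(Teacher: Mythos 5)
Your plan coincides in essentials with the paper's proof: one-period duality and existence are obtained simultaneously via the entropic representation of $\log E_P[\exp(\cdot)]$, a Sion-type minimax interchange whose coercivity in $h$ comes from NA through the property $0\in\mathop{\mathrm{ri}}\{E_Q[\Delta S]:Q\in\mathcal{C}\}$, and measurable selection; then a backward induction in which the chain rule for relative entropy glues the local dual problems, with the dual representation supplying exactly the upper semianalyticity and regularity needed to keep the recursion going --- which is the paper's scheme. The remaining differences are matters of execution rather than approach: in the paper an unbounded minimizing sequence does not contradict NA but forces the local value to be $+\infty$ (so every $h$ is optimal), the unbounded endowment is handled by truncating to $X\wedge n$ and invoking continuity from below, and the weak-duality half of the dynamic programming principle additionally uses a discrete-time local-martingale integrability result to conclude that $(\vartheta\cdot S)_t^T$ has $Q$-expectation zero for every $Q$ with finite entropy.
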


In addition to the previous setting, assume that there are $e\in\mathbb{N}\cup\{0\}$ options
($e=0$ corresponding to the case without options)
i.e.~Borel functions $g^1,\dots,g^e\colon\Omega\to\mathbb{R}$, 
available at time $t=0$ for price zero.
The outcome of a semistatic trading strategy $(\vartheta,\alpha)\in\Theta\times\mathbb{R}^e$ 
equals $(\vartheta\cdot S)_0^T+ \alpha g$, where $\alpha g:=\sum_{i=1}^e \alpha_ig^i$ again denotes the inner product.
In addition to the already imposed no arbitrage condition, assume that 
$(\vartheta\cdot S)_0^T+ \alpha g\geq 0$ $\mathcal{P}$-q.s.~implies
$(\vartheta\cdot S)_0^T+ \alpha g= 0$ $\mathcal{P}$-q.s.~for every strategy $(\vartheta,\alpha)\in\Theta\times\mathbb{R}^e$.

\begin{theorem}[With options]
\label{thm:main.options}
	Fix a Borel function $Z\colon\Omega\to[0,+\infty)$ such that $|g^i|\leq Z$ for $1\leq i\leq e$
	and let $X\colon\Omega\to\mathbb{R}$ be an upper semianalytic function satisfying $|X|\leq Z$.
	Then it holds
	\begin{align*}
	\inf_{(\vartheta,\alpha)\in\Theta\times\mathbb{R}^e}\sup_{P\in\mathcal{P}}
	\log E_P\big[\exp\big(X + (\vartheta\cdot S)_0^T + \alpha 	g\big)\big]
	=\sup_{Q\in\mathcal{M}_g}	\big(E_Q[X] -H(Q,\mathcal{P}) \big),
	\end{align*}
	where $\mathcal{M}_g$ denotes the set of all $Q\in\mathcal{M}$ 
	with $E_Q[Z]<+\infty$ and $E_Q[g^i]=0$ for $1\leq i\leq e$.
	Moreover, the infimum over $(\vartheta,\alpha)\in\Theta\times\mathbb{R}^e$ is attained.
\end{theorem}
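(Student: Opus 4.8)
The plan is to derive Theorem \ref{thm:main.options} from Theorem \ref{thm:main} by absorbing the options into an enlarged market, and then controlling the extra (finite-dimensional) direction $\alpha\in\mathbb{R}^e$ by hand. Concretely, I would first fix $\alpha\in\mathbb{R}^e$ and apply Theorem \ref{thm:main} to the upper semianalytic endowment $X+\alpha g$ (note $X+\alpha g$ is bounded in absolute value by $(1+|\alpha|_1)Z$, so it is a legitimate endowment for the options-free theorem, and in particular upper semianalytic since $g$ is Borel). This yields, for each fixed $\alpha$,
\[
\inf_{\vartheta\in\Theta}\sup_{P\in\mathcal{P}}\log E_P\big[\exp\big(X+\alpha g+(\vartheta\cdot S)_0^T\big)\big]
=\sup_{Q\in\mathcal{M}}\big(E_Q[X+\alpha g]-H(Q,\mathcal{P})\big),
\]
with the infimum over $\vartheta$ attained by some $\vartheta^\alpha$. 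Taking the infimum over $\alpha\in\mathbb{R}^e$ on both sides, the left-hand side becomes exactly the primal value in Theorem \ref{thm:main.options}. For the right-hand side I would use a minimax argument: $\inf_\alpha\sup_{Q\in\mathcal{M}}(E_Q[X]-H(Q,\mathcal{P})+\alpha E_Q[g])$ should equal $\sup_{Q\in\mathcal{M}_g}(E_Q[X]-H(Q,\mathcal{P}))$. The inequality ``$\geq$'' is immediate since for $Q\in\mathcal{M}_g$ the term $\alpha E_Q[g]$ vanishes; the reverse inequality is where the no-arbitrage condition on the semi-static strategies enters, together with a minimax theorem, to show that the extra direction $\alpha$ ``prices out'' precisely those $Q$ with $E_Q[g]\neq 0$ and that one may restrict to $E_Q[Z]<+\infty$.

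The main technical obstacle, and the reason the statement is phrased with a bounding function $Z$, is the interchange of $\inf_\alpha$ and $\sup_Q$ and the passage to the limit in $\alpha$. I would proceed as follows. First, for the ``$\leq$'' direction of the dual identity: since the primal value is finite-valued (being bounded below by $E_Q[X]-H(Q,\mathcal{P})$ for any fixed $Q\in\mathcal{M}_g$, which is nonempty and gives a finite number because $|X|\leq Z$ and $E_Q[Z]<\infty$, and bounded above by the $\alpha=0$ primal which is finite by Theorem \ref{thm:main}), the function $\alpha\mapsto v(\alpha):=\sup_{Q\in\mathcal{M}}(E_Q[X+\alpha g]-H(Q,\mathcal{P}))$ is convex, finite on all of $\mathbb{R}^e$, and lower semicontinuous, hence continuous. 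I would then invoke a minimax theorem (e.g. Sion's) applied to the bilinear-in-$\alpha$, concave-in-$Q$ map on $\mathbb{R}^e\times\mathcal{M}$ — but $\mathcal{M}$ is not compact, so rather than a direct minimax I would argue via the convex conjugate: $\inf_\alpha v(\alpha)=\inf_\alpha\sup_Q(\cdots)$, and the inf over $\alpha$ of a convex function equals $-v^*(0)$ where $v^*$ is the Legendre transform; identifying $v^*$ with a functional supported on $\{Q:E_Q[g]=0\}$ gives the claim. This is the step I expect to require the most care, because one must ensure the relevant supremum is still attained in the subset $\mathcal{M}_g$ and that the integrability bookkeeping ($E_Q[Z]<+\infty$) is consistent throughout.

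For the attainment of the optimal semi-static strategy, I would argue that the infimum over $\alpha\in\mathbb{R}^e$ is attained: the map $\alpha\mapsto\inf_\vartheta\sup_P\log E_P[\exp(X+\alpha g+(\vartheta\cdot S)_0^T)]$ is convex (as an infimum over $\vartheta$ of convex functions in $(\vartheta,\alpha)$, using convexity of $\log E_P[\exp(\cdot)]$ and then the sup over $P$) and, crucially, coercive in $\alpha$: if $|\alpha^n|\to\infty$ along a sequence with $\alpha^n/|\alpha^n|\to\bar\alpha\neq 0$, then the strengthened no-arbitrage assumption on semi-static strategies forces $\bar\alpha g$ to take strictly positive values on a $\mathcal{P}$-nonnegligible set not compensable by any $\vartheta$, so the value blows up. Hence the infimum is over a compact set and is attained at some $\alpha^*$; combined with the minimizer $\vartheta^{\alpha^*}$ from Theorem \ref{thm:main} applied to $X+\alpha^* g$, the pair $(\vartheta^{\alpha^*},\alpha^*)$ is optimal. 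The delicate point in the coercivity argument is that the ``no free lunch'' in the $\alpha$-direction has to be quantitative enough to dominate the exponential moment, which is where having $X$ (and $g$) dominated by $Z$ with $\mathcal{M}_g$ nonempty is used once more, so that finite values are actually achieved and the coercivity is not vacuous.
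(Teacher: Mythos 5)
Your overall strategy is in fact the same as the paper's in spirit (the paper freezes the static position and applies the already-established duality, then performs a finite-dimensional min--max in the static variable; it does this one option at a time by induction on $e$, while you do all $e$ at once), but two steps of your argument do not work as stated. The concrete failure is the coercivity claim: the strengthened no-arbitrage condition for semi-static strategies does \emph{not} imply that the value blows up along every direction $\bar\alpha\neq 0$. For instance, if $g^1=(\vartheta^0\cdot S)_0^T$ $\mathcal{P}$-q.s.\ for some $\vartheta^0\in\Theta$, the semi-static no-arbitrage condition can perfectly well hold, yet the primal value is \emph{constant} in $\alpha_1$ (absorb $\alpha_1 g^1$ into the dynamic strategy), so there is no blow-up and no compact minimizing set; your assertion that NA ``forces $\bar\alpha g$ to take strictly positive values on a non-negligible set not compensable by any $\vartheta$'' conflates absence of arbitrage with non-replicability. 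The correct device, used in the paper and already in the one-period proof of Theorem \ref{thm:1peroiod}, is to split the static variable along $\Gamma:=\mathop{\mathrm{lin}}\{E_Q[g]:Q\in\mathcal{M}_{\hat g}\}$ and $\Gamma^\perp$: on $\Gamma^\perp$ the dual objective does not depend on $\alpha$ and these directions are harmless, while on $\Gamma$ one obtains the quantitative coercivity \eqref{eq:for.existence.minimizer} from measures $Q_i^\pm$ with $E_{Q_i^\pm}[g]=\pm a_i^\pm e_i$.

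The existence of such measures, i.e.\ the relative-interior condition $0\in\mathop{\mathrm{ri}}\{E_Q[g^e]:Q\in\mathcal{M}_{\hat g}(Y)\}$ of \eqref{eq:superhedge.0.in.ri}, is the real engine behind both the interchange $\inf_\alpha\sup_Q=\sup_Q\inf_\alpha$ and the attainment of $\alpha^\ast$ (via lower semicontinuity of $\sup_Q J(Q,\cdot)$ on a bounded set), and it is not free: the paper obtains it by rerunning the Bouchard--Nutz superhedging arguments with martingale measures of \emph{finite robust entropy}, i.e.\ the chain \eqref{eq:superhedge.entropy}--\eqref{eq:superhedge.options.entropy}, which in turn uses Lemma \ref{lem:fundamental.lem} and the dynamic-programming scheme of Theorem \ref{thm:multiperiod} once more. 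Your Legendre-transform remark, $\inf_\alpha v(\alpha)=-v^\ast(0)$, is only a reformulation: identifying $v^\ast$ with a functional ``supported on $\{Q:E_Q[g]=0\}$'' \emph{is} the minimax statement to be proved and does not follow from convexity and finiteness of $v$ alone; likewise the restriction of the dual to $E_Q[Z]<+\infty$ requires the replacement $\mathcal{M}\rightsquigarrow\mathcal{M}(Z)$ of Remark \ref{rem:main.assumptions} 3) to be carried through the interchange. Once these two ingredients are supplied -- replace coercivity by the $\Gamma$/$\Gamma^\perp$ decomposition together with \eqref{eq:superhedge.0.in.ri}, and prove \eqref{eq:superhedge.0.in.ri} as in Section \ref{sec:multi.period} -- your all-at-once scheme does close, and is then essentially a repackaging of the paper's induction over the options.
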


\begin{remark}
\label{rem:main.assumptions}
{\rule{0mm}{1mm}\\[-3.25ex]\rule{0mm}{1mm}}
\begin{enumerate}[1)]
	\item 
	The no-arbitrage condition {\rm NA}$(\mathcal{P})$ is essential.
	Indeed, even if both sides in \eqref{eq:optim.problem} do not take the value 
	$-\infty$, the condition {\rm NA}$(\mathcal{P})$ does not need to hold -- 
	nor does equation \eqref{eq:optim.problem}; see Appendix \ref{sec:app.proofs}.
	\item
	If $X$ is allowed to take the value $-\infty$ in Theorem \ref{thm:main}, 
	then neither duality nor the existence of an optimal strategy hold true;
	see Appendix \ref{sec:app.proofs}.
	\item 
	In general, due the supremum over $P\in\mathcal{P}$,
	the minimizer $\vartheta$ in \eqref{eq:optim.problem} is not unique
	and the supremum over $Q$ is not attained.
	\item In Theorem, \ref{thm:main} the set $\mathcal{M}$ can be replaced by
	$\mathcal{M}(Y):=\{Q\in\mathcal{M} : E_Q[Y]<+\infty\}$, where $Y\colon\Omega\to[0,+\infty)$
	is an arbitrary function such that $-Y$ is upper semianalytic.
	The same holds true for Theorem \ref{thm:main.options}, i.e.~one can replace
	$\mathcal{M}_g$ by $\mathcal{M}_g(Y):=\{Q\in\mathcal{M}_g : E_Q[Y]<+\infty\}$.
\end{enumerate}
\end{remark}

Another interesting problem is the study of asymptotic behavior of the optimization problem
in the risk-aversion parameter $\gamma$, see e.g.~\cite{mania2005dynamic}.
Let us give some motivation: Typically the superhedging price 
\[\pi(X):=\inf\{ m\in\mathbb{R} : m+(\vartheta\cdot S)_0^T +\alpha g \geq X\,
\mathcal{P}\text{-q.s.~for some } (\vartheta,\alpha)\in\Theta\times\mathbb{R}^e \}\]
is extremely high.
A natural way of shrinking $\pi$ is to allow $m+(\vartheta\cdot S)_0^T +ug<X$ with positive
probability in a ``controlled'' way, see e.g.~\cite{cheridito2016duality,follmer1999quantile}. 
More precisely, define
\[ \pi_\gamma(X):=\inf\Big\{ m \in\mathbb{R}:
\begin{array}{l}
\sup_{P\in\mathcal{P}}\frac{1}{\gamma}\log 
E_P[\exp(\gamma(X-m-(\vartheta\cdot S)_0^T-\alpha g))]\leq0\\
\text{for some } (\vartheta,\alpha)\in\Theta\times\mathbb{R}^e
\end{array}\Big\} \]
for each risk-aversion parameter $\gamma>0$.
Then $\pi_\gamma(X)\leq\pi(X)$ by definition and since $\exp(\gamma x)/\gamma\to+\infty1_{(0,+\infty]}(x)$
as $\gamma\to+\infty$, an evident question is whether the same holds true for the superhedging prices.

\begin{theorem}[Entropic hedging]
\label{thm:limit.superhedg}
	In the setting of Theorem \ref{thm:main.options}, it holds
	\[ \pi(X)=\lim_{\gamma\to+\infty}\pi_\gamma(X) \]
	and the limit in $\gamma$ is a supremum over $\gamma>0$.
\end{theorem}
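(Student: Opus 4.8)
The plan is to exploit the already-established duality in Theorem \ref{thm:main.options}, applied not to $X$ but to the family $\gamma(X - m)$, together with the well-known scaling behaviour of relative entropy. First I would rewrite $\pi_\gamma(X)$ in dual form. By Theorem \ref{thm:main.options} applied to the endowment $\gamma X$ (which is upper semianalytic and dominated by $\gamma Z$, so the set of calibrated measures with finite moments is again $\mathcal{M}_g$), the condition defining membership of $m$ in the set for $\pi_\gamma$ is
\[
\inf_{(\vartheta,\alpha)}\sup_{P\in\mathcal{P}}\log E_P\big[\exp\big(\gamma(X-m)+(\vartheta\cdot S)_0^T+\alpha g\big)\big]\le 0,
\]
and the left side equals $\sup_{Q\in\mathcal{M}_g}\big(\gamma E_Q[X]-\gamma m-H(Q,\mathcal{P})\big)$. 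Hence
\[
\pi_\gamma(X)=\inf\Big\{m:\ \sup_{Q\in\mathcal{M}_g}\big(\gamma E_Q[X]-\gamma m-H(Q,\mathcal{P})\big)\le 0\Big\}
=\sup_{Q\in\mathcal{M}_g}\Big(E_Q[X]-\tfrac{1}{\gamma}H(Q,\mathcal{P})\Big).
\]
On the other hand, the superhedging duality in this non-dominated discrete-time setting (as in \cite{bouchard2015arbitrage}, and recalling that $H(Q,\mathcal P)<\infty$ forces $Q\ll P$ for some $P\in\mathcal P$, hence $Q$ is absolutely continuous with respect to the polar sets of $\mathcal P$) gives $\pi(X)=\sup_{Q\in\mathcal{M}_g}E_Q[X]$, where one must first check that the relevant set of pricing measures can be taken to be $\mathcal{M}_g$ rather than all calibrated martingale measures absolutely continuous with respect to $\mathcal P$.

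Granting these two representations, the theorem reduces to the elementary statement that
\[
\sup_{\gamma>0}\ \sup_{Q\in\mathcal{M}_g}\Big(E_Q[X]-\tfrac{1}{\gamma}H(Q,\mathcal{P})\Big)=\sup_{Q\in\mathcal{M}_g}E_Q[X].
\]
For fixed $Q\in\mathcal{M}_g$ the map $\gamma\mapsto E_Q[X]-\frac1\gamma H(Q,\mathcal{P})$ is nondecreasing in $\gamma$ (since $H(Q,\mathcal P)\ge 0$) and increases to $E_Q[X]$ as $\gamma\to\infty$; since $E_Q[X]\in\mathbb R$ by $|X|\le Z$ and $E_Q[Z]<\infty$, and $H(Q,\mathcal P)<\infty$ by definition of $\mathcal{M}_g$, this limit is genuinely attained in the supremum over $\gamma$. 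Therefore $\sup_\gamma\sup_Q(\cdots)=\sup_Q\sup_\gamma(\cdots)=\sup_Q E_Q[X]=\pi(X)$, and the monotonicity in $\gamma$ of each inner term, combined with $\pi_\gamma(X)\le\pi(X)$, shows the whole expression is a supremum over $\gamma>0$ as claimed.

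I expect the main obstacle to be the first step — justifying the two dual formulas rigorously rather than the interchange of suprema. Specifically: (i) verifying that Theorem \ref{thm:main.options} may be applied with the rescaled endowment and dominating function $\gamma Z$ and that the infimum-over-$m$/supremum-over-$Q$ manipulation is valid (this needs the finiteness statement ``both terms are not $-\infty$'' and care when $\sup_Q E_Q[X]=+\infty$, i.e.\ when $\pi(X)=+\infty$, in which case one argues directly that $\pi_\gamma(X)\to+\infty$); and (ii) pinning down the superhedging duality $\pi(X)=\sup_{Q\in\mathcal{M}_g}E_Q[X]$ with the martingale-entropy class $\mathcal{M}_g$ — one inequality is immediate from $m+(\vartheta\cdot S)^T_0+\alpha g\ge X$ together with $E_Q$ of both sides, and the reverse requires knowing that measures in $\mathcal{M}_g$ are rich enough to recover the superhedging price, which should follow by a limiting argument from the finite-$\gamma$ representation itself (letting $\gamma\to\infty$ inside, then appealing to the superhedging duality of \cite{bouchard2015arbitrage} to identify the limit). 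Once these representations are in hand, the remainder is the routine monotone interchange above.
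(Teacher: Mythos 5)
Your proposal follows essentially the same route as the paper: apply the duality of Theorem \ref{thm:main.options} to the scaled endowment $\gamma X$ (absorbing $\gamma$ and signs into the vector spaces $\Theta\times\mathbb{R}^e$) to get $\pi_\gamma(X)=\sup_{Q\in\mathcal{M}_g}\big(E_Q[X]-\tfrac{1}{\gamma}H(Q,\mathcal{P})\big)$, then interchange the suprema over $\gamma$ and $Q$ and identify $\sup_{Q\in\mathcal{M}_g}E_Q[X]$ with $\pi(X)$. The one caveat is that this last identification is not obtained by a limiting argument as you tentatively suggest (which would be circular), but is exactly the paper's superhedging duality \eqref{eq:superhedge.options.entropy}, established beforehand by rerunning the Bouchard--Nutz arguments with the entropy-finite martingale measures via Lemma \ref{lem:fundamental.lem} -- which is the fallback you also indicate, so the approaches coincide.
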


\begin{remark}
\label{rem:main.discussion}
	The reason why we require $\mathrm{NA}(\mathcal{P}_t(\omega))$ to hold for every $\omega$
	and not only for $\mathcal{P}$-quasi every $\omega$ as in Bouchard and Nutz
	\cite{bouchard2015arbitrage}, is the following.
	In order to apply the one period results (i.e.~duality and existence of an optimal strategy)
	to the local problem $\mathcal{E}_t(\omega,x)$ 
	(see \eqref{eq:local.optimization} for the precise definition), one needs that 
	$\mathcal{E}_{t+1}(\omega\otimes_t\omega',x)>-\infty$ for $\mathcal{P}_t(\omega)$-quasi all $\omega'\in\Omega_1$, see point 2) in Remark \ref{rem:main.assumptions}.
	However, to ensure the latter, $\mathrm{NA}(\mathcal{P}_{t+1}(\omega\otimes_t\omega'))$ 
	needs to holds for $\mathcal{P}_t(\omega)$-quasi all $\omega'\in\Omega_1$.
	Due to the fact that the set
	$N_{t+1}:=\{ \tilde{\omega}\in\Omega_{t+1} :  \mathrm{NA}(\mathcal{P}_t(\tilde{\omega}))\text{ fails}\}$ 
	is merely universally measurable, 
	it is not clear that this condition holds true for ``sufficiently many'' $\omega\in\Omega_t$.

	In case of only one measure 
	(i.e.~$\mathcal{P}=\{P=P_0\otimes\cdots\otimes P_{T-1}\}$ is a singleton), 
	this problem has an easy solution: Since 
	$P(N_t)=0$ for every $t$, one can redefine $P_t$ by 
	$\tilde{P}_t:=P_t1_{N_t^c}+\delta_{S_t} 1_{N_t}$. Then
	$P=\tilde{P}_0\otimes\cdots\otimes \tilde{P}_{T-1}$
	and $\tilde{\mathcal{P}}_t(\omega):=\{\tilde{P}_t(\omega)\}$ has analytic graph
	(a proof is given in Appendix \ref{sec:app.proofs}).
	The same can be done in the setting of general $\mathcal{P}$,
	as long as one requires the sets $N_t$ to be Borel; otherwise the graph of 
	$\tilde{\mathcal{P}}_t:=\mathcal{P}_t1_{N_t^c}+\{\delta_{S_t}\}1_{N_t}$
	needs not to be analytic.

	Finally note that $\mathcal{P}$ is defined though the sets $\mathcal{P}_t$ 
	(and not the other way around), which means that the assumption 
	for $\mathrm{NA}(\mathcal{P}_t(\omega))$ to hold for every $\omega$ 
	does not seem restrictive regarding applications; see also Section \ref{sec:examples}.

	Recently, using a different approach, \cite{deng2018utility} were able to get rid of this assumption
\end{remark}		

\begin{remark}
	The (technical but crucial) assumption that the graph of $\mathcal{P}_t$ is analytic has two consequences: 
	It allows for measurable selection arguments and enables to define pointwise
	conditional sublinear expectations, i.e.~ensure that 
	\begin{align}
	\label{eq:sublin.expectation}
	\mathcal{E}(X|\Omega_t)(\omega):=\sup_{P\in\mathcal{P}_t(\omega)} E_P[X(\omega,\cdot)]
	\quad\text{for } \omega\in\Omega_t\text{ and } X\colon\Omega_{t+1}\to\mathbb{R}
	\end{align}
	is upper semianalytic as a mapping of $\omega$ whenever $X$ is
	\cite{bouchard2015arbitrage,nutz2013constructing}.
	The converse holds true as well: Given an arbitrary sublinear conditional expectation
	$\mathcal{E}(\cdot|\Omega_t)$ (satisfying some continuity), there always exists 
	a set-valued mapping $\mathcal{P}_t$
	with analytic graph	such that \eqref{eq:sublin.expectation}	holds true
	\cite[Theorem 1.1]{bartl2016conditional}.
	Similarly, the ``time-consistency'' \eqref{eq:P.time.consistent} of $\mathcal{P}$ is 
	equivalent to the tower-property \cite[Theorem 1.2]{bartl2016conditional}. 
\end{remark}

\subsection{Examples}
\label{sec:examples}
In this section, we discuss a general method of robustifying a given probability 
and also give applications to financial models.
All nontrivial claims are proven at the beginning of Appendix \ref{sec:app.proofs}.

In many cases, the physical measure is not known a priori, but rather a result of collecting data
and estimation. In particular, the estimator is not equal to, but only ``converges'' (as the data grow richer)
to the actual unknown physical measure. A canonical way of taking this into account therefore consists of adding 
some sort of ``neighborhood'' to the estimator $P^\ast=P^\ast_0\otimes\cdots\otimes P_{T-1}^\ast$,
i.e.~to define 
\begin{align}
\label{eq:P.robust.general}
\mathcal{P}_t(\omega):=\{ P\in\mathfrak{P}(\Omega_1) : 
\mathop{\mathrm{dist}}(P,P_t^\ast(\omega))\leq\varepsilon_t(\omega) \}.
\end{align}
Here, as the name suggests,
\[\mathop{\mathrm{dist}}\colon\Omega_1\times\Omega_1\to[0,+\infty] \]
can be thought of a distance and $\varepsilon_t\colon\Omega_t\to[0,+\infty]$
as the size of the neighborhood.
If $\mathop{\mathrm{dist}}$, $\varepsilon_t$ (and $P^\ast_t$) are Borel
-- from now on a standing assumption -- then $\mathcal{P}_t$ 
has analytic graph.
If $\mathop{\mathrm{dist}}$ is in fact a metric
or at least fulfills $\mathop{\mathrm{dist}}(P,P)=0$, the values of $\mathcal{P}_t$ are also nonempty.
Since the distance should be compatible with estimation, 
natural choices include the Wasserstein distances of order $p$ or, more generally, 
the cost of transportation i.e.~
\[ \mathop{\mathrm{dist}}(Q,P):=\inf\Big\{ \int_{\Omega_1\times\Omega_1} c(x,y)\Pi(dx,dy) : \Pi \Big\} \]
where the infimum is taken over all measures on the product $\Pi\in\mathfrak{P}(\Omega_1\times\Omega_1)$
with $\Pi(\cdot\times\Omega_1)=Q$  and $\Pi(\Omega_1\times\cdot)=P$, and
$c\colon\Omega_1\times\Omega_1\to[0,+\infty]$ is a given lower semicontinuous function (the ``cost'').
This includes the Wasserstein distance of order $p$; then the cost $c$ equals a metric
on $\Omega_1$ to the power $p$; see e.g.~Chapter 5 and 6 in \cite{villani2008optimal}.
This traceable distance has many advantages, e.g.~that besides metrizing weak convergence, 
it controls the integrability of the tails. 
In this case, $\mathcal{P}_t$ has convex values.
Moreover, \cite{bartl2018computational} provide a finite dimensional formula how to compute the worst case expectation over all probabilities in a Wasserstein neighborhood of a baseline distribution.

The above method can also be applied when a certain model for the dynamics of the underlying is fixed
and only the parameters are uncertain. For simplicity assume 
that $\Omega=\mathbb{R}^T$, $S_t(\omega)=\omega_t$
is the canonical space of a one-dimensional stock.
We illustrate in two concrete examples: the Binomial model, which reads as
\begin{align}
\label{eq:dyn.S.binomal}
S_{t+1}(\omega,\cdot)=S_t(\omega) +B(\cdot)
\end{align}
for every $t$ and $\omega\in\Omega_t$ where $B\colon\Omega_1\to\mathbb{R}$ is binomially distributed,
and a discrete version of the Black-Scholes model, which reads as
\begin{align}
\label{eq:dyn.S.black.scholes}
S_{t+1}(\omega,\cdot)=S_t(\omega)\big(\mu \Delta t + \sigma\Delta W(\cdot)\big)
\end{align}
where $\mu\in\mathbb{R}$, $\sigma,\Delta t>0$, and $\Delta W\colon\Omega_1\to\mathbb{R}$ is normally distributed
with mean 0 and variance $\Delta t$; we write $\Delta W\sim N(0,\Delta t)$.
Defining $f_t(\omega,x):=S_t(\omega)+x$, $X:=B$ in case of the Binomial, and 
$f_t(\omega,x):=S_t(\omega)x$, $X:=\mu\Delta t+ \sigma \Delta W$ in case of the Black-Scholes model, 
it follows that both can be written in the more general form
\begin{align}
\label{eq:dyn.S.general}
S_{t+1}(\omega,\cdot)=f_t(\omega, X(\cdot)),
\end{align}
where $f_t\colon\Omega_t\times\mathbb{R}\to\mathbb{R}$ and $X\colon\Omega_1\to\mathbb{R}$ are Borel.
In terms of distributions, \eqref{eq:dyn.S.general} means nothing but
\[ \mathop{\mathrm{law}} S_{t+1}(\omega,\cdot) = R \circ f(\omega,\cdot)^{-1},
\quad \text{where } R:=\mathop{\mathrm{law}} X. \]
Therefore, a canonical way of robustifying a given model of the form 
\eqref{eq:dyn.S.general} is to replace $R$ in the equation above by a set
$\mathcal{R}_t(\omega)\subset\mathfrak{P}(\Omega_1)$, and to define
\[\hat{\mathcal{P}}_t(\omega)
:=\{\mathop{\mathrm{law}} S_{t+1}(\omega,\cdot) = R \circ f(\omega,\cdot)^{-1}: 
 R\in\mathcal{R}_t(\omega)\}.\]
For example, in line with the first part of this section, one can take some neighborhood
\begin{align}
\label{eq:Rt.dist} 
 \mathcal{R}_t(\omega)=\{ R\in\mathfrak{P}(\Omega_1) : 
\mathop{\mathrm{dist}}(R,\mathop{\mathrm{law}} X)\leq\varepsilon_t(\omega) \},
\end{align}
or, if there are even less data, one might argue that 
\begin{align}
\label{eq:Rt.phitn} 
\mathcal{R}_t(\omega):=\{ R\in\mathfrak{P}(\Omega_1) : E_R[\phi_t^i(\omega, \cdot)]\leq 1
\text{ for } 1\leq i\leq n \}
\end{align}
for some given Borel functions $\phi_t^1,\dots,\phi_t^n\colon \Omega_t\times\mathbb{R}^d\to\mathbb{R}$
is a good choice.
Here, if $\inf_{x} \phi_t^i(\omega,x)\leq 0$ for all $i$ and 
$f_t$ in \eqref{eq:dyn.S.general} is such that 
$S_t(\omega)$ lies in the relative interior of $f_t(\omega,\mathbb{R}^d)$ 
for every $\omega\in\Omega_t$ -- an assumption which is usually fulfilled --
then the resulting model of \eqref{eq:Rt.phitn} satisfies NA$(\mathcal{P}_t(\omega))$ for every $t$ and $\omega$.
The same holds true for $\mathcal{R}_t$ defined by \eqref{eq:Rt.dist} under the mentioned
assumption on $f_t$ if e.g.~$\mathop{\mathrm{dist}}$ is the Wasserstein distance of order $p$
and $X$ has a finite $p$-th moment.

On a technical level, $\mathcal{R}_t$ defined by \eqref{eq:Rt.phitn} has analytic graph
and so do $\hat{\mathcal{P}}_t$ and $\mathcal{P}_t$, the latter begin defined as
$\mathcal{P}_t(\omega):=\mathop{\mathrm{conv}} \hat{\mathcal{P}}_t(\omega)$
the convex hull of $\hat{\mathcal{P}}_t$.
The same holds true for $\mathcal{R}_t$ defined by \eqref{eq:Rt.dist}.

\begin{example}[Binomial model]
	Besides what was mentioned above, another natural generalization of the Binomal model 
	is to allow for the jump size and probability 
	to take values in some intervals (which may depend on the time $t$ and past $\omega\in\Omega_t$).
	This corresponds to 
	\[ \mathcal{R}_t(\omega):=\big\{ p\delta_a +(1-p)\delta_b : p\in[\underline{p}_t(\omega),\overline{p}_t(\omega)],
	a\in[\underline{a}_t(\omega), \overline{a}_t(\omega)], b\in[\underline{b}_t(\omega),\overline{b}_t(\omega)] 
	\big\}\]
	where $0<\underline{p}_t\leq \overline{p}_t<1$, 
	$\underline{a}_t\leq \overline{a}_t<0<\underline{b}_t\leq \overline{b}_t$ are Borel functions.
	Here $\delta_a$ denotes the Dirac measure at point $a$.
	Note that {\rm NA}$(\mathcal{P}_t(\omega))$ is trivially satisfied
	for every $t$ and $\omega$.
\end{example}

Regarding the Black-Scholes model in continuous time, there is a popular and well-studied 
way of robustification, see e.g.~\cite{peng2007g}:
Consider all models \eqref{eq:dyn.S.black.scholes} with $\mu$ and volatility $\sigma$ in some given intervals.
This can be done as in the previous example, however, then each $\mathcal{P}_t(\omega)$ and, therefore,
also the resulting family $\mathcal{P}$ is dominated (by the Lebesgue measure).
In the present discrete-time setting, it seems more interesting to discard the assumption of 
normality of $\Delta W$ in \eqref{eq:dyn.S.black.scholes}.

\begin{example}[Black-Scholes]
	Fix two Borel functions 
	$\mu_t\colon\Omega_1\to\mathbb{R}$ and $\sigma_t\colon\Omega_1\to(0,+\infty)$,
	and let $\varepsilon_t$ and $\mathop{\mathrm{dist}}$ be as above.
	Now define  
	\[ \mathcal{R}_t(\omega):=\big\{ R\ast \delta_{\mu\Delta t} : 
	\mu\in[\underline{\mu_t}(\omega),\overline{\mu}_t(\omega)]
	\text{ and } \mathop{\mathrm{dist}}(R,N(0,\sigma_t^2(\omega)\Delta t))\leq\varepsilon_t(\omega)  \big\},\]
	where $R\ast \delta_{\mu\Delta t}$ denotes the convolution $R\ast \delta_{\mu\Delta t}(A):=R(A-\mu\Delta t)$.
	The set $\hat{\mathcal{P}}_t$ therefore corresponds to the Black-Scholes model with drift and 
	volatility uncertainty in the sense that one considers all models
	\[ S_{t+1}(\omega,\cdot)=S_t(\omega)\big(\mu \Delta t + Y\big),
	\quad 
	\begin{array}{l}
	\text{$\mu\in[\underline{\mu_t}(\omega),\overline{\mu}_t(\omega)]$ and the law of}\\
	\text{$Y$ is $\varepsilon_t(\omega)$ close to $N(0,\sigma_t^2(\omega)\Delta t)$}
	\end{array}\]
	simultaneously.
	To be more in line with the original model, one can also require that $R$ (resp.~$Y$) has mean 0
	in the definition of $\mathcal{R}_t$.
	Note that for any reasonable choice for the distance (e.g.~Wasserstein),
	the set $\mathcal{P}_t(\omega)$ satisfies all of our assumptions.
\end{example}

\section{Proof for the one period setting}
\label{sec:one.period}

Let $(\Omega,\mathcal{F})$ be a measurable space armed with a family of probability 
measures $\mathcal{P}\subset\mathfrak{P}(\Omega)$.
Further let $S_0\in\mathbb{R}^d$ and $S_1\colon\Omega\to\mathbb{R}^d$ be measurable and write $\Delta S :=S_1-S_0$.
We write $h\in\Theta=\mathbb{R}^d$ for trading strategies and assume the no-arbitrage
NA$(\mathcal{P})$, i.e.~$h\Delta S\geq 0$ $\mathcal{P}$-q.s.~implies 
$h\Delta S= 0$ $\mathcal{P}$-q.s.~for every $h\in\mathbb{R}^d$.
Given some random variable $Z\colon\Omega\to[0,+\infty)$, denote by
\[\mathcal{M}(Z)
=\{ Q\in\mathfrak{P}(\Omega) : E_Q[|\Delta S|+Z]+H(Q,\mathcal{P})<+\infty 
\text{ and } E_Q[\Delta S]=0\}\]
the set of martingale measures that have finite entropy and integrate $Z$.
The following is the main result of this section.

\begin{theorem}
\label{thm:1peroiod}
	Fix a random variable $X\colon\Omega\to(-\infty,+\infty]$.
	Then one has
	\begin{align}
	\label{eq:problem.1period}	
	\inf_{h\in\mathbb{R}^d}\sup_{P\in\mathcal{P}} \log E_P\big[\exp(X+h\Delta S)\big]
	 =\sup_{Q\in\mathcal{M}(Y)} \big( E_Q[X]-H(Q,\mathcal{P}) \big)
	\end{align}
	for every random variable 	$Y\colon\Omega\to[0,+\infty)$
	and both terms are not equal to $-\infty$.
	Moreover, the infimum over $h\in\mathbb{R}^d$ is attained.
\end{theorem}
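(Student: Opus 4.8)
The plan is to prove the two inequalities separately: ``$\geq$'' is soft, and ``$\leq$'' is reduced to a finite-dimensional convex-duality statement whose one delicate ingredient encodes the no-arbitrage condition. For ``$\geq$'', fix $h\in\mathbb{R}^d$, $P\in\mathcal{P}$ and $Q\in\mathcal{M}(Y)$ with $Q\ll P$; the Gibbs variational inequality gives $\log E_P[\exp(X+h\Delta S)]\geq E_Q[X+h\Delta S]-H(Q,P)=E_Q[X]-H(Q,P)$, where the last equality uses $E_Q[\Delta S]=0$ and $E_Q[|\Delta S|]<\infty$. Taking $\sup_{P\in\mathcal P}$, then $\inf_h$, then $\sup_{Q\in\mathcal M(Y)}$ gives ``$\geq$''. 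Since $X>-\infty$, $E_{P_0}[\exp(X+h\Delta S)]>0$ for any fixed $P_0\in\mathcal{P}$, so the left-hand side is $>-\infty$; that the right-hand side is $>-\infty$ will come out of the proof of ``$\leq$''.

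For ``$\leq$'' I would first reduce to bounded $X$ via the truncations $(X\wedge k)\vee(-l)$ (both sides are monotone in $X$; let $k,l\to\infty$ using monotone convergence on the right and the compactness established below to transport the minimiser through the limit on the left), and replace $\mathbb{R}^d$ by its quotient by $N:=\{h:h\Delta S=0\ \mathcal{P}\text{-q.s.}\}$, which changes neither side. With $X$ bounded, put $\phi_P(h):=\log E_P[\exp(X+h\Delta S)]$ and $\phi:=\sup_{P\in\mathcal{P}}\phi_P$. Each $\phi_P$ is proper (as $\phi_P(0)\le\|X\|_\infty$), convex and lower semicontinuous (Fatou), hence so is $\phi$, and $\phi$ is coercive on $\mathbb{R}^d/N$: if $|h_n|\to\infty$, $h_n/|h_n|\to u\neq0$ and $\sup_n\phi(h_n)<\infty$, then $\exp(X+h_n\Delta S)\to+\infty$ on $\{u\Delta S>0\}$, so Fatou forces $P(u\Delta S>0)=0$ for every $P$, i.e.\ $(-u)\Delta S\geq0$ $\mathcal{P}$-q.s.; {\rm NA}$(\mathcal{P})$ then yields $u\Delta S=0$ $\mathcal{P}$-q.s., i.e.\ $u\in N$, a contradiction. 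Hence $\inf_h\phi$ is attained — this is the ``moreover'' — and $\inf_h\phi(h)=-\phi^*(0)$.

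Next I would compute $\phi^*$. By the standard identity for conjugates of suprema, $\phi^*=\operatorname{cl}\operatorname{conv}\bigl(\inf_{P\in\mathcal P}\phi_P^*\bigr)$. After an $e^X$-change of measure, the classical Legendre/Cramér duality between the log-moment-generating function of $\Delta S$ and relative entropy gives $\phi_P^*(y)=\inf\{H(Q,P)-E_Q[X]:Q\ll P,\ E_Q[\Delta S]=y\}$, so that $\inf_{P}\phi_P^*$ equals $\psi(y):=\inf\{H(Q,\mathcal{P})-E_Q[X]:E_Q[\Delta S]=y\}$. The key structural observation is that $\psi$ is already a \emph{proper convex} function of $y$: it is the partial minimisation, along the affine constraint $E_Q[\Delta S]=y$, of $(Q,y)\mapsto H(Q,\mathcal{P})-E_Q[X]$, which is convex because $H$ is jointly convex and $\mathcal{P}$ is convex (whence $H(\cdot,\mathcal{P})=\inf_{P\in\mathcal{P}}H(\cdot,P)$ is convex). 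Consequently $\operatorname{conv}(\inf_P\phi_P^*)=\psi$ and $\phi^*(0)=(\operatorname{cl}\psi)(0)$, so everything reduces to the single claim that $\psi$ is lower semicontinuous at the origin.

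This last claim is where {\rm NA}$(\mathcal{P})$ is used geometrically, and it is the step I expect to be the main obstacle, since it must encode no-arbitrage quantitatively in the non-dominated setting. After quotienting by $N$, {\rm NA}$(\mathcal{P})$ is equivalent to $0$ lying in the relative interior of $D$, the closed convex hull of the $\mathcal{P}$-quasi-sure range of $\Delta S$; and every point of $\operatorname{ri}(D)$ is the barycenter of a finitely supported measure of finite robust entropy (Carathéodory together with conditioning a suitable member of $\mathcal{P}$ on small balls, and joint convexity of $H$ plus convexity of $\mathcal{P}$ to bound the entropy of the resulting mixture). Hence $0\in\operatorname{ri}(\operatorname{dom}\psi)$, and a proper convex function is finite and continuous — in particular lower semicontinuous — on the relative interior of its domain. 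Therefore $\inf_h\sup_P(\cdots)=-\psi(0)=\sup\{E_Q[X]-H(Q,\mathcal{P}):E_Q[\Delta S]=0\}\in\mathbb{R}$, settling also the finiteness assertion. It then remains to pass from the supremum over all finite-entropy martingale measures to the supremum over $\mathcal{M}(Y)$: given a near-optimal martingale measure $Q$, one truncates it on $\{Y\vee|\Delta S|>n\}$ and re-centres using a fixed element of $\mathcal{M}(Y)$ (which exists because $\psi(0)<\infty$) to obtain an $\varepsilon$-optimal $\widetilde Q\in\mathcal{M}(Y)$. This perturbation, and the interchange-of-limits bookkeeping in the reduction to bounded $X$, are routine but are where a little extra care is still needed.
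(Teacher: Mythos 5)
Your architecture for ``$\leq$'' (biconjugation of $\phi=\sup_P\phi_P$, the Cram\'er/Donsker--Varadhan identification of $\phi_P^*$, convexity of $H(\cdot,\mathcal{P})$ from joint convexity of $H$ and convexity of $\mathcal{P}$, and lower semicontinuity of $\psi$ at $0$ via $0\in\operatorname{ri}(\operatorname{dom}\psi)$) is a legitimate alternative to the paper's route, which instead rewrites the inner supremum as an entropic dual (Lemma \ref{lem:rep.exponential}) and applies a Sion-type minimax theorem; both arguments ultimately rest on the same two ingredients, namely Lemma \ref{lem:fundamental.lem} ($0$ in the relative interior of the reachable barycenters, encoding {\rm NA}$(\mathcal{P})$) and Lemma \ref{lem:convex.H.C}, and your Fatou/coercivity argument for attainment is essentially the paper's step (b). However, there is a genuine gap at the very first reduction: you truncate $X$ from \emph{both} sides and propose to remove the lower truncation ``by monotone convergence on the right and compactness on the left''. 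Removing the lower truncation is a \emph{decreasing} limit, and decreasing limits do not commute with the suprema over $P$ (primal side) or over $Q$ (dual side); worse, the convergence can fail already pointwise in $(h,P)$: if $E_P[\exp(h\Delta S)]=+\infty$ and $X=-h\Delta S$ (bounded above after the $\wedge k$ truncation but unbounded below), then $E_P[\exp((X\wedge k)\vee(-l)+h\Delta S)]\geq e^{-l}E_P[\exp(h\Delta S)\,1_{\{h\Delta S>l\}}]=+\infty$ for every $l$, while $E_P[\exp(X\wedge k+h\Delta S)]<+\infty$. So ``continuity from above'' of the robust value is exactly what you would have to prove, and it is not routine bookkeeping; note also that your $\psi$ involves $H(Q,\mathcal{P})-E_Q[X]$, which for $X$ unbounded below may be $\infty-\infty$, so the conjugacy computation itself needs the integrability bookkeeping. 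The paper sidesteps all of this by proving the duality \emph{directly} for $X$ bounded above but unbounded below (the set $\mathcal{C}$ in Lemma \ref{lem:rep.exponential} only asks for $Q$-integrability of $|X|$, obtained by conditioning $Q(\cdot\,|\,B_n)$), so that only increasing limits -- which do commute with suprema -- are ever used (Lemma \ref{lem:1period.cont.below} and step (c)).

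A smaller point: in the last step you pass from all finite-entropy martingale measures to $\mathcal{M}(Y)$ by conditioning on $\{Y\vee|\Delta S|\leq n\}$ and ``re-centring using a fixed element of $\mathcal{M}(Y)$''. Conditioning destroys the martingale property, and mixing with a single measure of barycenter $0$ does not restore it: you need measures with barycenters $\pm a_ie_i$ spanning a neighbourhood of $0$ in the relevant subspace (exactly what Lemma \ref{lem:fundamental.lem} provides, and what the paper uses as the $Q_i^\pm$) and mixture weights tending to $0$ together with the conditioning error. This is fixable with the tools you already have, but as written the re-centring does not produce a martingale measure. The main item that must be repaired, though, is the unbounded-below case: either redo your conjugacy argument directly for $X$ bounded above only (handling the $\infty-\infty$ issues as in the paper's conditioning argument), or supply a genuine proof of continuity from above, which I do not expect to hold in the generality needed.
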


The following lemma, which turns out to be be helpful in the multiperiod case,
is shown in the course of the proof of Theorem \ref{thm:1peroiod}. 

\begin{lemma}
\label{lem:1period.cont.below}
	Let $X_n\colon\Omega\to(-\infty,+\infty]$ be a sequence of random variables
	increasing point-wise to $X$. Then it holds
	\[\sup_n \inf_{h\in\mathbb{R}^d}\sup_{P\in\mathcal{P}} \log E_P\big[\exp(X_n+h\Delta S)\big]
	=\inf_{h\in\mathbb{R}^d}\sup_{P\in\mathcal{P}} \log E_P\big[\exp(X+h\Delta S)\big],\]
	i.e.~the optimization problem is continuous from below.
\end{lemma}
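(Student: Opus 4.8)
The inequality "$\leq$" is immediate: since $X_n \leq X$ pointwise, monotonicity of $E_P[\exp(\cdot)]$ and of $\log$ gives $\log E_P[\exp(X_n + h\Delta S)] \leq \log E_P[\exp(X + h\Delta S)]$ for every $P$ and $h$, hence the same inequality survives taking $\sup_P$, then $\inf_h$, then $\sup_n$. So the whole content is the reverse inequality "$\geq$", and here I would play the duality of Theorem \ref{thm:1peroiod} against itself. Write $v(W) := \inf_{h}\sup_P \log E_P[\exp(W + h\Delta S)]$ for a random variable $W$. By Theorem \ref{thm:1peroiod} applied with the auxiliary weight $Y \equiv 0$ (or any fixed $Y$; I would keep it fixed throughout), we have $v(X) = \sup_{Q\in\mathcal{M}(Y)}(E_Q[X] - H(Q,\mathcal{P}))$ and $v(X_n) = \sup_{Q\in\mathcal{M}(Y)}(E_Q[X_n] - H(Q,\mathcal{P}))$, the dual set $\mathcal{M}(Y)$ being the \emph{same} for all of them because it depends only on $S$, $\mathcal{P}$, $Y$, not on the endowment. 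Thus it suffices to show
\[
\sup_{Q\in\mathcal{M}(Y)}\big(E_Q[X] - H(Q,\mathcal{P})\big) \;\leq\; \sup_n \sup_{Q\in\mathcal{M}(Y)}\big(E_Q[X_n] - H(Q,\mathcal{P})\big).
\]
Fix $Q\in\mathcal{M}(Y)$ with $E_Q[X] - H(Q,\mathcal{P})$ finite (if it is $-\infty$ there is nothing to prove, and by Theorem \ref{thm:1peroiod} the left side is not $-\infty$ so such $Q$ gets us arbitrarily close). Since $X_n \uparrow X$ and $Q$ is a genuine probability measure, the question is whether $E_Q[X_n] \to E_Q[X]$. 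This is exactly monotone convergence applied to $X - X_n \downarrow 0$, \emph{provided} $X$ is $Q$-integrable — or, if $E_Q[X] = +\infty$, then $E_Q[X_n] \uparrow +\infty$ as well by monotone convergence for the positive parts after subtracting a fixed integrable minorant. The one subtlety is the negative part: $X_n$ and $X$ take values in $(-\infty,+\infty]$, and to invoke monotone convergence cleanly I would note that $X_1^- \geq X_n^- \geq X^-$ pointwise, so all negative parts are dominated by $X_1^-$; and $E_Q[X_1^-] < \infty$ is forced by finiteness of $E_Q[X_1] - H(Q,\mathcal{P})$ together with $Q\in\mathcal{M}(Y)$ — actually I should be slightly careful and instead use that $E_Q[X^-] < \infty$ follows from $E_Q[X] > -\infty$, which holds because $E_Q[X] - H(Q,\mathcal{P})$ is finite and $H(Q,\mathcal{P}) \geq 0$. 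With $X^-$ integrable, $(X_n + X^-)_{n}$ is a nonnegative increasing sequence with limit $X + X^-$, monotone convergence gives $E_Q[X_n + X^-] \uparrow E_Q[X + X^-]$, and subtracting the finite constant $E_Q[X^-]$ yields $E_Q[X_n] \uparrow E_Q[X]$ in $(-\infty,+\infty]$. Hence $E_Q[X_n] - H(Q,\mathcal{P}) \uparrow E_Q[X] - H(Q,\mathcal{P})$, so $\sup_n v(X_n) \geq \sup_n (E_Q[X_n] - H(Q,\mathcal{P})) = E_Q[X] - H(Q,\mathcal{P})$; taking the supremum over $Q$ finishes the reverse inequality.

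The main obstacle — and the reason the lemma is not entirely trivial — is the indicated remark that it "is shown in the course of the proof of Theorem \ref{thm:1peroiod}": the cleanest route genuinely needs the duality to have already been established, because on the primal side there is no obvious reason the $\inf_h\sup_P$ expression should be continuous from below (the $\sup_P$ over a non-dominated family can jump, and the optimal $h$ depends on $n$). Routing through the dual representation converts the problem into a statement about a \emph{single} fixed measure $Q$ at a time, where monotone convergence is available and the non-dominatedness of $\mathcal{P}$ plays no role at all. If one instead wanted a self-contained proof not invoking the full Theorem \ref{thm:1peroiod}, one would have to argue directly that a sequence of near-optimal strategies $h_n$ for $X_n$ stays bounded (using NA$(\mathcal{P})$ in the one-period model to rule out escape to infinity) and extract a convergent subsequence — essentially re-running part of the compactness argument behind Theorem \ref{thm:1peroiod} — so it is natural that the lemma is harvested as a by-product of that proof rather than proved independently.
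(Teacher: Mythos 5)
Your reduction to the dual side has two genuine problems. First, it is circular relative to the paper's development: you invoke the full duality $v(X)=\sup_{Q\in\mathcal M(Y)}\big(E_Q[X]-H(Q,\mathcal P)\big)$ for the (unbounded-above) limit $X$, but in the paper this identity for general $X$ is obtained only in step (c) of the proof of Theorem \ref{thm:1peroiod}, and that step is exactly an application of the continuity from below you are asked to prove; step (a) gives duality only for $X$ bounded from above. The paper proves the lemma purely on the primal side, in step (b): pick $h_n$ that are $1/n$-optimal for $X_n$ (possible because step (a) already shows the values are $>-\infty$), reduce to $h_n\in L=\mathop{\mathrm{lin}}\mathop{\mathrm{supp}}_{\mathcal P}\Delta S$ via Lemma \ref{lem:char.space.L}, and then either the $h_n$ are unbounded -- normalizing and using NA$(\mathcal P)$ together with Fatou's lemma forces both sides to be $+\infty$ -- or a subsequence converges and Fatou's lemma along it yields simultaneously continuity from below and an optimizer for $X$. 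This is precisely the ``self-contained'' route you relegate to a closing remark; in the paper it is the proof, and it is what makes the lemma available before duality is extended to unbounded endowments.

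Second, even granting the duality, your monotone-convergence step is incorrect as written. Since $X_n\le X$ one has $X_n^-\ge X^-$, so integrability of $X^-$ under $Q$ gives no control on $X_n^-$; in particular $X_n+X^-$ need not be nonnegative (take $\omega$ with $X(\omega)>0>X_n(\omega)$), and with the paper's convention $E_Q[X_n]$ can equal $-\infty$ for every $n$ while $E_Q[X]$ is finite -- e.g.\ $X=0$ and $X_n=-Z1_{\{Z>n\}}$ with $Z\ge 0$ not $Q$-integrable. Hence for your fixed near-optimal $Q$ the convergence $E_Q[X_n]-H(Q,\mathcal P)\uparrow E_Q[X]-H(Q,\mathcal P)$ may fail. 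The natural repair, conditioning $Q$ on sets where $X_1$ is bounded from below as in Lemma \ref{lem:rep.exponential}, destroys the martingale property of $Q$ and with it the weak-duality inequality $v(X_n)\ge E_Q[X_n]-H(Q,\mathcal P)$, so this is not a cosmetic fix. (In the paper's own dual-side use of monotone convergence, in step (c), the increasing sequence is the truncation $X\wedge n$, for which $(X\wedge n)^-=X^-$ and this issue does not arise.)
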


\begin{lemma}
\label{lem:rep.exponential}
	Fix a random variable $X\colon\Omega\to\mathbb{R}$.
	Then one has 
	\[\sup_{P\in\mathcal{P}}\log E_P[\exp(X+h\Delta S)]
	=\sup_{Q\in\mathcal{C}}\big(E_Q[X+h\Delta S] - H(Q,\mathcal{P}) \big)\]
	for every $h\in\mathbb{R}^d$, where
	\[ \mathcal{C}:=\{Q\in\mathfrak{P}(\Omega) : 
	E_Q[|X|+|\Delta S|+Y]+H(Q,\mathcal{P})<+\infty\} \]
	and $Y\colon\Omega\to[0,+\infty)$ is an arbitrary random variable.
\end{lemma}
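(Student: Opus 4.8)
\textbf{Proof plan for Lemma \ref{lem:rep.exponential}.}

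The plan is to establish the identity by the classical Legendre-type duality between the exponential moment and the relative entropy, localized to the sublinear setting, and then to check that the restriction of the dual variables to the set $\mathcal{C}$ costs nothing. First I would fix $h$ and absorb $h\Delta S$ into $X$ by writing $\tilde X := X + h\Delta S$, which is again a real-valued random variable; this reduces the claim to showing $\sup_{P\in\mathcal{P}}\log E_P[\exp(\tilde X)] = \sup_{Q\in\mathcal{C}}(E_Q[\tilde X] - H(Q,\mathcal{P}))$, where now $\mathcal{C}=\{Q : E_Q[|\tilde X| + |\Delta S| + Y] + H(Q,\mathcal{P}) < +\infty\}$ (the $|\Delta S|$ and $Y$ terms only shrink the feasible set, so the ``$\geq$'' direction is the one to worry about). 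The key classical input is the Donsker--Varadhan / Gibbs variational formula: for a fixed $P$ and a random variable $\tilde X$ with $E_P[\exp(\tilde X)] < \infty$, one has $\log E_P[\exp(\tilde X)] = \sup_{Q\ll P}(E_Q[\tilde X] - H(Q,P))$, with the supremum attained (when $\tilde X$ is $P$-integrable against the tilted measure) by $dQ^P/dP = \exp(\tilde X)/E_P[\exp(\tilde X)]$; and $\leq$ always holds, i.e.\ $E_Q[\tilde X] - H(Q,P) \le \log E_P[\exp(\tilde X)]$ for every $Q$, even when the exponential moment is infinite (interpreting $\log(+\infty) = +\infty$).

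For the inequality ``$\geq$'' in the lemma: take any $Q\in\mathcal{C}$ and any $P\in\mathcal{P}$; by the Gibbs inequality $E_Q[\tilde X] - H(Q,P) \leq \log E_P[\exp(\tilde X)] \leq \sup_{P'\in\mathcal{P}}\log E_{P'}[\exp(\tilde X)]$. Taking the infimum over $P\in\mathcal{P}$ on the left gives $E_Q[\tilde X] - H(Q,\mathcal{P}) \leq \sup_{P\in\mathcal{P}}\log E_P[\exp(\tilde X)]$, and then the supremum over $Q\in\mathcal{C}$ yields ``$\geq$''. This direction needs essentially no integrability care because $H(Q,\mathcal{P}) = \inf_P H(Q,P)$ and the inequality holds for each $P$ separately. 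For ``$\leq$'', fix $P\in\mathcal{P}$. If $E_P[\exp(\tilde X)] = +\infty$ there is nothing to prove for that $P$ (the right-hand side supremum, over a nonempty $\mathcal{C}$ — nonemptiness I would get from the hypotheses guaranteeing $\mathcal{M}$-type sets are populated, or argue directly — dominates via a localization); the substantive case is $E_P[\exp(\tilde X)] < +\infty$, where I would like to use the optimal tilted measure $Q^P$. The issue is that $Q^P$ need not lie in $\mathcal{C}$: it may fail to integrate $|\Delta S|$ or $Y$, and $H(Q^P,\mathcal{P})$ could a priori be $+\infty$ even though $H(Q^P,P) < \infty$ (no, that last one is fine: $H(Q^P,\mathcal{P}) \le H(Q^P,P) < \infty$). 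So the real gap is integrability of $|\Delta S| + Y$ under $Q^P$, and this is exactly where a truncation/approximation argument is needed.

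The main obstacle, then, is this approximation step: I would truncate, replacing $\tilde X$ by $\tilde X \wedge n$ (or by $\tilde X$ on the set $\{|\Delta S| + Y \le n\}$ and $-\infty$ elsewhere, i.e.\ conditioning), obtaining tilted measures $Q_n$ that automatically integrate the bounded-on-their-support quantities and have finite entropy against $P$, hence lie in $\mathcal{C}$; then I would show $E_{Q_n}[\tilde X] - H(Q_n,P) \to \log E_P[\exp(\tilde X)]$, using monotone convergence for $E_P[\exp(\tilde X \wedge n)] \uparrow E_P[\exp(\tilde X)]$ and controlling the entropy terms via the explicit Radon--Nikodym densities together with the variational formula applied to the truncated variable. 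A clean route is to note $E_{Q_n}[\tilde X_n] - H(Q_n,P) = \log E_P[\exp(\tilde X_n)]$ with $\tilde X_n = \tilde X \wedge n$, and $E_{Q_n}[\tilde X] \ge E_{Q_n}[\tilde X_n]$ is false in general (truncation lowers $\tilde X_n$), so instead one shows $E_{Q_n}[\tilde X] - H(Q_n,P) \ge \log E_P[\exp(\tilde X_n)] - E_{Q_n}[(\tilde X - \tilde X_n)^+]$ — wait, the cleaner bookkeeping is: the map $Q \mapsto E_Q[\tilde X] - H(Q,P)$ evaluated at $Q_n$ is at least $\log E_P[\exp(\tilde X_n)]$ by Gibbs applied with $\tilde X$ but density tuned to $\tilde X_n$, up to the discrepancy $E_{Q_n}[\tilde X - \tilde X_n] \ge 0$, which only helps. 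Taking $n \to \infty$ forces $\sup_{Q\in\mathcal{C}}(E_Q[\tilde X] - H(Q,P)) \ge \log E_P[\exp(\tilde X)]$, and since this holds for every $P \in \mathcal{P}$ while $H(Q,\mathcal{P}) \le H(Q,P)$, I get $\sup_{Q\in\mathcal{C}}(E_Q[\tilde X] - H(Q,\mathcal{P})) \ge \sup_{Q\in\mathcal{C}}\sup_{P\in\mathcal{P}}(E_Q[\tilde X] - H(Q,P)) \ge \sup_{P\in\mathcal{P}}\log E_P[\exp(\tilde X)]$. I expect the delicate point to be making the entropy estimates rigorous when $\tilde X$ is unbounded above (so $+\infty$-valued exponential moments and the $(-\infty,+\infty]$-valued nature of $X$ enter), and ensuring $\mathcal{C} \ne \emptyset$ in that degenerate case so the right-hand side is legitimately $+\infty$ as well.
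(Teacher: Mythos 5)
Your overall strategy is the same as the paper's: reduce to the single-measure Gibbs/Donsker--Varadhan formula for each fixed $P$, use weak duality for the direction ``$\geq$'' (which you handle correctly), and then argue by truncation/conditioning that restricting the dual variables to $\mathcal{C}$ does not lower the supremum, before taking the supremum over $P\in\mathcal{P}$ and using $H(Q,\mathcal{P})=\inf_P H(Q,P)$. The gap is in the truncation step. Membership in $\mathcal{C}$ requires $E_Q[|X|+|\Delta S|+Y]<+\infty$, and neither of your proposed approximations delivers this: if $Q_n$ is the measure tilted by $\exp(\tilde X\wedge n)$, its density w.r.t.\ $P$ is bounded, but $|X|$, $|\Delta S|$, $Y$ need not be $P$-integrable, so $E_{Q_n}[|X|+|\Delta S|+Y]$ may be infinite; if instead you condition on $B_n=\{|\Delta S|+Y\le n\}$, then $|\Delta S|+Y$ is bounded on the support but $|X|$ is not, and $E_{Q_n}[|X|]=E_P[|X|e^{\tilde X}\mathbf{1}_{B_n}]/E_P[e^{\tilde X}\mathbf{1}_{B_n}]$ can be infinite even when $E_P[e^{\tilde X}]<+\infty$ (one only has $e^{\tilde X}$ integrable, not $|X|e^{\tilde X}$). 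So the inference ``hence lie in $\mathcal{C}$'' is unjustified, and since the entire content of the lemma is precisely the integrability of $|X|+|\Delta S|+Y$ under the dual measures, this is the step that must be repaired. In addition, the case $E_P[\exp(\tilde X)]=+\infty$ is not ``nothing to prove'': there one must actually show the right-hand side is $+\infty$, which your sketch defers to an unspecified ``localization''.

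The repair is to put $|X|$ into the truncation event: with $B_n:=\{|X|+|\Delta S|+Y\le n\}$ one has $B_n\uparrow\Omega$ because all three functions are real-valued, and your tilted-measure variant then works verbatim, giving $E_{Q_n}[\tilde X]-H(Q_n,P)=\log E_P[e^{\tilde X}\mathbf{1}_{B_n}]\uparrow\log E_P[e^{\tilde X}]$ whether or not the limit is finite, which also settles the infinite-exponential-moment case in one stroke. The paper performs essentially this conditioning, but applies it to an arbitrary dual measure $Q\in\mathcal{A}_P=\{Q:E_Q[Z^-]+H(Q,P)<+\infty\}$ coming from the dominated representation (its Lemma \ref{lem:rep.exponential.dominated}), setting $Q_n:=Q(\,\cdot\,|B_n)$ and checking $H(Q_n,P)\to H(Q,P)$ and $E_{Q_n}[Z]\to E_Q[Z]$; both routes are fine once the truncation set controls $|X|$ as well.
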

\begin{proof}	
	(a) 
	Define $Z:=X+h\Delta S$ and fix a measure $P\in\mathcal{P}$.
	It follows from the well known representation of expected exponential 
	utility and the monotone convergence theorem that
	\begin{align}
	\label{eq:rep.exponential.set.smaller} 
	\log E_P[\exp(Z)] 
	=\sup_{Q\in\mathcal{A}_P}(E_Q[Z]-H(Q,P)),
	\end{align}
	where 
	\[\mathcal{A}_P:=\{Q\in\mathfrak{P}(\Omega) : E_Q[Z^-] + H(Q,P)<+\infty\}.\]
	For the sake of completeness, a proof is provided in Lemma \ref{lem:rep.exponential.dominated}.
	We claim that one can replace $\mathcal{A}_P$ with $\mathcal{C}_P$ 
	in \eqref{eq:rep.exponential.set.smaller} 
	without changing the value of the supremum, where
	\[ \mathcal{C}_P:=\{ Q\in\mathfrak{P}(\Omega): E_Q[|X|+|\Delta S|+Y]+H(Q,P)<+\infty\}. \]
	Since $\mathcal{C}_P$ is a subset of $\mathcal{A}_P$, it suffices
	to show that for any $Q\in\mathcal{A}_P$, there exists a sequence
	$Q_n\in\mathcal{C}_P$ such that $E_{Q_n}[Z]-H(Q_n,P)$ converges to $E_Q[Z]-H(Q,P)$.
	To that end, fix some $Q\in\mathcal{A}_P$ and define
	\[ Q_n:=Q(\,\cdot\,|B_n) \quad\text{where}\quad
	B_n:=\{ |X|+|\Delta S| + Y\leq n \}\]
	for all $n$ large enough such that $Q(B_n)>0$.
	Then it holds
	\[ \frac{dQ_n}{dP}=\frac{1_{B_n}}{Q(B_n)}\frac{dQ}{dP}\]
	and since $B_n\uparrow\Omega$, a straightforward computation shows that
	\[H(Q_n,P)
	=E_P\Big[\frac{1_{B_n}}{Q(B_n)}\frac{dQ}{dP}\log\frac{dQ}{dP}\Big] -\log Q(B_n)
	\to H(Q,P).\]
	In particular, $H(Q_n,P)<+\infty$ and since
	$X$, $\Delta S$, and $Y$ are integrable with respect to 
	$Q_n$, it follows that $Q_n\in\mathcal{C}_P$.
	Further, the integrability of $Z^-$ with respect to $Q$ guarantees 
	the convergence of $E_{Q_n}[Z]$ to $E_Q[Z]$ and therefore
	\[E_Q[Z]-H(Q,P)
	=\lim_n (E_{Q_n}[Z]-H(Q_n,P))
	\leq\sup_{Q\in\mathcal{C}_P}(E_Q[Z]-H(Q,P)).\]
	Taking the supremum over all $Q\in\mathcal{A}_P$ yields the claim.

	(b) 
	To conclude the proof, make the simple observation that
	$\mathcal{C}$ equals the union over $\mathcal{C}_P$ where $P$ runs trough $\mathcal{P}$. 
	This implies that
	\[\sup_{P\in\mathcal{P}} \log E_P[\exp(Z)]
	=\sup_{P\in\mathcal{P}} \sup_{Q\in\mathcal{C}_P}(E_Q[Z]-H(Q,P))
	=\sup_{Q\in\mathcal{C}}(E_Q[Z]-H(Q,\mathcal{P})),\]
	where the first equality follows from step (a).
\end{proof}

\begin{lemma}
\label{lem:convex.H.C}
	The relative entropy $H$ is jointly convex.
	Moreover, the function $H(\cdot,\mathcal{P})$
	and the set $\mathcal{C}$ defined in Lemma \ref{lem:rep.exponential} are convex.
\end{lemma}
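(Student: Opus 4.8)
The plan is to deduce all three assertions from the joint convexity of the single real-variable function that underlies relative entropy. Consider $f\colon[0,\infty)^2\to(-\infty,+\infty]$ given by $f(x,y):=x\log(x/y)$ for $x,y>0$, $f(0,y):=0$ for $y\ge 0$, and $f(x,0):=+\infty$ for $x>0$. Then $f$ is jointly convex, since for $y>0$ it equals the perspective $y\,\phi(x/y)$ of the convex function $\phi(t):=t\log t$, and the perspective of a convex function is jointly convex; the boundary values are precisely those making $f$ lower semicontinuous (the recession function of $\phi$ is $+\infty$ on $(0,\infty)$ and $0$ at $0$). Moreover $\phi\ge -1/e$ gives the lower bound $f(x,y)\ge -y/e$ for all $x,y\ge 0$, which will guarantee that all integrals below are well defined.

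First I would establish the representation $H(Q,P)=\int_\Omega f\big(\tfrac{dQ}{d\mu},\tfrac{dP}{d\mu}\big)\,d\mu$ for any probability $\mu$ with $Q,P\ll\mu$. If $Q\ll P$, then $\tfrac{dP}{d\mu}>0$ holds $\mu$-a.e.\ on $\{\tfrac{dQ}{d\mu}>0\}$ and the chain rule gives $\tfrac{dQ}{dP}=\tfrac{dQ/d\mu}{dP/d\mu}$ there, so the integral equals $\int\log\tfrac{dQ}{dP}\,dQ=E_P[\tfrac{dQ}{dP}\log\tfrac{dQ}{dP}]$; if $Q\not\ll P$, there is a $\mu$-positive set on which $\tfrac{dP}{d\mu}=0<\tfrac{dQ}{d\mu}$, so the integral is $+\infty$, again matching the definition. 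Since $f\ge -\tfrac1e\,\tfrac{dP}{d\mu}$ and $\int\tfrac{dP}{d\mu}\,d\mu=1$, the integral is always well defined in $(-\tfrac1e,+\infty]$. Joint convexity of $H$ is then immediate: given $(Q_i,P_i)$, $i=0,1$, and $\lambda\in[0,1]$, set $\mu:=\tfrac14(Q_0+Q_1+P_0+P_1)$ and $(Q_\lambda,P_\lambda):=(\lambda Q_0+(1-\lambda)Q_1,\ \lambda P_0+(1-\lambda)P_1)$; the $\mu$-densities of $Q_\lambda$ and $P_\lambda$ are the corresponding convex combinations of $\mu$-densities, so applying the pointwise convexity of $f$ and integrating yields $H(Q_\lambda,P_\lambda)\le\lambda H(Q_0,P_0)+(1-\lambda)H(Q_1,P_1)$.

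For $H(\cdot,\mathcal{P})$, I would invoke the standard fact that the infimum of a jointly convex function over a convex set is convex: since $\mathcal{P}$ is convex, given $Q_0,Q_1$ and $\varepsilon>0$, pick $P_i\in\mathcal{P}$ with $H(Q_i,P_i)\le H(Q_i,\mathcal{P})+\varepsilon$; then $\lambda P_0+(1-\lambda)P_1\in\mathcal{P}$ and joint convexity of $H$ give $H(Q_\lambda,\mathcal{P})\le H(Q_\lambda,\lambda P_0+(1-\lambda)P_1)\le\lambda H(Q_0,\mathcal{P})+(1-\lambda)H(Q_1,\mathcal{P})+\varepsilon$; letting $\varepsilon\downarrow 0$ finishes it. Finally, for $\mathcal{C}$, Jensen's inequality applied to $\phi$ gives $H(Q,P)\ge\phi(E_P[\tfrac{dQ}{dP}])=\phi(1)=0$, hence $H(Q,\mathcal{P})\ge 0$; as $|X|+|\Delta S|+Y\ge 0$ as well, $Q\in\mathcal{C}$ if and only if $E_Q[|X|+|\Delta S|+Y]<+\infty$ and $H(Q,\mathcal{P})<+\infty$ both hold. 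Thus $\mathcal{C}$ is the intersection of $\{Q:E_Q[|X|+|\Delta S|+Y]<+\infty\}$, which is convex because $Q\mapsto E_Q[W]$ is affine on $\mathfrak{P}(\Omega)$ for fixed $W\ge 0$, with the sublevel set $\{Q:H(Q,\mathcal{P})<+\infty\}$ of the convex function $H(\cdot,\mathcal{P})$; an intersection of convex sets is convex. The only genuinely delicate part is the measure-theoretic bookkeeping behind the integral representation of $H$ -- choosing a common dominating measure, checking it reproduces the definition, and handling the non-absolutely-continuous case -- while the remaining deductions are routine convex analysis.
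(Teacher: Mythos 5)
Your proof is correct, but it takes a genuinely different route from the paper. The paper deduces joint convexity of $H$ from the variational (Donsker--Varadhan type) formula of \cite[Lemma 3.29]{follmer2011stochastic}, $H(Q,P)=\sup_Z\big(E_Q[Z]-\log E_P[\exp(Z)]\big)$ over bounded $Z$: each function $(Q,P)\mapsto E_Q[Z]-\log E_P[\exp(Z)]$ is affine in $Q$ and convex in $P$, so $H$ is a supremum of jointly convex functions; convexity of $H(\cdot,\mathcal{P})$ and of $\mathcal{C}$ then follows from convexity of $\mathcal{P}$ exactly as in your last two paragraphs (the paper states this step without spelling it out). You instead work from the primal side, writing $H(Q,P)=\int f\big(\tfrac{dQ}{d\mu},\tfrac{dP}{d\mu}\big)\,d\mu$ for a common dominating measure $\mu$, with $f$ the (lower semicontinuous) perspective of $t\mapsto t\log t$, and integrate the pointwise convexity of $f$. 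Both arguments are sound; yours is self-contained and elementary but requires the measure-theoretic bookkeeping you acknowledge (choice of $\mu$, the chain rule on $\{dQ/d\mu>0\}$, the non-absolutely-continuous case, and the lower bound $f(x,y)\ge -y/e$ making all integrals well defined), whereas the paper's dual formula makes joint convexity a one-line consequence and is reused later (in Lemma \ref{lem:H.is.borel}) to obtain Borel measurability of $H$, which your pointwise representation does not directly give. Your explicit verification that $\mathcal{C}$ is the intersection of $\{E_Q[|X|+|\Delta S|+Y]<+\infty\}$ with $\{H(\cdot,\mathcal{P})<+\infty\}$, using $H\ge 0$, is a welcome elaboration of a step the paper leaves implicit.
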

\begin{proof}
	It follows from \cite[Lemma 3.29]{follmer2011stochastic} that
	\[ H(Q,P)=\sup\{ E_Q[Z] -\log E_P[\exp(Z)] : Z\text{ is a bounded random variable} \}. \]
	For any such $Z$, the function $(Q,P)\mapsto E_Q[Z] -\log E_P[\exp(Z)]$ is convex.
	Thus $H$, as the supremum over convex functions, is itself convex. 
	Furthermore, the convexity of $\mathcal{P}$ yields that
	$H(\cdot,\mathcal{P})$ and $\mathcal{C}$ are convex.
\end{proof}

In the proof of Theorem \ref{thm:1peroiod}, it will be important that 
$0\in\mathop{\mathrm{ri}}\{E_Q[\Delta S] :Q\in\mathcal{C}\}$
where $\mathcal{C}$ was defined in Lemma \ref{lem:rep.exponential} and $\mathop{\mathrm{ri}}$ denotes
the relative interior. 
To get the idea why this is true, assume for simplicity that $d=1$ and that $\Delta S$ 
is not $\mathcal{P}$-quasi surely equal to 0.
Then, by the no-arbitrage condition, there exist two measures 
$P^\pm$ such that $P^\pm(\pm\Delta S>0)>0$.
Now define
\[Q_\lambda:=\lambda P^+(\,\cdot\,|\,0< \Delta S,|X|,Y <n)+(1-\lambda) P^-(\,\cdot\,|-n<\Delta S,-|X|,-Y<0)\]
for $n$ large enough and every $\lambda\in[0,1]$.
Then $X$, $\Delta S$, and $Y$ are integrable with respect to $Q_\lambda$ and since
$E_{Q_0}[\Delta S]<0$, $E_{Q_1}[\Delta S]>0$ it follows that
$0\in\mathop{\mathrm{int}} \{ E_{Q_\lambda}[\Delta S] : \lambda\in [0,1]\}$.
As the density of $Q_\lambda$ with respect to $(P^++P^-)/2\in\mathcal{P}$ is bounded,
it holds $H(Q_\lambda,\mathcal{P})<+\infty$ and thus $Q_\lambda\in\mathcal{C}$.	

\begin{lemma}[\text{\cite[Lemma 3.3]{bouchard2015arbitrage}}]
	\label{lem:fundamental.lem}
	Let $X,Y\colon\Omega\to\mathbb{R}$ be random variables and assume that $Y$ is nonnegative.
	Then one has $0\in\mathop{\mathrm{ri}} \{ E_Q[\Delta S] : Q\in\mathcal{C}\}$
	where $\mathcal{C}$ was defined in Lemma \ref{lem:rep.exponential}.
\end{lemma}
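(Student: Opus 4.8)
The plan is to turn the assertion into a purely finite–dimensional statement about the convex set $\mathcal{D}:=\{E_Q[\Delta S]:Q\in\mathcal{C}\}\subset\mathbb{R}^d$ and to invoke the no–arbitrage hypothesis only through one elementary conditioning argument. First I would record that $\mathcal{C}$ is nonempty and convex: convexity is Lemma \ref{lem:convex.H.C}, and nonemptiness follows by fixing any $P\in\mathcal{P}$ (the set $\mathcal{P}$ being nonempty) and conditioning it on a sublevel set $B_n:=\{|X|+|\Delta S|+Y\le n\}$ with $n$ so large that $P(B_n)>0$; the resulting measure has bounded density $1_{B_n}/P(B_n)$ with respect to $P$, hence finite relative entropy ($H(\cdot,P)=-\log P(B_n)$, so $H(\cdot,\mathcal{P})<+\infty$ as well), and it integrates $|X|$, $|\Delta S|$ and $Y$ trivially, so it lies in $\mathcal{C}$. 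Since $Q\mapsto E_Q[\Delta S]$ is linear, $\mathcal{D}$ is a nonempty convex subset of $\mathbb{R}^d$, and it suffices to prove $0\in\mathop{\mathrm{ri}}\mathcal{D}$.

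The heart of the proof is the claim: \emph{if $h\in\mathbb{R}^d$ satisfies $E_Q[h\Delta S]\le 0$ for all $Q\in\mathcal{C}$, then $E_Q[h\Delta S]=0$ for all $Q\in\mathcal{C}$.} To prove it I would argue by contradiction. If $\{h\Delta S>0\}$ were not $\mathcal{P}$-polar, choose $P\in\mathcal{P}$ with $P(h\Delta S>0)>0$ and set $Q:=P(\,\cdot\,|\{h\Delta S>0\}\cap B_n)$ for $n$ large enough that $P(\{h\Delta S>0\}\cap B_n)>0$; exactly as in the first paragraph $Q\in\mathcal{C}$, yet $E_Q[h\Delta S]>0$, a contradiction. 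Hence $h\Delta S\le 0$ $\mathcal{P}$-q.s., and applying {\rm NA}$(\mathcal{P})$ to $-h$ gives $h\Delta S=0$ $\mathcal{P}$-q.s. Finally, every $Q\in\mathcal{C}$ satisfies $Q\ll P$ for some $P\in\mathcal{P}$ (as $H(Q,\mathcal{P})<+\infty$), so $h\Delta S=0$ $Q$-a.s.\ and therefore $E_Q[h\Delta S]=0$, which is the claim.

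It then remains to deduce $0\in\mathop{\mathrm{ri}}\mathcal{D}$ from the claim by convex analysis. The claim says precisely that $\mathcal{D}$ admits no proper supporting hyperplane at $0$, i.e.\ there is no $h$ with $\sup_{x\in\mathcal{D}}\langle h,x\rangle\le 0$ and $\langle h,x_0\rangle<0$ for some $x_0\in\mathcal{D}$. For a nonempty convex set this forces $0\in\mathop{\mathrm{ri}}\mathcal{D}$: if $0\notin\overline{\mathcal{D}}$, strictly separating $0$ from the closed convex set $\overline{\mathcal{D}}$ produces such an $h$; and if $0\in\overline{\mathcal{D}}\setminus\mathop{\mathrm{ri}}\overline{\mathcal{D}}$, then $\{0\}$ and $\overline{\mathcal{D}}$ can be properly separated (the standard separation theorem for a convex set and a point not in its relative interior; see e.g.\ Rockafellar, \emph{Convex Analysis}, Theorem 11.6), again producing such an $h$ — both impossible. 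Hence $0\in\mathop{\mathrm{ri}}\overline{\mathcal{D}}=\mathop{\mathrm{ri}}\mathcal{D}$, and in particular $0\in\mathcal{D}$, so there is indeed $Q\in\mathcal{C}$ with $E_Q[\Delta S]=0$.

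The only place {\rm NA}$(\mathcal{P})$ enters is the contradiction step of the second paragraph, and the only mildly technical point there is verifying that the conditioned measures belong to $\mathcal{C}$ — which is the bounded–density computation already used in the proof of Lemma \ref{lem:rep.exponential}. Accordingly I expect no substantial obstacle; the one point deserving care is simply the convex–analytic equivalence between "no proper supporting hyperplane at a point" and "membership in the relative interior", which is what converts the no–arbitrage input into the relative–interior conclusion.
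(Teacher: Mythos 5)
Your proposal is correct and is essentially the argument the paper relies on: it reconstructs the proof of \cite[Lemma 3.3]{bouchard2015arbitrage} (separation in $\mathbb{R}^d$ plus measures obtained by conditioning on $\{h\Delta S>0\}\cap\{|X|+|\Delta S|+Y\le n\}$, upgraded by {\rm NA}$(\mathcal{P})$), together with the paper's key observation that these conditioned measures have bounded densities and hence finite relative entropy, so they lie in $\mathcal{C}$. Nothing is missing; only the Rockafellar reference for proper separation of a point from a convex set is a minor citation detail.
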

\begin{proof}
	Even though \cite[Lemma 3.3]{bouchard2015arbitrage} states that
	$0\in\mathop{\mathrm{ri}} \{ E_Q[\Delta S] : Q\in\Theta\}$
	for the set
	$\Theta=\{Q: E_Q[|X|+|\Delta S|+Y]<+\infty\text{ and } Q\ll P\text{ for some }P\in\mathcal{P}\}$,
	the constructed measures $Q$ have bounded densities $dQ/dP$ with respect to some $P\in\mathcal{P}$, 
	in particular $H(Q,\mathcal{P})$ is finite.
	The proof can be copied word by word.
\end{proof}

Before being ready for the proof of the main theorem, one last observation on the 
decomposition of $\mathbb{R}^d$ into relevant and irrelevant strategies $h$ needs to be made.
Denote by $\mathop{\mathrm{supp}}_\mathcal{P}\Delta S$ the smallest closed subset of $\mathbb{R}^d$ 
such that $\Delta S(\omega)\in\mathop{\mathrm{supp}}_\mathcal{P}\Delta S$ for $\mathcal{P}$-quasi every $\omega$;
see \cite[Lemma 4.2]{bouchard2015arbitrage}.
Further write $\mathop{\mathrm{lin}}A$ for the smallest linear space which contains 
a given set $A\subset\mathbb{R}^d$, 
and $L^\perp:=\{h\in\mathbb{R}^d : hl=0\text{ for all }l\in L\}$ for the orthogonal 
complement of a linear space $L\subset\mathbb{R}^d$.

\begin{lemma}[\text{\cite[Lemma 2.6]{nutz2014utility}}]
\label{lem:char.space.L}
	Define $L:=\mathop{\mathrm{lin}}\mathop{\mathrm{supp}}_\mathcal{P}\Delta S$.
	Then one has $h\in L^\perp$ if and only if $h\Delta S=0$ $\mathcal{P}$-quasi surely.
\end{lemma}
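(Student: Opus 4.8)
The statement is an equivalence, so the plan is to prove the two implications separately, with the forward direction being essentially trivial and the reverse direction carrying the content. First I would recall the defining property of $\mathop{\mathrm{supp}}_\mathcal{P}\Delta S$: it is the smallest closed set carrying $\Delta S$ quasi-surely, equivalently the complement of the largest open set $G$ with $\Delta S\in G$ only on a $\mathcal{P}$-polar set; concretely, for any open ball $B$ meeting $\mathop{\mathrm{supp}}_\mathcal{P}\Delta S$ there exists $P\in\mathcal{P}$ with $P(\Delta S\in B)>0$. For the ``only if'' direction, if $h\in L^\perp=(\mathop{\mathrm{lin}}\mathop{\mathrm{supp}}_\mathcal{P}\Delta S)^\perp$, then in particular $h$ is orthogonal to every point of $\mathop{\mathrm{supp}}_\mathcal{P}\Delta S$, hence $h\Delta S(\omega)=0$ for every $\omega$ with $\Delta S(\omega)\in\mathop{\mathrm{supp}}_\mathcal{P}\Delta S$, which is $\mathcal{P}$-quasi every $\omega$ by the defining property of the support; this gives $h\Delta S=0$ $\mathcal{P}$-q.s.

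For the ``if'' direction I would argue by contraposition: suppose $h\notin L^\perp$, i.e.\ there exists $l\in\mathop{\mathrm{supp}}_\mathcal{P}\Delta S$ with $hl\neq 0$, say $hl>0$ (the case $hl<0$ is symmetric, and if $h$ is not orthogonal to the linear span then it fails to be orthogonal to some spanning element, hence to some point of the support itself). By continuity of the map $x\mapsto hx$, there is an open ball $B$ around $l$ on which $hx>0$; since $l$ lies in the support, the defining property yields $P\in\mathcal{P}$ with $P(\Delta S\in B)>0$, and on that event $h\Delta S>0$. Thus $h\Delta S>0$ with positive $P$-probability, so it is not the case that $h\Delta S=0$ $\mathcal{P}$-q.s. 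This establishes the contrapositive and completes the equivalence.

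The main (minor) obstacle is making precise the step ``$h$ not orthogonal to $\mathop{\mathrm{lin}}\mathop{\mathrm{supp}}_\mathcal{P}\Delta S$ implies $h$ not orthogonal to some point of $\mathop{\mathrm{supp}}_\mathcal{P}\Delta S$'': this is just the elementary linear-algebra fact that a linear functional vanishing on a set vanishes on its linear span, used in contrapositive form. Everything else is a direct unwinding of the definition of $\mathop{\mathrm{supp}}_\mathcal{P}\Delta S$ from \cite[Lemma 4.2]{bouchard2015arbitrage} together with continuity of linear maps; no new machinery is needed, and indeed this is why the paper attributes it to \cite[Lemma 2.6]{nutz2014utility}.
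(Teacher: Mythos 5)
Your proof is correct: the forward direction follows since $\Delta S\in\mathop{\mathrm{supp}}_\mathcal{P}\Delta S$ quasi-surely, and the contrapositive of the reverse direction correctly combines the linear-algebra reduction to a single support point with the minimality property of the support (an open ball meeting $\mathop{\mathrm{supp}}_\mathcal{P}\Delta S$ must be charged by some $P\in\mathcal{P}$, else removing it would give a strictly smaller closed quasi-sure carrier). The paper itself gives no proof but merely cites \cite[Lemma 2.6]{nutz2014utility}, whose argument is essentially the one you wrote, so your proposal matches the intended route.
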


\begin{proof}[{\bf Proof of Theorem \ref{thm:1peroiod} and Lemma \ref{lem:1period.cont.below}}]
	In step (a), duality is shown under the assumption that $X$ is bounded from above. 
	The existence of an optimizer $h\in\mathbb{R}^d$
	as well as continuity from below are proven simultaneously in step (b).
	Finally, the results from (a) and (b) are combined to extend 
	to unbounded random endowment $X$ in step (c).
	
	(a)
	Throughout this step, assume that $X$ is bounded from above,
	meaning that there exists some constant $k$ such that $X(\omega)\leq k$ for every $\omega$.
	The goal is to show the following dual representation
	\begin{align}
	\label{eq:1peroid.dual.Xintegr}		
		\inf_{h\in\mathbb{R}^d}\sup_{P\in\mathcal{P}} \log E_P[\exp(X+h\Delta S)]
	=\sup_{Q\in\mathcal{M}(|X|+Y)} (E_Q[X]-H(Q,\mathcal{P})).
	\end{align}
	By Lemma \ref{lem:rep.exponential}, it holds
	\begin{align*}
	\inf_{h\in\mathbb{R}^d}\sup_{P\in\mathcal{P}} \log E_P[\exp(X+h\Delta S)] 
	=\inf_{h\in\mathbb{R}^d}\sup_{Q\in\mathcal{C}}\big( E_Q[X+h\Delta S]-H(Q,\mathcal{P}) \big)
	\end{align*}
	where
	\[ \mathcal{C}:=\{Q\in\mathfrak{P}(\Omega) : E_Q[|X|+|\Delta S|+Y]+H(Q,\mathcal{P})<+\infty \}.\]
	Thus, if interchanging the infimum over $h\in\mathbb{R}^d$ 
	and the supremum over $Q\in\mathcal{C}$ were possible,  
	\eqref{eq:1peroid.dual.Xintegr} would follow
	since $\inf_{h\in\mathbb{R}^d}E_Q[h\Delta S]=-\infty$ whenever $Q$ is not a martingale measure.
	In what follows, we argue why one can in fact interchange the infimum and the supremum.
	Define
	\[\Gamma:=\mathop{\mathrm{lin}}\{ E_Q[\Delta S] :Q\in\mathcal{C} \}\]	
	and notice that if $\Gamma=\{0\}$, then $\mathcal{C}=\mathcal{M}(|X|+Y)$ 
	and $E_Q[h\Delta S]=0$ for all $h\in\mathbb{R}^d$ and $Q\in\mathcal{C}$ so that there is nothing to prove.
	Therefore, assume in the sequel that $\Gamma\neq\{0\}$ and let
	\[\{e_1,\dots,e_r\} \text{ be an orthonormal basis of }  \Gamma.\]
	Further, to simplify notation, define the function $J\colon \mathcal{C}\times\mathbb{R}^d\to\mathbb{R}$,
	\[ J(Q,h):=hE_Q[\Delta S]+ E_Q[X]-H(Q,\mathcal{P}). \]
	By Lemma \ref{lem:convex.H.C}, the set $\mathcal{C}$ and the function
	$H(\cdot,\mathcal{P})$ are convex, which shows that $J(\cdot,h)$ is concave for all $h\in\mathbb{R}^d$.
	Further, $J(Q,\cdot)$ is convex for all $Q\in\mathcal{C}$.
	Therefore, \cite[Theorem 4.1]{sion1958general} gives a sufficient condition for
	\[	\inf_{h\in\mathbb{R}^d}\sup_{Q\in\mathcal{C}} J(Q,h)
	=\sup_{Q\in\mathcal{C}}\inf_{h\in\mathbb{R}^d} J(Q,h) \]
	to hold true, namely that
	\begin{align}
	\label{eq:supinfcondition}
	\begin{cases}
	\text{for every } 
	c<\inf_{h\in\mathbb{R}^d}\sup_{Q\in\mathcal{C}} J(Q,h)
	\text{ one can find a finite set  }F\subset \mathcal{C}\\
	\text{such that for every }
	h\in\mathbb{R}^d \text{ there exists } Q\in F \text{ satisfying }  J(Q,h)>c.
	\end{cases}
	\end{align}
	To prove \eqref{eq:supinfcondition}, fix such $c$ and notice that
	\[\{h\in\mathbb{R}^d: J(Q,h)>c\}
	=\{h\in \Gamma : J(Q,h)>c\}+ \Gamma^\perp\]
	since $hE_Q[\Delta S]=0$ for every $h\in\Gamma^\perp$ and $Q\in\mathcal{C}$.
	Therefore, we can assume without loss of generality that $h\in\Gamma$ in the sequel.
	In fact, we shall distinguish between elements in $\Gamma$ with large and small (Euclidean) length.
	From Lemma \ref{lem:fundamental.lem}, it follows that
	\[ 0\in\mathop{\mathrm{ri}}\{ E_Q[\Delta S] : Q\in\mathcal{C}\} \]
	which implies that there exist $a^\pm_i>0$ and $Q_i^\pm\in\mathcal{C}$ satisfying
	\[E_{Q_i^\pm}[\Delta S]=\pm a^\pm_ie_i \quad\text{for } 1\leq i\leq r.\]
	We claim that  
	\begin{align}
	\label{eq:for.existence.minimizer}
	 \begin{cases}
	\max\{J(h,Q_i^\pm) : 1\leq i\leq r\}>c+1>c\\
	\text{for all } h\in\Gamma\text{ such that }
	|h|>m\sqrt{r}/\delta
	\end{cases}
	\end{align}
	where
	\[ m:=\max\big\{c+1 - E_{Q^\pm_i}[X] + H(Q^\pm_i,\mathcal{P}) : 1\leq i\leq r\big\}\in\mathbb{R}\]
	and
	\[ \delta:=\min\{ a^\pm_i : 1\leq i\leq r \}>0.\]
	Indeed, since $\sum_{i=1}^r (he_i)^2=|h|^2> r(m/\delta)^2$, it follows that
	$|he_j|> m/\delta$ for some $1\leq j\leq r$. If $he_j>m/\delta$, it holds
	\[hE_{Q^+_j}[\Delta S]
	=h a^+_je_j
	>\frac{m a_j^+}{\delta}
	\geq  m
	\geq c+1- E_{Q^+_j}[X] +H(Q^+_j,\mathcal{P})\]
	and a rearrangement of the appearing terms yields $J(h,Q_j^+)>c+1$.
	If $he_j<-m/\delta$, the same argumentation shows that $J(h,Q_j^-)>c+1$.
	Further, as 
	\[ J(Q,\cdot) \text{ is continuous and } c<\inf_{h\in\Gamma}\sup_{Q\in\mathcal{C}} J(Q,h),\]
	the collection
	\[U_Q :=\{h \in \Gamma:  J(Q,h) >c \},\]
	where $Q\in\mathcal{C}$, forms an open cover of $\Gamma$.
	By compactness of the set $\{h\in \Gamma : |h|\leq m\sqrt{r}/\delta\}$, 
	there exists a finite family $F'\subset\mathcal{C}$ such that 
	\[\{h\in \Gamma : |h|\leq m\sqrt{r}/\delta\}\subset \bigcup \{U_Q : Q\in F'\}.\]
	Then $F:=F'\cup\{Q^\pm_i : 1\leq i\leq r\}$ is still finite and it holds
	\[ \Gamma=\bigcup \{ U_Q : Q\in F\},\]
	which is a reformulation of \eqref{eq:supinfcondition}.
	Putting everything together, it follows that
	\begin{align*}
	&\inf_{h\in\mathbb{R}^d}\sup_{P\in\mathcal{P}} \log E_P[\exp(X+h\Delta S)]
	=\inf_{h\in\mathbb{R}^d}\sup_{Q\in\mathcal{C}} J(Q,h)\\
	&=\sup_{Q\in\mathcal{C}}\inf_{h\in\mathbb{R}^d} J(Q,h)
	=\sup_{Q\in\mathcal{M}(|X|+Y)} (E_Q[X]-H(Q,\mathcal{P})).
	\end{align*}
	In particular, since $\mathcal{M}(|X|+Y)$ is not empty by Lemma \ref{lem:fundamental.lem},
	it follows that the optimization problem does not take the value $-\infty$.

	(b) 
	We proceed to show that the optimization problem is continuous from below 
	(Lemma \ref{lem:1period.cont.below})
	and that an optimal strategy $h\in\mathbb{R}^d$ exists. 
	Recall that $X_n$ is a sequence increasing point-wise to $X$.
	For the existence of an optimal strategy for a fixed function $X$, consider
	the constant sequence $X_n:=X$ in the following argumentation.
	For each natural number $n$, let $h_n\in\mathbb{R}^d$ such that
	\begin{align}
	\label{eq:h.nearly.optimal}
	\inf_{h\in\mathbb{R}^d} \sup_{P\in\mathcal{P}} \log E_P[\exp(X_n+h\Delta S)]
	\geq \sup_{P\in\mathcal{P}} \log E_P[\exp(X_n+h_n\Delta S)]- \frac{1}{n}.
	\end{align}
	By step (a) this is possible, i.e.~the left-hand side of \eqref{eq:h.nearly.optimal}
	is not equal to $-\infty$.
	By Lemma \ref{lem:char.space.L} we may assume without loss of generality that 
	every $h_n$ is an element of 
	$L:=\mathop{\mathrm{lin}}\mathop{\mathrm{supp}}\nolimits_\mathcal{P}\Delta S$.
	
	First, assume that the sequence $h_n$ is unbounded, 
	i.e.~$\sup_n|h_n|=+\infty$. 
	Then, possibly after passing to a subsequence, $h_n/|h_n|$ converges to some limit $h^\ast$. 
	Since $|h^\ast|=1$ and $h^\ast\in L$,
	it follows from Lemma \ref{lem:char.space.L} and the NA($\mathcal{P}$)-condition, that 
	$P'(A)>0$ for some $P'\in\mathcal{P}$ where $A:=\{h^\ast\Delta S>0\}$. 
	However, since 
	\[\exp(X_n+h_n\Delta S)1_A\to+\infty 1_A,\]
	an application of Fatou's lemma yields
	\[ \inf_{h\in\mathbb{R}^d}\sup_{P\in\mathcal{P}} \log E_P[\exp(X_n+h\Delta S)]
	\geq \log E_{P'}[\exp(X_n+h_n\Delta S)]-\frac{1}{n}
	\to +\infty.\]
	But then, since the sequence $X_n$ is increasing, it follows that
	\begin{align*}
	&\inf_{h\in\mathbb{R}^d}\sup_{P\in\mathcal{P}} \log E_P[\exp(X+h\Delta S)]\\
	&\geq \lim_n \inf_{h\in\mathbb{R}^d}\sup_{P\in\mathcal{P}} \log E_P[\exp(X_n+h\Delta S)]
	=+\infty.
	\end{align*}
	Hence the optimization problem is continuous from below 
	and every $h\in\mathbb{R}^d$ is optimal for $X$.

	If the sequence $h_n$ is bounded, again possibly after passing to a subsequence, 
	$h_n$ converges to some limit $h^\ast\in\mathbb{R}^d$. 
	Now it follows that
	\begin{align*}
	\sup_{P\in\mathcal{P}} \log E_P[\exp(X+h^\ast\Delta S)]
	&\leq \liminf_n\Big( \sup_{P\in\mathcal{P}} \log E_P[\exp(X_n+h_n\Delta S)]-\frac{1}{n}\Big)\\
	&\leq \liminf_n	\inf_{h\in\mathbb{R}^d} \sup_{P\in\mathcal{P}} \log E_P[\exp(X_n+h\Delta S)] \\	
	&\leq \inf_{h\in\mathbb{R}^d} \sup_{P\in\mathcal{P}} \log E_P[\exp(X+h\Delta S)],
	\end{align*}
	where the first inequality follows from Fatou's lemma, the second one  
	since $h_n$ was chosen optimal up to an error of $1/n$, and the last one 
	since $X_n$ is an increasing sequence.
	This shows both that the optimization problem is continuous from below 
	and that $h^\ast$ is optimal for $X$.

	(c) 
	In the final step, the duality established in (a) is extended to general
	random endowment. Let $X\colon\Omega\to(-\infty,+\infty]$ be measurable and observe that
	\[\mathcal{M}(X^- + Y)=\mathcal{M}(|X\wedge n| + Y)\quad\text{for all }n\in\mathbb{N}\]
	since $X^-$ is integrable if and only if $(X\wedge n)^-$ is.
	Moreover, for any $Q\in\mathcal{M}(X^- + Y)$ the monotone convergence theorem applies
	and yields $\sup_n E_Q[X\wedge n]=E_Q[X]$. 
	But then it follows that
	\begin{align*}
	&\inf_{h\in\mathbb{R}^d}\sup_{P\in\mathcal{P}} \log E_P[\exp(X + h\Delta S)]
	=\sup_n \inf_{h\in\mathbb{R}^d}\sup_{P\in\mathcal{P}} \log E_P[\exp(X\wedge n+ h\Delta S)]\\ 
	&=\sup_n \sup_{Q\in\mathcal{M}(X^- + Y)} \big( E_Q[X\wedge n]-H(Q,\mathcal{P})\big)
	=\sup_{Q\in\mathcal{M}(X^- + Y)} \big( E_Q[X]-H(Q,\mathcal{P})\big)
	\end{align*}	 	
	where the first and second equality follow from step (b) and (a), respectively, and the last
	one by interchanging two suprema.
\end{proof}

\section{Proofs for the multiperiod case}
\label{sec:multi.period}

\subsection{The case without options}

In this section, measurable selection arguments are used to show that the global analysis 
can be reduced to a local one wherein the results of the one period case are used.
For each $0\leq t \leq T-1$ and $\omega\in\Omega_t$, define
\[\mathcal{P}_t^T(\omega)=\{P_t\otimes\cdots\otimes P_{T-1}: 
P_s(\cdot)\in\mathcal{P}_s(\omega,\cdot) \text{ for }t\leq s\leq T-1\},\]
where each $P_s(\cdot)$ is a universally measurable selector of $\mathcal{P}_s(\omega,\cdot)$.
Thus $\mathcal{P}_t^T(\omega)$ corresponds to the set of all possible probability scenarios 
for the future stock prices $S_{t+1},\dots, S_T$, given the past $\omega\in\Omega_t$.
In particular, it holds $\mathcal{P}=\mathcal{P}_0^T$ in line with Bouchard and Nutz.
In order to keep the indices to a minimum, fix two functions
\[ X\colon\Omega\to(-\infty,+\infty]\quad\text{and}\quad Y\colon\Omega\to[0,+\infty)\]
such that $X$ and $-Y$ are  upper semianalytic, and define the set of all martingale measures 
for the future stock prices $S_{t+1},\dots, S_T$ given the past $\omega\in\Omega_t$ by
\[ \mathcal{M}_t^T(\omega)
=\bigg\{ Q\in\mathfrak{P}(\Omega_{T-t}): 
\begin{array}{l} 
(S_s(\omega,\cdot))_{t\leq s \leq T} \text{ is a $Q$-martingale and}\\
E_Q[X(\omega,\cdot)^-+Y(\omega,\cdot)] + H(Q,\mathcal{P}_t^T(\omega))<+\infty
\end{array}\bigg\}\]
for each $0\leq t\leq T-1$ and $\omega\in\Omega_t$.
It is shown in Lemma \ref{lem:graph.Q.analytic} that $\mathcal{M}_t^T$ has analytic graph
and within the proof of Theorem \ref{thm:multiperiod} that its values are not empty.
Note that $\mathcal{M}_0^T=\mathcal{M}(Y)=\{Q\in\mathcal{M} : E_Q[Y+X^-]<+\infty \}$,
where $\mathcal{M}$ was defined in Section \ref{sec:main}.
Further introduce the dynamic version of the optimization problem:
Define 
\[\mathcal{E}_T(\omega,x):=X(\omega)+x \]
for $(\omega,x)\in\Omega\times\mathbb{R}$ 
and recursively
\begin{align}
\label{eq:local.optimization}
\mathcal{E}_t(\omega,x)
:=\inf_{h\in\mathbb{R}^d}\sup_{P\in\mathcal{P}_t(\omega)}\log 
E_P\Big[\exp\Big(\mathcal{E}_{t+1}\big(\omega\otimes_t\cdot,x+h\Delta S_{t+1}(\omega,\cdot)\big)\Big)\Big]
\end{align}
for $(\omega,x)\in\Omega_t\times\mathbb{R}$.
Here we write $\omega\otimes_t\omega':=(\omega,\omega')\in\Omega_{t+s}$ 
for $\omega\in\Omega_t$ and $\omega'\in\Omega_s$ instead of $(\omega,\cdot)$ to avoid confusion.
It will be shown later that $\mathcal{E}_t$ is well defined, 
i.e.~that the term inside the expectation is appropriately measurable.

The following theorem is the main result of this section and includes 
Theorem \ref{thm:main} as a special case (corresponding to $t=0$).

\begin{theorem}
\label{thm:multiperiod}
	For every $0\leq t\leq T-1$ and $\omega\in\Omega_t$, it holds
	\begin{align*}
	\mathcal{E}_t(\omega,x)-x
	&=\inf_{\vartheta\in\Theta}\sup_{P\in\mathcal{P}_t^T(\omega)}\log
	E_P\Big[\exp\Big(X(\omega,\cdot) + (\vartheta\cdot S)_t^T(\omega,\cdot)\Big)\Big] \\
	&=\sup_{Q\in\mathcal{M}_t^T(\omega)}
	\Big(E_Q[X(\omega,\cdot)] -H(Q,\mathcal{P}_t^T(\omega)) \Big)
	\end{align*}
	and both terms are not equal to $-\infty$.
	Moreover, the infimum over 	$\vartheta\in\Theta$ is attained.
\end{theorem}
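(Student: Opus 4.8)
The plan is to prove Theorem~\ref{thm:multiperiod} by backward induction on $t$, descending from $t=T$ to $t=0$, with the one-period results (Theorem~\ref{thm:1peroiod} and Lemma~\ref{lem:1period.cont.below}) serving as the engine at each step. The base case $t=T$ is trivial since $\mathcal{E}_T(\omega,x)-x = X(\omega)$ equals both the infimum over the empty strategy set and the supremum over $\mathcal{M}_T^T(\omega)=\{\delta_\omega\}$ (or the appropriate trivial set), with $H\equiv 0$. For the inductive step, suppose the statement holds at $t+1$ for every $\omega\in\Omega_{t+1}$. The first task is to verify that $\mathcal{E}_t$ is well defined, i.e.\ that $(\omega,\omega',x)\mapsto\mathcal{E}_{t+1}(\omega\otimes_t\omega',x+h\Delta S_{t+1}(\omega,\omega'))$ is sufficiently measurable (upper semianalytic) for the inner sublinear expectation in \eqref{eq:local.optimization} to make sense and for $\mathcal{E}_t$ itself to be upper semianalytic in $(\omega,x)$. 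This is exactly where the duality from the inductive hypothesis is crucial: by the induction hypothesis $\mathcal{E}_{t+1}(\omega',x)-x$ equals $\sup_{Q\in\mathcal{M}_{t+1}^T(\omega')}(E_Q[X(\omega',\cdot)]-H(Q,\mathcal{P}_{t+1}^T(\omega')))$, and since $\mathcal{M}_{t+1}^T$ has analytic graph (Lemma~\ref{lem:graph.Q.analytic}) and the integrand is upper semianalytic, a measurable selection / projection argument shows $\mathcal{E}_{t+1}$ is upper semianalytic; joint measurability in $x$ follows from convexity of $x\mapsto\mathcal{E}_{t+1}(\omega',x)$ (which I would establish along the way, as it is inherited through the recursion from convexity of $\log E_P[\exp(\cdot)]$ and preserved under $\inf_h$). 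Then $\mathcal{E}_t$ is upper semianalytic by the stability of upper semianalytic functions under the operations $\mathcal{E}_t(\cdot)=\sup_{P\in\mathcal{P}_t(\cdot)}E_P[\cdot]$ and $\inf_{h\in\mathbb{R}^d}$ (the latter reducing to a countable infimum over $\mathbb{Q}^d$ by continuity, using Lemma~\ref{lem:char.space.L} and Theorem~\ref{thm:1peroiod} to control attainment).

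Next I would establish the two equalities at level $t$. Write $V_{t+1}(\omega',x):=\mathcal{E}_{t+1}(\omega',x)-x$, so by the induction hypothesis $V_{t+1}$ is the value of the robust problem started at $t+1$. Applying the one-period Theorem~\ref{thm:1peroiod} to the measurable space $(\Omega_1,\mathcal{F}_1)$ equipped with $\mathcal{P}_t(\omega)$, with "endowment" $\omega'\mapsto\mathcal{E}_{t+1}(\omega\otimes_t\omega',x)$ evaluated appropriately (one must be slightly careful: the one-period result is applied with $x$ absorbed, using $\mathcal{E}_{t+1}(\omega\otimes_t\omega',x+h\Delta S)=x+ h\Delta S + V_{t+1}(\omega\otimes_t\omega', x+h\Delta S)$, but since exponential utility has the translation structure $\exp(x+\cdot)=e^x\exp(\cdot)$ the $x$ factors out and one reduces to an $x$-free problem — I would first treat $x=0$ and recover general $x$ by this translation invariance), one obtains
\begin{align*}
\mathcal{E}_t(\omega,0)
&=\inf_{h\in\mathbb{R}^d}\sup_{P\in\mathcal{P}_t(\omega)}\log E_P\big[\exp\big(h\Delta S_{t+1}(\omega,\cdot)+V_{t+1}(\omega\otimes_t\cdot,\,h\Delta S_{t+1})\big)\big]\\
&=\sup_{q\in\mathcal{M}_t^{t+1}(\omega)}\Big(E_q\big[\widetilde{X}_t(\omega,\cdot)\big]-H(q,\mathcal{P}_t(\omega))\Big),
\end{align*}
where $\widetilde{X}_t(\omega,\omega'):=V_{t+1}(\omega\otimes_t\omega',0)$ plays the role of the one-period endowment and $\mathcal{M}_t^{t+1}(\omega)$ is the set of one-step martingale measures on $\Omega_1$ with finite entropy integrating $\widetilde X_t^-+(\text{the relevant }Y\text{-term})$. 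Then I would substitute the induction hypothesis for $\widetilde X_t=V_{t+1}$, interchange the two suprema (over the one-step measure $q$ and over the continuation measure $Q'\in\mathcal{M}_{t+1}^T$), and use the "time-consistency" \eqref{eq:P.time.consistent} of $\mathcal{P}$ together with the chain rule / additivity of relative entropy $H(q\otimes Q',P_t\otimes P_{t+1}^T)=H(q,P_t)+E_q[H(Q'(\cdot),P_{t+1}^T(\cdot))]$ to glue $q$ and the family $(Q'_{\omega'})$ into a single $Q=q\otimes Q'\in\mathcal{M}_t^T(\omega)$, and conversely disintegrate any $Q\in\mathcal{M}_t^T(\omega)$. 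This disintegration/gluing for entropy, carried out while respecting the non-dominated structure and measurability of the selectors, is the technical heart of the argument. Non-emptiness of $\mathcal{M}_t^T(\omega)$ follows inductively: the one-period Theorem~\ref{thm:1peroiod} guarantees $\mathcal{M}_t^{t+1}(\omega)\neq\emptyset$, and gluing with a measurable selection of nonempty $\mathcal{M}_{t+1}^T(\omega')$ produces an element of $\mathcal{M}_t^T(\omega)$; that both sides are $>-\infty$ is then immediate.

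Finally, for the existence of an optimal $\vartheta\in\Theta$, I would argue again by backward induction, producing $\vartheta_{t+1},\dots,\vartheta_T$ as follows: at level $t$, for each $(\omega,x)$ the one-period Theorem~\ref{thm:1peroiod} yields an optimal $h\in\mathbb{R}^d$; a measurable selection theorem (the argmin of an upper semianalytic, convex-in-$h$ function with attainment, restricted to $L=\mathop{\mathrm{lin}}\mathop{\mathrm{supp}}_{\mathcal{P}_t(\omega)}\Delta S_{t+1}$ to kill the irrelevant directions as in Lemma~\ref{lem:char.space.L}) gives a universally measurable selector $\omega\mapsto\vartheta_{t+1}(\omega)$ achieving $\mathcal{E}_t(\omega,x)$; one checks the required measurability using that $\mathcal{E}_t$ and $\mathcal{E}_{t+1}$ are upper semianalytic and jointly measurable in $x$. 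Concatenating these selectors over $t=0,\dots,T-1$ and checking — via Lemma~\ref{lem:1period.cont.below} (continuity from below), which controls the propagation when $\mathcal{E}_t$ may be $+\infty$ — that the resulting $\vartheta$ indeed attains the global infimum completes the proof. The main obstacle I anticipate is the simultaneous bookkeeping of three things at each induction step: (i) that the local value function $\mathcal{E}_{t+1}$ has the precise regularity (upper semianalytic, jointly measurable, convex in $x$) needed to even state the one-period problem at level $t$, (ii) the entropy chain rule in the non-dominated setting with the infimum over $\mathcal{P}$ passing correctly through the disintegration, and (iii) the measurable selection of the optimizer when $+\infty$ values are possible; all three are intertwined, which is precisely why, as the introduction emphasizes, duality and dynamic programming must be proved together.
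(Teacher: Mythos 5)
Your overall architecture (backward induction, one-period duality applied with the continuation value $\mathcal{D}_{t+1}(\omega\otimes_t\cdot)=\mathcal{E}_{t+1}(\omega\otimes_t\cdot,0)$ as endowment, the entropy chain rule to glue $q\otimes Q'$ and to disintegrate $Q\in\mathcal{M}_t^T(\omega)$, measurable $\varepsilon$-optimal selectors, and a measurable argmin selector concatenated across time) is essentially the paper's argument, and your identification of the three intertwined difficulties is accurate. However, there is one genuine gap: you never prove the inequality
\[
\inf_{\vartheta\in\Theta}\sup_{P\in\mathcal{P}_t^T(\omega)}\log E_P\big[\exp\big(X(\omega,\cdot)+(\vartheta\cdot S)_t^T(\omega,\cdot)\big)\big]\;\geq\;\mathcal{E}_t(\omega,0),
\]
i.e.\ that no multi-period strategy can beat the value of the local recursion. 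Your induction identifies $\mathcal{E}_t$ with the dual value $\mathcal{D}_t$, and your concatenated selector $\vartheta^\ast$ together with the tower-type iteration gives the upper bound $C\leq\mathcal{E}_t(\omega,0)$; but the claim that $\vartheta^\ast$ ``attains the global infimum'' presupposes the reverse inequality, which is not a bookkeeping step. Since an arbitrary $\vartheta\in\Theta$ need not have any integrability under the measures in $\mathcal{P}_t^T(\omega)$ or in $\mathcal{M}_t^T(\omega)$, the paper proves it on the dual side: for every $\vartheta$, $P\in\mathcal{P}_t^T(\omega)$ and $Q\in\mathcal{M}_t^T(\omega)$ one shows $\log E_P[\exp(X(\omega,\cdot)+(\vartheta\cdot S)_t^T(\omega,\cdot))]\geq E_Q[X(\omega,\cdot)]-H(Q,P)$, using the elementary inequality $ab\leq e^a+b\log b$ to get $E_Q[((\vartheta\cdot S)_t^T)^+]<\infty$ and then the generalized martingale result of Jacod--Shiryaev to conclude that $(\vartheta\cdot S)_t^T(\omega,\cdot)$ is $Q$-integrable with expectation zero, so that Lemma \ref{lem:rep.exponential.dominated} applies; taking suprema over $Q,P$ and using the already established identity $\mathcal{D}_t=\mathcal{E}_t(\cdot,0)$ closes the loop. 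The alternative ``primal'' route you implicitly lean on -- pasting conditional $\varepsilon$-optimal kernels $P_s(\cdot)$ under an arbitrary $\vartheta$ -- runs exactly into the measurability/unboundedness obstructions that Remark \ref{rem:main.discussion} warns about, which is why the paper avoids it; without either argument the proof of the first equality in the theorem (and hence of optimality of $\vartheta^\ast$) is incomplete. A second, smaller omission: in the duality induction the paper first truncates to $X\wedge n$ (defining $\mathcal{D}^n$) so that the upper semianalyticity and the selector-based gluing work, and only then passes to the limit via Lemma \ref{lem:1period.cont.below} and an interchange of suprema; your sketch asserts the needed measurability of $\mathcal{D}_{t+1}$ directly for unbounded $X$, which is where that truncation device would have to enter.
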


We start by investigating properties of the (robust) relative entropy 
and the graph of $\mathcal{M}_t$, which will ensure that measurable selection arguments
can be applied.
We then focus on deriving a duality for $\mathcal{E}_t$ and last prove the dynamic
programming principle.

\begin{lemma}[\text{\cite[Lemma 1.4.3.b]{dupuis2011weak}}]
\label{lem:H.is.borel}
	The relative entropy $H$ is Borel.
\end{lemma}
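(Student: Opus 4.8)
The plan is to realize $H$ as a sup/inf of genuinely Borel functions on $\mathfrak{P}(\Omega_1)\times\mathfrak{P}(\Omega_1)$. First I would recall the variational (Donsker--Varadhan / Gibbs) formula already used in Lemma \ref{lem:convex.H.C}: for $Q,P\in\mathfrak{P}(\Omega_1)$ one has
\[
H(Q,P)=\sup\Bigl\{ E_Q[f]-\log E_P[e^{f}] : f\in C_b(\Omega_1)\Bigr\}.
\]
Since $\Omega_1$ is Polish, $C_b(\Omega_1)$ is separable for, say, the topology of uniform convergence on a suitable class, or more cheaply: the supremum above can be restricted to a \emph{countable} family $\{f_n\}_{n\in\mathbb{N}}\subset C_b(\Omega_1)$ which is dense in the sense required (for instance, on a locally compact space one takes $C_c$ and a countable dense subset; on a general Polish space one embeds into a compact metric space or uses that bounded continuous functions can be approximated in the relevant sense by a countable set). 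The point is that replacing the sup over all $f$ by a sup over a countable dense subset does not change the value of $H$, because $Q\mapsto E_Q[f]$ and $P\mapsto\log E_P[e^{f}]$ are (weakly) continuous in $f$ in the appropriate sense for bounded $f$.

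Next I would check that, for each fixed bounded continuous $f$, the map
\[
(Q,P)\longmapsto E_Q[f]-\log E_P[e^{f}]
\]
is Borel on $\mathfrak{P}(\Omega_1)\times\mathfrak{P}(\Omega_1)$ equipped with the (Borel $\sigma$-field generated by the) weak topology. This is standard: for bounded measurable $g$, the evaluation $\mu\mapsto\int g\,d\mu$ is Borel measurable on $\mathfrak{P}(\Omega_1)$, and for bounded continuous $g$ it is in fact continuous; since $f$ and $e^{f}$ are bounded continuous, both $Q\mapsto E_Q[f]$ and $P\mapsto E_P[e^{f}]$ are continuous, $E_P[e^{f}]\ge e^{-\|f\|_\infty}>0$, and $\log$ is continuous on $(0,\infty)$, so the whole expression is continuous, hence Borel. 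Taking the countable supremum over $f_n$ then exhibits $H$ as a countable supremum of Borel (indeed continuous) functions, hence Borel, as a map into $(-\infty,+\infty]$. (The value $+\infty$ is permitted; the Borel structure on $(-\infty,+\infty]$ is the usual one.)

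The main obstacle is the reduction from the supremum over \emph{all} $f\in C_b(\Omega_1)$ to a \emph{countable} subfamily while preserving equality in the Gibbs variational formula for \emph{every} pair $(Q,P)$, including those with $H(Q,P)=+\infty$ and those with $Q\not\ll P$. For the case $Q\ll P$ with finite entropy this is routine, since the optimal $f$ is (a truncation of) $\log\frac{dQ}{dP}$ and can be approximated by bounded continuous functions. The delicate case is $Q\not\ll P$ or $H(Q,P)=+\infty$: here one must produce, from a countable family, test functions whose Gibbs functional tends to $+\infty$; this works because one can first approximate the non-continuous near-optimal $f$ (e.g.\ $n\cdot 1_{A}$ for a Borel set $A$ with $Q(A)>0=P(A)$) by bounded continuous functions using regularity of the measures $Q,P$ on the Polish space $\Omega_1$, and a diagonal argument over a countable dense family of continuous functions still drives the functional to $+\infty$. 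Since this is the content of \cite[Lemma 1.4.3.b]{dupuis2011weak}, I would simply invoke that reference for the measurability statement and record the above as the underlying argument; no separate calculation is needed beyond citing it.
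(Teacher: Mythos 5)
Your starting point is the same as the paper's -- the variational formula $H(Q,P)=\sup\{E_Q[Z]-\log E_P[\exp(Z)]: Z \text{ bounded continuous}\}$, with each individual functional $(Q,P)\mapsto E_Q[Z]-\log E_P[\exp(Z)]$ jointly continuous for the weak topology -- but from there you take a detour that the paper does not need and that, as literally stated, does not work. You propose to replace the supremum over all of $C_b(\Omega_1)$ by a supremum over a countable ``dense'' subfamily so that $H$ becomes a countable supremum of Borel functions; but for a non-compact Polish space $\Omega_1$ the space $C_b(\Omega_1)$ is not separable in the sup norm, so the countable dense subset you invoke does not exist as such, and the repairs you gesture at (compactification, Lipschitz subclasses, diagonal arguments for the $Q\not\ll P$ case) are exactly the part you never carry out, falling back in the end on citing \cite[Lemma 1.4.3.b]{dupuis2011weak}, i.e.\ on the statement to be proved. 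The point you are missing is that no countable reduction is needed: a supremum of an \emph{arbitrary} family of continuous functions is lower semicontinuous, so $\{H\leq c\}$ is closed for every $c$ and $H$ is Borel (indeed lower semicontinuous). This is precisely the paper's two-line proof, the only other ingredient being the passage from bounded measurable to bounded continuous test functions in the variational formula, which the paper justifies by approximation in measure (Lusin), and which you correctly identify via the Donsker--Varadhan form. So your plan can be salvaged, but the step you single out as the ``main obstacle'' is an artifact of insisting on countability; dropping it gives both a shorter and a genuinely complete argument.
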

\begin{proof}
	Any Borel function can be approximated in measure by continuous functions,
	so it follows as in the proof of Lemma \ref{lem:convex.H.C} that
	\[ H(Q,P)=\sup\{ E_Q[Z] -\log E_P[\exp(Z)] : Z\text{ is bounded and continuous} \}. \]
	Therefore $\{H\leq c\}$ is closed for any real number $c$ showing that $H$ is Borel.
\end{proof}

The so-called chain rule for the relative entropy is well known, a proof can be found
e.g.~in Appendix C3 of the book by Dupuis and Ellis \cite{dupuis2011weak}.
However, since we are dealing with universally measurable kernels  
and also in order to be self-contained, a proof is given in the Appendix.
For the link between dynamic risk measures and this chain rule see 
e.g.~\cite{cheridito2011composition} in the dominated, 
and \cite{bartl2016conditional,lacker2015liquidity} in the nondominated setting.

\begin{lemma}
\label{lem:entropie.sum}
	Let $0\leq t\leq T-1$ and $P,Q\in\mathfrak{P}(\Omega_{T-t})$. Then 
	\[ H(Q,P)=\sum_{s=t}^{T-1} E_Q[H(Q_s(\cdot),P_s(\cdot))] \]	
	where $Q_s$ and $P_s$ are universally measurable kernels such that 
	$Q=Q_t\otimes\cdots\otimes Q_{T-1}$ and $P=P_t\otimes\cdots\otimes P_{T-1}$.	
\end{lemma}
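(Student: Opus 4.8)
The plan is to first reduce to the two-step case $T-t=2$ by induction, and then handle the two-step case directly. For the induction, suppose the formula holds for chains of length $T-1-t$; writing $Q = Q_t \otimes \bar Q$ and $P = P_t \otimes \bar P$ where $\bar Q = Q_{t+1}\otimes\cdots\otimes Q_{T-1}$ and similarly for $\bar P$ are measures on $\Omega_{T-t-1}$ depending on $\omega_1\in\Omega_1$, the two-step case gives $H(Q,P) = E_{Q_t}[H(\bar Q(\omega_1,\cdot),\bar P(\omega_1,\cdot))] + H(Q_t,P_t)$, and the inductive hypothesis applied $\omega_1$-wise rewrites the first term as $\sum_{s=t+1}^{T-1}E_Q[H(Q_s,P_s)]$. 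One point to check here is measurability: $\omega_1\mapsto H(\bar Q(\omega_1,\cdot),\bar P(\omega_1,\cdot))$ is universally measurable because $H$ is Borel (Lemma \ref{lem:H.is.borel}) and $\omega_1\mapsto(\bar Q(\omega_1,\cdot),\bar P(\omega_1,\cdot))$ is universally measurable by the (universal) measurability of the kernels, so the composition is universally measurable and the outer integral is well defined; this also makes the decomposition into kernels $Q=Q_t\otimes\cdots\otimes Q_{T-1}$ legitimate via a measurable disintegration theorem for Polish spaces.

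For the two-step case $Q = Q_1\otimes Q_2$, $P = P_1 \otimes P_2$ on $\Omega_1\times\Omega_1$, the argument splits according to whether $Q\ll P$. If $Q\not\ll P$ then the left side is $+\infty$; one checks that the right side is also $+\infty$, because $Q\ll P$ fails exactly when either $Q_1\not\ll P_1$ (so $H(Q_1,P_1)=+\infty$) or $Q_1\{\omega_1 : Q_2(\omega_1,\cdot)\not\ll P_2(\omega_1,\cdot)\}>0$ (so $E_{Q_1}[H(Q_2(\cdot),P_2(\cdot))]=+\infty$). If $Q\ll P$, then on the set where densities are positive one has the factorization $\frac{dQ}{dP}(\omega_1,\omega_2) = \frac{dQ_1}{dP_1}(\omega_1)\cdot\frac{dQ_2(\omega_1,\cdot)}{dP_2(\omega_1,\cdot)}(\omega_2)$, $Q$-a.s.; taking $\log$ and using $\log(ab)=\log a + \log b$ gives
\[
\log\frac{dQ}{dP} = \log\frac{dQ_1}{dP_1} + \log\frac{dQ_2(\omega_1,\cdot)}{dP_2(\omega_1,\cdot)}
\]
$Q$-a.s., and integrating against $Q$ (using Fubini, first in $\omega_2$ then $\omega_1$, noting that the first summand depends only on $\omega_1$ and $Q$ has first marginal $Q_1$) yields $H(Q,P) = H(Q_1,P_1) + E_{Q_1}[H(Q_2(\cdot),P_2(\cdot))]$. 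The only subtlety is justifying Fubini when the integrand need not be integrable: here one should argue that $x\mapsto x\log x$ is bounded below (by $-1/e$), so the negative parts are integrable and Tonelli applies to split the integral; the resulting sum of two $[-1/e,+\infty]$-valued quantities is unambiguous.

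I expect the main obstacle to be purely technical rather than conceptual: carefully establishing the density factorization and the measurability of the disintegration kernels in the \emph{universally} measurable (not Borel) category, since the standard references (e.g.\ Dupuis--Ellis Appendix C) work with Borel or dominated setups. Concretely, one needs that for a universally measurable stochastic kernel $P_2$ and a measure $Q = Q_1\otimes Q_2$ with $Q\ll P=P_1\otimes P_2$, a jointly (universally) measurable version of the conditional density $\frac{dQ_2(\omega_1,\cdot)}{dP_2(\omega_1,\cdot)}$ exists and multiplies correctly. This can be obtained by working on the completion and invoking a measurable Radon--Nikodym / disintegration theorem for Polish spaces; once that is in place, the rest is the routine computation sketched above. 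The induction step then immediately promotes the two-step identity to the full chain rule.
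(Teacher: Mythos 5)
Your plan is correct and follows essentially the same route as the paper's proof: both rest on the equivalence of $Q\ll P$ with kernel-wise absolute continuity (established via a jointly measurable Radon--Nikodym derivative \`a la Doob), the factorization of $dQ/dP$ into kernel densities, and the elementary bound $x(\log x)^-\leq 1/e$ to make the negative parts integrable so the sum/integral can be split. The only difference is cosmetic: you reduce to the two-step case and induct, whereas the paper treats the whole product $Q_t\otimes\cdots\otimes Q_{T-1}$ in one pass.
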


\begin{lemma}
\label{lem:entropie.sum.robust}
	For any $0\leq t\leq T-1$, the function
	\[ \Omega_t\times\mathfrak{P}(\Omega_{T-t})\to[-\infty,0],\quad
	(\omega,Q)\mapsto -H(Q,\mathcal{P}_t^T(\omega))\]
	is upper semianalytic. Moreover, it holds
	\begin{align*} 
	H(Q,\mathcal{P}_t^T(\omega))
	&=H(Q_t,\mathcal{P}_t(\omega)) +E_Q[ H(Q'(\cdot),\mathcal{P}_{t+1}^T(\omega,\cdot))]
	\end{align*}
	where $Q'$ is a universally measurable kernel such that $Q=Q_t\otimes Q'$.
\end{lemma}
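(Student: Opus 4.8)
The plan is to prove the two claims separately. Upper semianalyticity should follow quickly from $H$ being Borel together with the analyticity of the graph of $\mathcal{P}_t^T$, via the projection theorem. The chain rule I would deduce from the two-factor version of the entropy chain rule in Lemma~\ref{lem:entropie.sum} by interchanging an infimum over transition kernels with an integral; this interchange, which rests on a measurable selection, is the step I expect to be the main obstacle.

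For upper semianalyticity, I would use that $\mathcal{P}_t^T$ has analytic graph $G:=\{(\omega,P)\in\Omega_t\times\mathfrak{P}(\Omega_{T-t}):P\in\mathcal{P}_t^T(\omega)\}$ (see \cite{bouchard2015arbitrage,nutz2013constructing}). Since by Lemma~\ref{lem:H.is.borel} the map $(Q,P)\mapsto-H(Q,P)$ is Borel, the set $\{(\omega,Q,P):(\omega,P)\in G,\ -H(Q,P)>c\}$ is analytic for every $c\in\mathbb{R}$, being the intersection of an analytic and a Borel set; its projection onto $\Omega_t\times\mathfrak{P}(\Omega_{T-t})$ is the superlevel set $\{(\omega,Q):-H(Q,\mathcal{P}_t^T(\omega))>c\}$, which is therefore analytic. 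As $-H(Q,\mathcal{P}_t^T(\omega))=\sup\{-H(Q,P):P\in\mathcal{P}_t^T(\omega)\}$, this shows $(\omega,Q)\mapsto-H(Q,\mathcal{P}_t^T(\omega))$ is upper semianalytic (cf.\ Appendix~\ref{sec:app.analytic}).

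For the chain rule I would fix $\omega\in\Omega_t$ and $Q\in\mathfrak{P}(\Omega_{T-t})$ and disintegrate $Q=Q_t\otimes Q'$, with $Q_t\in\mathfrak{P}(\Omega_1)$ and $Q'$ a universally measurable kernel. Every $P\in\mathcal{P}_t^T(\omega)$ factorizes as $P=P_t\otimes P'$ with $P_t\in\mathcal{P}_t(\omega)$ and $P'$ a universally measurable selector of $\Omega_1\ni\omega_{t+1}\mapsto\mathcal{P}_{t+1}^T(\omega,\omega_{t+1})$, and conversely every such pair defines an element of $\mathcal{P}_t^T(\omega)$. Applying Lemma~\ref{lem:entropie.sum} at times $t$ and $t+1$ gives $H(Q,P)=H(Q_t,P_t)+E_Q[H(Q'(\cdot),P'(\cdot))]$ (both sides possibly $+\infty$). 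Taking the infimum over $P\in\mathcal{P}_t^T(\omega)$ and pulling the infimum over $P'$ inside -- the first summand not involving $P'$ -- would reduce the claim to
\begin{equation}
\label{eq:proposal.exchange}
\inf_{P'}E_Q\big[H(Q'(\cdot),P'(\cdot))\big]=E_Q\big[H(Q'(\cdot),\mathcal{P}_{t+1}^T(\omega,\cdot))\big],
\end{equation}
the infimum being over selectors $P'$ of $\mathcal{P}_{t+1}^T(\omega,\cdot)$.

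The inequality ``$\geq$'' in \eqref{eq:proposal.exchange} is immediate from the pointwise bound $H(Q'(\omega_{t+1}),P'(\omega_{t+1}))\geq H(Q'(\omega_{t+1}),\mathcal{P}_{t+1}^T(\omega,\omega_{t+1}))$, and ``$\leq$'' is the point where I expect to work: here I would argue by measurable selection. First one checks that $\omega_{t+1}\mapsto H(Q'(\omega_{t+1}),\mathcal{P}_{t+1}^T(\omega,\omega_{t+1}))$ is universally measurable -- the map $(\omega_{t+1},\tilde Q)\mapsto-H(\tilde Q,\mathcal{P}_{t+1}^T(\omega,\omega_{t+1}))$ is upper semianalytic (the first part applied at time $t+1$, restricted to the section at $\omega$, using that sections of analytic sets are analytic), and one composes with the universally measurable kernel $\omega_{t+1}\mapsto Q'(\omega_{t+1})$. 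After replacing $Q'$ by a Borel kernel coinciding with it $Q_t$-a.s.\ (so that $(\omega_{t+1},P'')\mapsto H(Q'(\omega_{t+1}),P'')$ is Borel), the Jankov--von Neumann selection theorem, applied to the analytic set
\[\big\{(\omega_{t+1},P''):P''\in\mathcal{P}_{t+1}^T(\omega,\omega_{t+1}),\ H(Q'(\omega_{t+1}),P'')\leq\big(H(Q'(\omega_{t+1}),\mathcal{P}_{t+1}^T(\omega,\omega_{t+1}))+\varepsilon\big)\wedge\varepsilon^{-1}\big\}\]
(whose projection onto $\Omega_1$ is all of $\Omega_1$), would produce for each $\varepsilon>0$ a universally measurable selector $P'_\varepsilon$. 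Letting $\varepsilon\downarrow0$ and using monotone convergence on the upper bound $(H(Q'(\cdot),\mathcal{P}_{t+1}^T(\omega,\cdot))+\varepsilon)\wedge\varepsilon^{-1}$ of $H(Q'(\cdot),P'_\varepsilon(\cdot))$ would give ``$\leq$'' in \eqref{eq:proposal.exchange}, completing the proof of the chain rule; the degenerate case $t=T-1$, where the last summand is absent, is trivial.
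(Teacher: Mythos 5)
Your route is genuinely different from the paper's and much of it can be made to work, but the key selection step has a gap as written. The set to which you apply Jankov--von Neumann is the intersection of the graph of $\omega_{t+1}\mapsto\mathcal{P}_{t+1}^T(\omega,\omega_{t+1})$ with a sublevel set of the function $(\omega_{t+1},P'')\mapsto H(Q'(\omega_{t+1}),P'')-H(Q'(\omega_{t+1}),\mathcal{P}_{t+1}^T(\omega,\omega_{t+1}))$. That function is upper semianalytic (a Borel function plus the upper semianalytic function $-H(Q'(\cdot),\mathcal{P}_{t+1}^T(\omega,\cdot))$), and sublevel sets of upper semianalytic functions need not be analytic, so Jankov--von Neumann does not apply as stated. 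Moreover, because of the truncation at $\varepsilon^{-1}$, the $\omega_{t+1}$-sections of your set are empty wherever $H(Q'(\omega_{t+1}),\mathcal{P}_{t+1}^T(\omega,\omega_{t+1}))>\varepsilon^{-1}$ (in particular wherever it is infinite), so the claim that its projection is all of $\Omega_1$ fails in general, and the subsequent limit argument with the truncated bound breaks down in exactly those cases. Both problems disappear if you replace this step by \cite[Proposition 7.50]{bertsekas1978stochastic}, which directly yields, for every $\varepsilon>0$, a universally measurable selector $P'_\varepsilon(\cdot)\in\mathcal{P}_{t+1}^T(\omega,\cdot)$ with $H(Q'(x),P'_\varepsilon(x))\leq H(Q'(x),\mathcal{P}_{t+1}^T(\omega,x))+\varepsilon$ for all $x$; the inequality ``$\leq$'' of your exchange identity then follows at once, without truncation or monotone convergence.

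Beyond this, two structural facts are asserted without proof and are not contained in the paper: (i) analyticity of the graph of $\mathcal{P}_t^T$ (used for your first assertion and for selecting along $\mathcal{P}_{t+1}^T(\omega,\cdot)$), and (ii) the pasting property that $P_t\otimes P'\in\mathcal{P}_t^T(\omega)$ for every $P_t\in\mathcal{P}_t(\omega)$ and every universally measurable selector $P'$ of $\mathcal{P}_{t+1}^T(\omega,\cdot)$; the latter is needed for your infimum to decouple, and it is not immediate because each $P'(\omega_{t+1})$ is a product of one-step selectors that are not given measurably in $\omega_{t+1}$ (one has to pass to the canonical Borel disintegration kernels of $P_t\otimes P'$ and correct them on a null set by measurable selectors of the one-step sets; (i) is proved by a similar kernel characterization). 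Both facts are true, but the paper's proof deliberately avoids them: it establishes $H(Q,\mathcal{P}_t^T(\omega))=\sum_{s=t}^{T-1}E_Q[H(Q_s(\cdot),\mathcal{P}_s(\omega,\cdot))]$ using Lemma \ref{lem:entropie.sum} together with $\varepsilon$-optimal \emph{one-step} selectors from \cite[Proposition 7.50]{bertsekas1978stochastic} (so only the one-step graphs $\mathcal{P}_s$ need be analytic), reads off upper semianalyticity from this sum, and then obtains the displayed two-factor formula by applying the same identity at time $t+1$ and integrating with respect to $Q_t$. If you repair the selection step as above and supply (i) and (ii), your argument goes through; the paper's telescoping is shorter and needs fewer prerequisites.
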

\begin{proof}	
	Every probability $Q\in\mathfrak{P}(\Omega_{T-t})$ can be written as 
	$Q=Q_t\otimes\cdots\otimes Q_{T-1}$ where $Q_s$ are the kernels
	from Remark \ref{rem:kernels.measurable.and.derivative.measurable}, i.e.~such that
	\[ \Omega_{s-t}\times\mathfrak{P}(\Omega_{T-t})\to\mathfrak{P}(\Omega_1),
	\quad(\bar{\omega},Q)\mapsto Q_s(\bar{\omega})\] 
	is Borel.

	(a) We start by showing that
	\begin{align}
	\label{eq:H.equal.sum}
	\Omega_t\times\mathfrak{P}(\Omega_{T-t})\to[-\infty,0],
	\quad (\omega,Q)\mapsto \sum_{s=t}^{T-1} -E_Q[ H(Q_s(\cdot),\mathcal{P}_s(\omega,\cdot))]
	\end{align}
	is upper semianalytic. Fix some $t\leq s\leq T-1$. 
	In the sequel, $\omega$ will refer to elements in $\Omega_t$
	and $\bar{\omega}$ to elements in $\Omega_{s-t}$.
	Since $(\bar{\omega},Q)\mapsto Q_s(\bar{\omega})$ is Borel by construction
	and the entropy $H$ is Borel by Lemma \ref{lem:H.is.borel}, the composition
	\begin{align*}
	\Omega_t\times\Omega_{s-t}\times\mathfrak{P}(\Omega_{T-t})\times\mathfrak{P}(\Omega_1)
	&\to[-\infty,0],\quad
	(\omega,\bar{\omega},Q,R)
	\mapsto -H(Q_s(\bar{\omega}),R)
	\end{align*}
	is Borel as well.
	As the graph of $\mathcal{P}_s$ is analytic,
	it follows from \cite[Proposition 7.47]{bertsekas1978stochastic} that
	\begin{align}
	\label{eq:H(Q,scrP).is.lsa}
	\Omega_t\times\Omega_{s-t}\times\mathfrak{P}(\Omega_{T-t})
	\to[-\infty,0],\quad
	&(\omega,\bar{\omega},Q)
	\mapsto 
	-H(Q_s(\bar{\omega}),\mathcal{P}_s(\omega,\bar{\omega}))
	\end{align}
	is upper semianalytic.
	Moreover, \cite[Proposition 7.50]{bertsekas1978stochastic} guarantees that for any $\varepsilon>0$,
	there exists a universally measurable kernel $P^\varepsilon_s$ such that
	\begin{align}
	\label{eq:selector.Ps'}
	\begin{cases}
	P^\varepsilon_s(\omega,\bar{\omega},Q)\in\mathcal{P}_s(\omega,\bar{\omega}), \\
	H(Q_s(\bar{\omega}),P^\varepsilon_s(\omega,\bar{\omega},Q))
	\leq H(Q_s(\bar{\omega}),\mathcal{P}_s(\omega,\bar{\omega}))+\varepsilon
	\end{cases}
	\end{align}
	for all $(\omega,\bar{\omega},Q)$. This will be used in part (b).
	Further, since
	\begin{align*}
	\Omega_t\times\mathfrak{P}(\Omega_{T-t})
	\to[-\infty,0],\quad
	(\omega,Q)
	\mapsto 
	-E_Q[H(Q_s(\cdot),\mathcal{P}_s(\omega,\cdot))]
	\end{align*}
	is just \eqref{eq:H(Q,scrP).is.lsa} integrated with respect to $Q(d\bar{\omega})$,
	an application of Lemma \ref{lem:integral.is.measurable} shows that this mapping
	is upper semianalytic.
	Finally, the fact that sums of upper semianalytic functions are again upper semianalytic 
	(see \cite[Lemma 7.30]{bertsekas1978stochastic}) implies that
	\eqref{eq:H.equal.sum} is upper semianalytic as was claimed.

	(b)
	Fix some $\omega\in\Omega_t$ and $Q\in\mathfrak{P}(\Omega_{T-t})$.
	From Lemma \ref{lem:entropie.sum}, it follows that
	\[ H(Q,\mathcal{P}_t^T(\omega))
	=\inf_{P\in\mathcal{P}_t^T(\omega)}\sum_{s=t}^{T-1} E_Q[ H(Q_s(\cdot),P_s(\cdot))]
	\geq \sum_{s=t}^{T-1} E_Q\big[ H(Q_s(\cdot),\mathcal{P}_s(\omega,\cdot))\big].\]
	For the other inequality, let $\varepsilon>0$ be arbitrary and 
	$P_s^\varepsilon$ be the kernels from \eqref{eq:selector.Ps'}.
	Recall that $Q$ and $\omega$ are fixed so that 
	\[P'_s\colon\Omega_{s-t}\to \mathfrak{P}(\Omega_1),\quad
	\bar{\omega}\mapsto P_s^\varepsilon(\omega,\bar{\omega},Q)\] 
	is still universally measurable by \cite[Lemma 7.29]{bertsekas1978stochastic}.
	Then it follows that
	\[P':=P_t'\otimes\cdots\otimes P_{T-1}'\in\mathcal{P}_t^T(\omega)\]
	and, using  Lemma \ref{lem:entropie.sum} once more, that
	\begin{align*}
	&\sum_{s=t}^{T-1} E_Q\big[ H(Q_s(\cdot),\mathcal{P}_s(\omega,\cdot))\big]
	\geq \sum_{s=t}^{T-1} E_Q\big[ H(Q_s(\cdot),P_s'(\cdot)) - \varepsilon   \big]\\
	&= H(Q,P') -(T-t)\varepsilon 
	\geq H(Q,\mathcal{P}_t^T(\omega)) -(T-t)\varepsilon.
	\end{align*}
	As $\varepsilon$ was arbitrary, this shows the desired inequality.
	
	(c)	
	Finally, kernels are almost-surely unique so that
	\[Q'=Q_{t+1}\otimes\cdots\otimes Q_{T-1}\quad Q_t\text{-almost surely.}\]
	Hence it follows that
	\[ H(Q'(\cdot),\mathcal{P}_{t+1}(\omega,\cdot))
	=\sum_{s= t+1}^{T-1} E_{Q'(\cdot)}[H(Q_s(\cdot),\mathcal{P}_s(\omega,\cdot))]
	\quad Q_t\text{-almost surely}.\]
	It only remains to integrate this equation with respect to $Q_t$.
\end{proof}

Fix a measure $Q=Q_t\otimes\cdots\otimes Q_{T-1}\in\mathfrak{P}(\Omega_{T-t})$
and $\omega\in\Omega_t$.
An elementary computation shows that
$Q$ is a martingale measure for $(S_s(\omega,\cdot))_{t\leq s\leq T}$ if and only if
$E_Q[|\Delta S_{s+1}(\omega,\cdot)|]<+\infty$ and
\[ E_{Q_s(\bar{\omega})}[\Delta S_{s+1}(\omega,\bar{\omega},\cdot)]=0 \quad
\text{for }  Q_t\otimes\cdots\otimes Q_{s-1}\text{-almost every }\bar{\omega}\in\Omega_{s-t}\]
and every $t\leq s\leq T-1$. This is used in the sequel without reference.

\begin{lemma}
\label{lem:graph.Q.analytic}
	The graph of $\mathcal{M}_t^T$ is analytic.
\end{lemma}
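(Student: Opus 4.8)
The plan is to write the graph of $\mathcal{M}_t^T$ as a countable intersection of analytic sets, using the already-established fact (Lemma \ref{lem:entropie.sum.robust}) that $(\omega,Q)\mapsto -H(Q,\mathcal{P}_t^T(\omega))$ is upper semianalytic, together with the measurable-kernel construction from Remark \ref{rem:kernels.measurable.and.derivative.measurable}. Concretely, recall that every $Q\in\mathfrak{P}(\Omega_{T-t})$ decomposes as $Q=Q_t\otimes\cdots\otimes Q_{T-1}$ where each map $(\bar\omega,Q)\mapsto Q_s(\bar\omega)$ is Borel; using the characterization of the martingale property recorded just before the lemma, $Q\in\mathcal{M}_t^T(\omega)$ amounts to three conditions: (i) $E_Q[|\Delta S_{s+1}(\omega,\cdot)|]<+\infty$ for $t\le s\le T-1$; (ii) $E_{Q_s(\bar\omega)}[\Delta S_{s+1}(\omega,\bar\omega,\cdot)]=0$ for $Q_t\otimes\cdots\otimes Q_{s-1}$-a.e.\ $\bar\omega$, for each such $s$; and (iii) $E_Q[X(\omega,\cdot)^- + Y(\omega,\cdot)] + H(Q,\mathcal{P}_t^T(\omega)) < +\infty$.

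First I would handle the integrability and entropy condition (iii): the map $(\omega,Q)\mapsto E_Q[X(\omega,\cdot)^-+Y(\omega,\cdot)]$ is upper semianalytic by Lemma \ref{lem:integral.is.measurable} applied to the upper semianalytic integrand $X^-+Y$ (note $-(X^-+Y)$ is upper semianalytic since $X$ and $-Y$ are; one must check $X^-$ is handled correctly, using that $X$ is $(-\infty,+\infty]$-valued so $X^-$ is bounded where relevant — more carefully, $X^-=\max(-X,0)$ and $-X$ is lower semianalytic, but $X^-$ is actually the positive part of $-X$, which is lower semianalytic, hence $-X^-$ need not be u.s.a.; I would instead note $E_Q[X^-]=E_Q[(-X)^+]$ and argue via the representation or simply observe that $X^-$ is Borel-measurable enough after noting $\{X<0\}$ issues — the cleanest route is that $X^-+Y = \max(-X,0)+Y$ and we only need $(\omega,Q)\mapsto E_Q[X^-+Y]$ lower semianalytic or Borel-like for the complement of a polar set; here I would follow whatever convention Lemma \ref{lem:integral.is.measurable} provides). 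Granting that, $(\omega,Q)\mapsto E_Q[X^-+Y] - H(Q,\mathcal{P}_t^T(\omega))$ is upper semianalytic (sum of u.s.a.\ functions, using Lemma \ref{lem:entropie.sum.robust}), so the set where it is $>-\infty$, i.e.\ where (iii) holds, is analytic by \cite[Lemma 7.30]{bertsekas1978stochastic}.

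Next I would handle the martingale conditions (i)–(ii). For (i), $(\omega,Q)\mapsto E_Q[|\Delta S_{s+1}(\omega,\cdot)|]$ is Borel (integral of a Borel function against $Q$, with $\omega$ entering Borel-measurably through $S$), so $\{(\omega,Q): E_Q[|\Delta S_{s+1}(\omega,\cdot)|]<+\infty\}$ is Borel. For (ii), for fixed $s$ consider the Borel map $(\omega,\bar\omega,Q)\mapsto E_{Q_s(\bar\omega)}[\Delta S_{s+1}(\omega,\bar\omega,\cdot)]\in\mathbb{R}^d$ (Borel because $(\bar\omega,Q)\mapsto Q_s(\bar\omega)$ is Borel and $\Delta S_{s+1}$ is Borel; on the set where the inner integral is not well defined one sets it to, say, $0$, a Borel modification); the condition ``this vanishes for $Q_t\otimes\cdots\otimes Q_{s-1}$-a.e.\ $\bar\omega$'' is equivalent to ``$\int |E_{Q_s(\bar\omega)}[\Delta S_{s+1}(\omega,\bar\omega,\cdot)]| \wedge 1 \, (Q_t\otimes\cdots\otimes Q_{s-1})(d\bar\omega)=0$'', and the integrand is Borel in $(\omega,\bar\omega,Q)$ while the integrating measure $Q_t\otimes\cdots\otimes Q_{s-1}$ depends Borel-measurably on $Q$; hence by Lemma \ref{lem:integral.is.measurable} (or directly, since everything here is Borel) the set of $(\omega,Q)$ satisfying (ii) for this $s$ is Borel. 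Intersecting over the finitely many $s$, and with the analytic set from (iii), gives that the graph of $\mathcal{M}_t^T$ is analytic.

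The main obstacle I anticipate is bookkeeping around measurability of $E_{Q_s(\bar\omega)}[\Delta S_{s+1}]$ on the (possibly non-trivial) set where $\Delta S_{s+1}$ fails to be $Q_s(\bar\omega)$-integrable, and the parallel subtlety for $E_Q[X^-]$ when $X$ takes the value $+\infty$ — one wants $E_Q[X^-]$ to be a genuine Borel/analytic function of $(\omega,Q)$ rather than merely defined off a polar set, and one must invoke the precise statement of Lemma \ref{lem:integral.is.measurable} for lower/upper semianalytic integrands. Everything else (composition of Borel with analytic graphs via \cite[Proposition 7.47]{bertsekas1978stochastic}, countable intersections preserving analyticity via \cite[Lemma 7.30]{bertsekas1978stochastic}) is routine.
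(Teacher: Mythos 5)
Your proposal follows essentially the same route as the paper: express membership in $\mathcal{M}_t^T(\omega)$ through the entropy/integrability condition (handled via Lemma \ref{lem:entropie.sum.robust} and Lemma \ref{lem:integral.is.measurable}, giving an analytic set as a superlevel set of an upper semianalytic function) and the martingale conditions (handled via the Borel kernels of Remark \ref{rem:kernels.measurable.and.derivative.measurable} and a twofold application of Lemma \ref{lem:integral.is.measurable}, giving Borel sets -- your truncation by $\wedge 1$ against the marginal versus the paper's condition $E_Q[|E_{Q_s(\cdot)}[\Delta S_{s+1}(\omega,\cdot)]|]=0$ is a cosmetic difference), and then intersect finitely many analytic sets. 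The one point you leave unresolved is in fact a non-issue, and your worry that $-X^-$ ``need not be u.s.a.''\ is incorrect: $-(X^-+Y)=X\wedge 0-Y$, and $X\wedge 0$ is upper semianalytic because $\{X\wedge 0\geq a\}$ equals $\emptyset$ for $a>0$ and $\{X\geq a\}$ otherwise; this is precisely how the paper argues, working with $Z:=X\wedge 0-Y$ so that $(\omega,Q)\mapsto E_Q[Z(\omega,\cdot)]-H(Q,\mathcal{P}_t^T(\omega))$ is upper semianalytic and the set where it exceeds $-\infty$ is analytic.
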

\begin{proof}
	First, notice that $Z:=X\wedge 0-Y$ is upper semianalytic.
	This follows from the fact that $\{X\wedge 0 \geq a\}$ equals $\emptyset$ if $a>0$ 
	and $\{X\geq a\}$ else and that the sum of upper semianalytic functions remains upper semianalytic.
	Therefore, an application of Lemma \ref{lem:integral.is.measurable} shows that 	
	\[\Omega_t\times\mathfrak{P}(\Omega_{T-t})\to[-\infty,0],
	\quad(\omega,Q)\mapsto E_Q[Z(\omega,\cdot)] \] 
	is upper semianalytic. 
	Then, since	$(\omega,Q)\mapsto -H(Q,\mathcal{P}_t^T(\omega))$ is upper semianalytic
	by Lemma \ref{lem:entropie.sum.robust} and the sum of upper semianalytic
	mappings is again upper semianalytic,	it follows that
	\[ A:=\{(\omega,Q): E_Q[Z(\omega,\cdot)] -  H(Q,\mathcal{P}_t^T(\omega))>-\infty\}\]
	is an analytic set.	
	The missing part now is the martingale property.
	First, notice that
	\[\Omega_t\times\mathfrak{P}(\Omega_{T-t})\to[0,+\infty],\quad
	(\omega,Q)\mapsto E_Q[|\Delta S_{s+1}(\omega,\cdot)|]  \]
	is Borel by Lemma \ref{lem:integral.is.measurable}.
	As before, for every $Q\in\mathfrak{P}(\Omega_{T-t})$, 
	we will write $Q=Q_t\otimes\cdots\otimes Q_{T-1}$ for the kernels $Q_s$ 
	from Remark \ref{rem:kernels.measurable.and.derivative.measurable}.
	Then, since $(\bar{\omega},Q)\mapsto Q_s(\bar{\omega})$ is Borel,
	a twofold application of Lemma \ref{lem:integral.is.measurable} shows that
	\[\Omega_t\times\mathfrak{P}(\Omega_{T-t})\to[0,+\infty],\quad
	(\omega,Q)\mapsto E_Q[ |E_{Q_s(\cdot)}[\Delta S_{s+1}(\omega,\cdot)]|]  \]
	is Borel. 
	Thus 
	\[ B_s:=\{(\omega,Q): E_Q[|\Delta S_{s+1}(\omega,\cdot)|] <+\infty \text{ and } 
	E_Q[ |E_{Q_s(\cdot)}[\Delta S_{s+1}(\omega,\cdot)]|]=0\}\]
	is Borel which implies that
	\[\mathop{\mathrm{graph}}\mathcal{M}_t^T
	= \bigcap \{ B_s : t\leq s\leq T-1\}\cap A, \]  
	as the finite intersection of analytic sets, is itself analytic 
	(see \cite[Corollary 7.35.2]{bertsekas1978stochastic}).
\end{proof}

Define
\[ \mathcal{D}_t(\omega):=\sup_{Q\in\mathcal{M}_t^T(\omega)}
\Big( E_Q[X(\omega,\cdot)]-H(Q,\mathcal{P}_t^T(\omega))\Big)\]  
for all $0\leq t\leq T-1$ and $\omega\in\Omega_t$, and recall that
\[\mathcal{E}_t(\omega,x)
:=\inf_{h\in\mathbb{R}^d}\sup_{P\in\mathcal{P}_t(\omega)}\log 
E_P\Big[\exp\big(\mathcal{E}_{t+1}\big((\omega,\cdot),x+h\Delta S_{t+1}(\omega,\cdot)\big)\big)\Big]\]
for $(\omega,x)\in\Omega_t\times\mathbb{R}$.

\begin{proof}[\text{\bf Proof of Theorem \ref{thm:multiperiod} -- Duality}]
	We claim that
	\begin{align}
	\label{eq:induction.main.thm}
	\bigg\{
	\begin{array}{l}
	\mathcal{E}_t(\omega,x)=\mathcal{D}_t(\omega)+x\quad\text{and}\quad \mathcal{D}_t(\omega)\in(-\infty,+\infty]\\
	\text{for all $\omega\in\Omega_t$, $x\in\mathbb{R}$ and $0\leq t\leq T-1$.}
	\end{array}
	\end{align}
	The proof will be a backward induction.
	For $t=T-1$, \eqref{eq:induction.main.thm} is just the statement of Theorem \ref{thm:1peroiod}.

	Now assume that \eqref{eq:induction.main.thm} holds true for $t+1$.
	First, we artificially bound $X$ from above and then pass to the limit.
	More precisely, define
	\[ \mathcal{D}^n_s(\omega):=\sup_{Q\in\mathcal{M}_s^T(\omega)} 
	\big( E_Q[X(\omega,\cdot)\wedge n] - H(Q,\mathcal{P}_s^T(\omega))\big)\]
	for $s=t,t+1$ and $\omega\in\Omega_s$, and notice that $\mathcal{D}^n_s$ is upper 
	semianalytic. Indeed, since $X(\omega,\cdot)\wedge n$ is upper semianalytic, the mapping
	$(\omega,Q)\mapsto E_Q[X(\omega,\cdot)\wedge n]$ 
	is upper semianalytic by Lemma \ref{lem:integral.is.measurable}.
	Then Lemma \ref{lem:entropie.sum.robust} and the fact that
	the sum of upper semianalytic functions stays upper semianalytic  implies that
	\[ (\omega,Q)\mapsto E_Q[X(\omega,\cdot)\wedge n] - H(Q,\mathcal{P}_s^T(\omega))\]
	is upper semianalytic.
	Since the graph of $\mathcal{M}_s^T$ is analytic by Lemma \ref{lem:graph.Q.analytic},
	it follows from \cite[Proposition 7.47]{bertsekas1978stochastic} that $\mathcal{D}_s^n$ is upper semianalytic.
	Moreover,  \cite[Proposition 7.50]{bertsekas1978stochastic} guarantees that
	for any $\varepsilon>0$ there exists a universally measurable kernel 
	$Q^\varepsilon(\cdot)\in\mathcal{M}_{t+1}^T(\omega,\cdot)$ such that
	\begin{align} 
	\label{eq:Q.eps.in.dual}
	\mathcal{D}^n_{t+1}(\omega\otimes_t\cdot)\leq 
	E_{Q^\varepsilon(\cdot)}[X(\omega,\cdot)\wedge n] 
	- H({Q^\varepsilon(\cdot)},\mathcal{P}_{t+1}^T(\omega,\cdot))+\varepsilon.
	\end{align}
	By interchanging two suprema, it holds $\mathcal{D}_s=\sup_n \mathcal{D}_s^n$
	(for more details, see part (c) of the proof of Theorem \ref{thm:1peroiod}).
	In particular, $\mathcal{D}_s$ is upper semianalytic, 
	as the countable supremum over upper semianalytic functions.
	Therefore, it follows from Lemma \ref{lem:1period.cont.below} that
	$\mathcal{E}_t=\sup_n \mathcal{E}_t^n$
	where
	\[\mathcal{E}_t^n(\omega,x)
	:=\inf_{h\in\mathbb{R}^d}\sup_{P\in\mathcal{P}_t(\omega)}
	\log E_P[\exp( \mathcal{D}^n_{t+1}(\omega\otimes_t\cdot) + h\Delta S_{t+1}(\omega,\cdot)+x)].\]
	The goal now is to show that 
	$\mathcal{E}_t^n$ equals $\mathcal{D}_t^n$ for all $n$, from which it follows that
	\[ \mathcal{E}_t(\omega,x)
	=\sup_n \mathcal{E}_t^n(\omega,x)
	=\sup_n \mathcal{D}_t^n(\omega)+x
	=\mathcal{D}_t(\omega)+x\]
	and the proof is complete.
	To show that indeed 
	$\mathcal{E}_t^n(\omega,x)=\mathcal{D}_t^n(\omega)+x$,
	fix some $n$, $x$, and $\omega\in\Omega_t$.
	By Theorem \ref{thm:1peroiod}, it holds
	\[ \mathcal{E}^n_t(\omega,x)
	=\sup_{Q_t\in\mathcal{M}_t(Z)}  \big( E_{Q_t}[\mathcal{D}^n_{t+1}(\omega\otimes_t\cdot)]
	-H(Q_t,\mathcal{P}_t(\omega))\big) + x
	>-\infty\]
	where
	\[\mathcal{M}_t(Z):=\bigg\{ Q\in\mathfrak{P}(\Omega_1) : 
	\begin{array}{l}
	E_Q[\mathcal{D}_{t+1}^n(\omega\otimes_t\cdot)^- +|\Delta S_{t+1}(\omega,\cdot)|+Z]<+\infty,\\
	H(Q,\mathcal{P}_t(\omega))<+\infty \text{ and } E_Q[\Delta S_{t+1}(\omega,\cdot)]=0
	\end{array} \bigg\}\]
	and $Z\colon\Omega_1\to[0,+\infty)$ is an arbitrary universally measurable function.

	We start by showing that $\mathcal{E}_t^n(\omega,x)\leq\mathcal{D}_t^n(\omega)+x$.
	Fix some $\varepsilon>0$, let 
	$Q^\varepsilon(\cdot)\in\mathcal{M}_{t+1}^T(\omega,\cdot)$ 
	be the kernel from \eqref{eq:Q.eps.in.dual}, and define $Z\colon\Omega_1\to[0,+\infty)$,
	\[Z:= E_{Q^{\varepsilon}(\cdot)}\Big[X(\omega,\cdot)^-+Y(\omega,\cdot)
	+\sum_{s= t+2}^T |\Delta S_s(\omega,\cdot)|\Big]
	+H(Q^\varepsilon(\cdot),\mathcal{P}_{t+1}^T(\omega,\cdot)).  \]
	Then $Z$ is real-valued by the definition of $\mathcal{M}_{t+1}^T(\omega,\cdot)$
	and it follows from Lemma \ref{lem:integral.is.measurable}, 
	Lemma \ref{lem:entropie.sum.robust}, and 
	\cite[Proposition 7.44]{bertsekas1978stochastic}
	that $Z$ is universally measurable.
	Moreover,
	\begin{align}
	\label{eq:Q.times.Qeps.in.Ms}
	Q_t\otimes Q^\varepsilon\in\mathcal{M}_t^T(\omega)\quad\text{for any }Q_t\in\mathcal{M}_t(Z).
	\end{align} 
	To show this, fix some $Q_t\in\mathcal{M}_t(Z)$ and define $Q:=Q_t\otimes Q^\varepsilon$.
	Then an application of Lemma \ref{lem:entropie.sum.robust} yields
	\begin{align*} 
	H(Q,\mathcal{P}_t^T(\omega))
	&=H(Q_t,\mathcal{P}_t(\omega))+ E_{Q_t}[H(Q^\varepsilon(\cdot),\mathcal{P}_{t+1}^T(\omega,\cdot))]\\
	&\leq H(Q_t,\mathcal{P}_t(\omega)) + E_{Q_t}[Z]
	<+\infty.
	\end{align*}
	Moreover, it holds
	\[E_Q\Big[X(\omega,\cdot)^- +Y(\omega,\cdot)+ \sum_{s= t+1}^T|\Delta S_s(\omega,\cdot)|\Big]
	\leq E_{Q_t}[ |\Delta S_{t+1}(\omega,\cdot)|+ Z]
	<+\infty\]
	so that indeed $Q\in\mathcal{M}_t^T(\omega)$ and, therefore, 
	\begin{align*} 
	& E_{Q_t}[\mathcal{D}_{t+1}^n(\omega\otimes_t\cdot)]- H(Q_t,\mathcal{P}_t(\omega))\\
	&\leq E_{Q_t}[ E_{Q^\varepsilon(\cdot)}[X(\omega,\cdot)\wedge n]
		- H(Q^\varepsilon(\cdot),\mathcal{P}_{t+1}^T(\omega,\cdot))+\varepsilon] 
		- H(Q_t,\mathcal{P}_t(\omega))\\
	&= E_Q[X(\omega,\cdot)\wedge n]- H(Q,\mathcal{P}_t^T(\omega))+\varepsilon
	\leq \mathcal{D}_t^n(\omega)+\varepsilon.
	\end{align*}
	As $Q_t\in\mathcal{M}_t(Z)$ and $\varepsilon>0$ were arbitrary, it follows that 
	$\mathcal{E}_t^n(\omega,x)\leq\mathcal{D}_t^n(\omega)+x$.

	To show the other inequality, 
	i.e.~$\mathcal{E}_t^n(\omega,x)\geq\mathcal{D}_t^n(\omega)+x$,
	fix some measure $Q\in\mathcal{M}_t^T(\omega)$ which we write as
	\[Q=Q_t\otimes Q'\]
	for a measure $Q_t$ on $\Omega_1$ and a Borel kernel
	$Q'\colon\Omega_1\to\mathfrak{P}(\Omega_{T-t-1})$.
	Then
	\[Q_t\in\mathcal{M}_t(0)\qquad\text{and}\qquad 
	Q'(\cdot)\in\mathcal{M}_{t+1}^T(\omega,\cdot)\quad\text{$Q_t$-almost surely}\]
	where $\mathcal{M}_t(0)=\mathcal{M}_t(Z)$ for the function $Z\equiv 0$.
	Indeed, first notice that
	\[ H(Q_t,\mathcal{P}_t(\omega)) + E_{Q_t}[H(Q'(\cdot),\mathcal{P}_{t+1}^T(\omega,\cdot))]
	=H(Q,\mathcal{P}_t^T(\omega))
	<+\infty \]
	by Lemma \ref{lem:entropie.sum.robust}, so that 
	\[H(Q'(\cdot),\mathcal{P}_{t+1}^T(\omega,\cdot))<+\infty\quad Q_t\text{-almost surely}.\]
	Similarly, we conclude that 
	$E_{Q'(\cdot)}[X(\omega,\cdot)^-+Y(\omega,\cdot) + |\Delta S_s(\omega,\cdot)|]<+\infty$ $Q_t$-almost surely
	for all $t+2\leq s\leq T$.
	Thus it holds
	\[Q'(\cdot)\in\mathcal{M}_{t+1}^T(\omega,\cdot)\quad\text{$Q_t$-almost surely}.\]
	But then it follows from the definition of $\mathcal{D}_{t+1}^n$ that
	\[E_{Q'(\cdot)}[X(\omega,\cdot)\wedge n] 
	- H(Q'(\cdot),\mathcal{P}_{t+1}^T(\omega,\cdot))
	\leq \mathcal{D}^n_{t+1}(\omega\otimes_t\cdot) \]
	$Q_t$-almost surely, so that 
	\begin{align*} 
	E_{Q_t}[\mathcal{D}^n_{t+1}(\omega\otimes_t\cdot)^-] 
	&\leq E_{Q_t}\big[\big(E_{Q'(\cdot)}[X(\omega,\cdot)\wedge n] 
	- H(Q'(\cdot),\mathcal{P}_{t+1}^T(\omega,\cdot))\big)^-\big]\\
	&\leq E_Q[X(\omega,\cdot)^-] + H(Q,\mathcal{P}_t^T(\omega))	
	<+\infty
	\end{align*}
	by Jensen's inequality and Lemma \ref{lem:entropie.sum.robust}.
	Therefore, one has $Q_t\in\mathcal{M}_t(0)$ and it follows that
	\begin{align*}
	&E_Q[X(\omega,\cdot)\wedge n]-H(Q,\mathcal{P}_t^T(\omega))\\
	&=E_{Q_t}[ E_{Q'(\cdot)}[ X(\omega,\cdot)\wedge n] 
	- H(Q'(\cdot),\mathcal{P}_{t+1}^T(\omega,\cdot))] - H(Q_t,\mathcal{P}_t(\omega))\\
	&\leq \sup_{R\in\mathcal{M}_t(0)}\big( 
	E_{R}[\mathcal{D}^n_{t+1}(\omega\otimes_t\cdot)]- H(R,\mathcal{P}_t(\omega))\big)
	=\mathcal{E}_t^n(\omega,x)-x
	\end{align*}
	and as $Q\in\mathcal{M}_t^T(\omega)$ was arbitrary, that indeed
	$\mathcal{D}_t^n(\omega)+x\leq \mathcal{E}_t(\omega,x)$.
	Coupled with the other inequality which was shown before, it holds
	$\mathcal{E}_t(\omega,x)=\mathcal{D}_t^n(\omega)+x$ and the proof is complete. 
\end{proof}

The following lemma will be important in the proof of the dynamic programming principle.
Since it was already shown that $\mathcal{E}_t(\omega,x)=\mathcal{D}_t(\omega)+x$,
the proof is almost one to one to the one for \cite[Lemma 3.7]{nutz2014utility}.
For the sake of completeness, a proof is given in the Appendix.

\begin{lemma}[\text{\cite[Lemma 3.7]{nutz2014utility}}]
\label{lem:existence.optimizer.measurable}
	For every $0\leq t\leq T-1$ and $x\in\mathbb{R}$, there exists a process $\vartheta^\ast\in\Theta$ such that
	\begin{align*}
	\mathcal{E}_s(\omega,x+(\vartheta^\ast\cdot S)_t^s(\omega))
	=\sup_{P\in\mathcal{P}_s(\omega)} \log E_P[\exp(\mathcal{E}_{s+1}(\omega\otimes_s\cdot,
	x+(\vartheta^\ast\cdot S)_t^{s+1}(\omega,\cdot) ))]
	\end{align*}	
	for all $t\leq s\leq T-1$ and $\omega\in\Omega_s$.
\end{lemma}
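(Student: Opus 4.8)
The plan is to reduce the construction of $\vartheta^\ast$ to a pointwise-in-$s$ measurable selection of a one-period minimizer, using the already established identity $\mathcal{E}_u(\omega,y)=\mathcal{D}_u(\omega)+y$ from the duality part of Theorem \ref{thm:multiperiod}. This identity shows that the infimum over $h\in\mathbb{R}^d$ defining $\mathcal{E}_s(\omega,y)$ does not depend on $y$, and that for a process $\vartheta^\ast\in\Theta$ the claimed equality at time $s$ is, after cancelling the common wealth term $w=x+(\vartheta^\ast\cdot S)_t^s(\omega)$, equivalent to
\[
\mathcal{D}_s(\omega)=\sup_{P\in\mathcal{P}_s(\omega)}\log E_P\big[\exp\big(\mathcal{D}_{s+1}(\omega\otimes_s\cdot)+\vartheta^\ast_{s+1}(\omega)\Delta S_{s+1}(\omega,\cdot)\big)\big]\quad\text{for every }\omega\in\Omega_s.
\]
Since $\mathcal{D}_s(\omega)=\mathcal{E}_s(\omega,0)=\inf_{h}F_s(\omega,h)$ where $F_s(\omega,h):=\sup_{P\in\mathcal{P}_s(\omega)}\log E_P[\exp(\mathcal{D}_{s+1}(\omega\otimes_s\cdot)+h\Delta S_{s+1}(\omega,\cdot))]$, it suffices, for each $t\le s\le T-1$ separately, to produce a universally measurable $\vartheta^\ast_{s+1}\colon\Omega_s\to\mathbb{R}^d$ with $F_s(\omega,\vartheta^\ast_{s+1}(\omega))=\mathcal{D}_s(\omega)$ for all $\omega$. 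Assembling these and setting $\vartheta^\ast_u\equiv 0$ for $u\le t$ (which does not affect $(\vartheta^\ast\cdot S)_t^s$) yields $\vartheta^\ast\in\Theta$, and the telescoping identity then holds for every $t\le s\le T-1$ and every $x$.

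Next I would verify the two ingredients of the selection. For existence of a minimizer at fixed $\omega$, apply Theorem \ref{thm:1peroiod} in the one-period model $(\Omega_1,\mathcal{F}_1,\mathcal{P}_s(\omega))$ with increment $\Delta S_{s+1}(\omega,\cdot)$ and endowment $\mathcal{D}_{s+1}(\omega\otimes_s\cdot)$; this is legitimate because $\mathcal{D}_{s+1}$ is upper semianalytic (established in the proof of Theorem \ref{thm:multiperiod}), so its $\omega$-section is universally measurable, and NA$(\mathcal{P}_s(\omega))$ holds by assumption. For joint measurability, $F_s$ is upper semianalytic on $\Omega_s\times\mathbb{R}^d$: $\mathcal{D}_{s+1}(\omega\otimes_s\cdot)+h\Delta S_{s+1}(\omega,\cdot)$ is upper semianalytic jointly in all variables, $\exp$ and $\log$ are continuous and increasing, integration against $P$ preserves upper semianalyticity (Lemma \ref{lem:integral.is.measurable}), and the supremum over the analytic-graph correspondence $\mathcal{P}_s$ preserves it as well (\cite[Proposition 7.47]{bertsekas1978stochastic}); moreover $\mathcal{D}_s$ is upper semianalytic and each $F_s(\omega,\cdot)$ is convex and, by Fatou's lemma, lower semicontinuous.

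The main obstacle is then the measurable selection of an \emph{exact} (not merely $\varepsilon$-approximate) minimizer, and this is carried out exactly as in \cite[Lemma 3.7]{nutz2014utility}. The idea is to first localize to a compact set of candidate strategies: by Lemma \ref{lem:char.space.L} one may restrict to $h\in L(\omega):=\mathop{\mathrm{lin}}\mathop{\mathrm{supp}}\nolimits_{\mathcal{P}_s(\omega)}\Delta S_{s+1}(\omega,\cdot)$, and the argument in step (b) of the proof of Theorem \ref{thm:1peroiod} shows that on $\{\mathcal{D}_s<+\infty\}$ near-minimizers in $L(\omega)$ stay bounded, so beyond a suitable radius $F_s(\omega,\cdot)$ strictly exceeds $\mathcal{D}_s(\omega)$; on the analytic set $\{\mathcal{D}_s=+\infty\}$ every $h$ is a minimizer and one takes $\vartheta^\ast_{s+1}=0$. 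Confining the search to a compact-valued correspondence yields an analytic argmin graph with full projection onto $\Omega_s$, so a universally measurable selector exists by the Jankov--von Neumann theorem (\cite[Proposition 7.49]{bertsekas1978stochastic}); since $\mathcal{F}_s$ is the universal completion of the Borel $\sigma$-field, $\vartheta^\ast_{s+1}$ is $\mathcal{F}_s$-measurable, which completes the construction. I expect the bookkeeping around this localization — in particular producing a measurable bounding radius and checking that the resulting argmin correspondence has analytic graph — to be the most delicate point.
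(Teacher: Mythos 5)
Your overall architecture is the same as the paper's: use $\mathcal{E}_u(\omega,y)=\mathcal{D}_u(\omega)+y$ to reduce to a one-step problem, get existence of a pointwise minimizer from Theorem \ref{thm:1peroiod}, and then measurably select from the argmin correspondence (your observation that the argmin does not depend on the accumulated wealth, so no recursion over $s$ is needed, is a harmless simplification of the paper's recursive assembly). The genuine gap is in the selection step itself. You assert that, after localizing to a compact-valued correspondence, the argmin graph is \emph{analytic} and then invoke Jankov--von Neumann. This is not justified: the argmin graph is the set $\{(\omega,h): F_s(\omega,h)\leq \mathcal{D}_s(\omega)\}$, and with $F_s$ and $\mathcal{D}_s$ both merely upper semianalytic this set belongs to the $\sigma$-field generated by the analytic sets (hence is universally measurable) but is in general \emph{not} analytic -- upper level sets $\{F_s\geq c\}$ are analytic, lower level sets are co-analytic, and intersecting/combining them destroys analyticity. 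Restricting $h$ to a compact set does nothing to repair this, so \cite[Proposition 7.49]{bertsekas1978stochastic} does not apply. Moreover, the ``measurable bounding radius'' you invoke is itself unestablished: the bounds in step (b) of the proof of Theorem \ref{thm:1peroiod} are built from objects ($Q_i^{\pm}$, $m$, $\delta$) whose measurable dependence on $\omega$ is exactly the kind of thing one would still have to prove.

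The paper's route avoids all of this and is the one you should follow. One shows that $\phi(\omega,x,h):=\sup_{P\in\mathcal{P}_s(\omega)}\log E_P[\exp(\mathcal{D}_{s+1}(\omega\otimes_s\cdot)+x+h\Delta S_{s+1}(\omega,\cdot))]$ is upper semianalytic in $\omega$ for fixed $(x,h)$ and lower semicontinuous in $(x,h)$ for fixed $\omega$ (Fatou), and then applies \cite[Lemma 4.12]{bouchard2015arbitrage} to conclude that $\phi$ is $\mathcal{F}_s\otimes\mathcal{B}(\mathbb{R})\otimes\mathcal{B}(\mathbb{R}^d)$-measurable. This product measurability is the crucial point: it puts the argmin graph into $\mathcal{F}_s\otimes\mathcal{B}(\mathbb{R}^d)$, which is what Leese's selection theorem \cite[Theorem 5.5]{leese1978measurable} (with the subsequent corollary and scholium) requires in order to produce an $\mathcal{F}_s$-measurable selector, no compactness needed. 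Note that your alternative observation that $F_s$ is \emph{jointly} upper semianalytic, while correct, only yields universal measurability with respect to the completion of the product Borel $\sigma$-field, which is strictly weaker than membership in $\mathcal{F}_s\otimes\mathcal{B}(\mathbb{R}^d)$ and hence does not feed into Leese's theorem either; the lower semicontinuity in $h$, which you mention but never actually use, is precisely what delivers the needed product structure.
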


\begin{proof}[\text{\bf Proof of Theorem \ref{thm:multiperiod} -- Dynamic programming}]
	We turn to the proof of the dynamic programming principle, i.e.~we show that
	\begin{align}
	\label{eq:dyn.prog}
	C:=\inf_{\vartheta\in\Theta}\sup_{P\in\mathcal{P}_t^T(\omega)}
	\log E_P[\exp(X(\omega,\cdot)+x+ (\vartheta\cdot S)_t^T(\omega,\cdot))]
	=\mathcal{E}_t(\omega,x)
	\end{align}
	for all $x$, $\omega\in\Omega_t$, and $0\leq t\leq T-1$ and that the infimum over
	$\vartheta\in\Theta$ is attained.
	Again, fix some $x$, $t$, and $\omega\in\Omega_t$.
	By the first part of the proof of Theorem \ref{thm:multiperiod},
	i.e.~the part which focuses on duality,
	it holds $\mathcal{E}_t(\omega,x)=\mathcal{D}_t(\omega)+x$.
	Therefore $x$ can be subtracted on both sides of \eqref{eq:dyn.prog}
	and there is no loss of generality in assuming that $x=0$. This will lighten notation.

	First, we focus on the inequality $C\geq \mathcal{E}_t(\omega,x)$. 
	Fix some $\vartheta\in\Theta$, $P\in\mathcal{P}_t^T(\omega)$, and $Q\in\mathcal{M}_t^T(\omega)$.
	If 
	\[ C':=\log E_P[\exp(X(\omega,\cdot)+ (\vartheta\cdot S)_t^T(\omega,\cdot))]
	\geq E_Q[X(\omega,\cdot)] - H(Q,P),\]
	then the claim follows by taking the supremum over all those $Q$ and $P$,
	and in a second step the infimum over all $\vartheta\in\Theta$.
	To show this, one may assume that $C'$ and $H(Q,P)$ are finite, otherwise
	there is nothing to prove. Define 
	\[ Z:=X(\omega,\cdot)+(\vartheta\cdot S)_t^T(\omega,\cdot).\]
	Applying the elementary inequality 
	$ab\leq \exp(a)+b\log b$ to ``$a=Z^+$'' and ``$b=dQ/dP$'' yields
	\[ E_Q[Z^+]\leq E_P[\exp(Z^+)]+H(Q,P)\leq \exp(C')+1+H(Q,P)<+\infty.\]
	Therefore, it holds
	\[ E_Q[(\vartheta\cdot S)_t^T(\omega,\cdot)^+]
	\leq E_Q[X(\omega,\cdot)^-] + E_Q[Z^+]
	<+\infty \]
	by the definition of $\mathcal{M}_t^T(\omega)$.
	But then it follows from a result on local martingales (see \cite[Theorem 1 and 2]{jacod1998local}) that
	$(\vartheta\cdot S)_t^T(\omega,\cdot)$ is actually integrable with respect to $Q$ and has expectation 0.
	Hence $E_Q[Z^-]<+\infty$ and, therefore, Lemma \ref{lem:rep.exponential.dominated} yields
	\[C'=\log E_P[\exp(Z)] 
	\geq  E_Q[Z]  - H(Q,P)
	= E_Q[X(\omega,\cdot)] -H(Q,P)\]
	which is what we wanted to show.	

	We complete the proof by showing that $C\leq \mathcal{E}_t(\omega,0)$
	and that an optimal strategy $\vartheta^\ast\in\Theta$ exists.
	Let $\vartheta^\ast$ be the as in Lemma \ref{lem:existence.optimizer.measurable}, i.e.~such that
	\begin{align}
	\label{eq:optimal.H}
		\mathcal{E}_s(\omega,(\vartheta^\ast \cdot S)_t^s(\omega))
		=\sup_{P\in\mathcal{P}_s(\omega)} 
		\log E_P[\exp(\mathcal{E}_{s+1}(\omega\otimes_s\cdot,(\vartheta^\ast \cdot S)_t^{s+1}(\omega,\cdot))]
	\end{align}
	for all $t\leq s\leq T-1$. 
	Then $\vartheta^\ast$ is optimal and $C\leq \mathcal{E}_t(\omega,0)$.
	Indeed, let $P=P_t\otimes\cdots\otimes P_{T-1}\in\mathcal{P}_t^T(\omega)$ and
	fix some $t\leq s\leq T-1$. 
	Then it follows from \eqref{eq:optimal.H} that
	\begin{align*}
	&\log E_P\big[\exp( \mathcal{E}_s(\omega\otimes_t\cdot,(\vartheta^\ast \cdot S)_t^s(\omega,\cdot))\big]\\
	&=\log E_{P_t\otimes\cdots\otimes P_{s-1}}\big[\exp(
	\mathcal{E}_s(\omega\otimes_t\cdot,(\vartheta^\ast \cdot S)_t^s(\omega,\cdot))\big)\big]\\
	&\geq \log E_{P_t\otimes\cdots\otimes P_{s-1}}\Big[\exp\Big(\log 
	E_{P_s(\cdot)}\big[\exp\big(
	\mathcal{E}_{s+1}(\omega\otimes_t\cdot,(\vartheta^\ast \cdot S)_t^{s+1}(\omega,\cdot))\big)\big]\Big)\Big] \\
	&=\log E_P\big[\exp\big(\mathcal{E}_{s+1}(\omega\otimes_t\cdot,(\vartheta^\ast \cdot S)_t^{s+1}(\omega,\cdot))\big)\big],
	\end{align*}
	and an iteration yields
	\begin{align*}
	\mathcal{E}_t(\omega,0)
	&=\log E_P[\exp(\mathcal{E}_t(\omega,(\vartheta^\ast \cdot S)_t^t(\omega)))]\\
	&\geq \log E_P[\exp(\mathcal{E}_T(\omega\otimes_t\cdot,(\vartheta^\ast \cdot S)_t^T(\omega,\cdot)))]\\
	&=\log E_P[\exp( X(\omega,\cdot) + (\vartheta^\ast\cdot S)_t^T(\omega,\cdot))].
	\end{align*}
	As $P\in\mathcal{P}_t^T(\omega)$ was arbitrary, it holds
	$C\leq \mathcal{E}_t(\omega,x)$ and since $\vartheta^\ast\in\Theta$, it follows from the 
	previously shown inequality that $\vartheta^\ast$ is optimal.
\end{proof}

\subsection{The case with options}

Fix some function $Y\colon\Omega\to[0,+\infty)$ such that $-Y$ is upper semianalytic and recall that
$\mathcal{M}(Y):=\{Q\in\mathcal{M}: E_Q[Y]<+\infty\}$ and 
$\mathcal{M}_g(Y):=\{Q\in\mathcal{M}_g : E_Q[Y]<+\infty\}$, where $\mathcal{M}$ and $\mathcal{M}_g$
where defined in Section \ref{sec:main}.
Moreover, fix some Borel function $Z\colon\Omega\to\mathbb{R}$. 

We first claim that for every upper semianalytic function $X\colon\Omega\to\mathbb{R}$
bounded by $Z$, i.e.~$|X|\leq Z$, one has
\begin{align}
\label{eq:superhedge.entropy} 
\inf\{ m\in\mathbb{R} : m+(\vartheta\cdot S)_0^T\geq X\,\mathcal{P}\text{-q.s. for some }
\vartheta\in\Theta\}
&=\sup_{Q\in\mathcal{M}(Y)} E_Q[X]
\end{align}
in case of no options and, if $|g^i|\leq Z$ for $1\leq i\leq e$, then
\begin{align}
\label{eq:superhedge.0.in.ri} 
0\in\mathop{\mathrm{ri}} \{ E_Q[g^e] :  Q\in\mathcal{M}_{\hat{g}}(Y)\}
\end{align}
where $\hat{g}:=(g^1,\dots,g^{e-1})$ and also 
\begin{align}
\label{eq:superhedge.options.entropy} 
&\inf\Big\{m\in\mathbb{R} : 
\begin{array}{l}
m+(\vartheta\cdot S)_0^T+\alpha  g \geq X\,\mathcal{P}\text{-q.s.}\\
\text{for some }(\vartheta,\alpha)\in\Theta\times\mathbb{R}^e 
\end{array}\Big\}
=\sup_{Q\in\mathcal{M}_g(Y)} E_Q[X].
\end{align}
All these claims are proven in \cite{bouchard2015arbitrage} if one relaxes $\mathcal{M}$
in the sense that the relative entropy does not need to be finite.
In fact, Bouchard and Nutz deduce 
\eqref{eq:superhedge.options.entropy} from \eqref{eq:superhedge.0.in.ri}, and 
\eqref{eq:superhedge.0.in.ri} from \eqref{eq:superhedge.entropy};
see Theorem 4.9 as well as equation (5.1) and Theorem 5.1 in \cite{bouchard2015arbitrage}, respectively.
The same can be done here (with the exact same arguments as in \cite{bouchard2015arbitrage}), 
so we shall only give a (sketch of a) proof for \eqref{eq:superhedge.entropy}.
Consider first the one period case and define
\[\mathcal{C}':=\{ Q\in\mathfrak{P}(\Omega) : E_Q[|\Delta S| + Y]<+\infty
\text{ and } Q\ll P \text{ for some }P\in\mathcal{P} \},\]
and
$\mathcal{M}':=\{ Q\in\mathcal{C} : E_Q[\Delta S]=0\}$.
Then the following superhedging duality
\[ \inf\{m\in\mathbb{R} : m+h\Delta S\geq X\,\mathcal{P}\text{-q.s.} \text{ for some }h\in\mathbb{R}^d\}
=\sup_{Q\in\mathcal{M}'} E_Q[X],\]
see \cite[Theorem 3.4]{bouchard2015arbitrage}, is a consequence of the fact that
$0\in\mathop{\mathrm{ri}}\{E_Q[\Delta S] :Q\in\mathcal{C}'\}$;
see Lemma 3.5 and Lemma 3.6 in \cite{bouchard2015arbitrage}.
However, since
\[0\in\mathop{\mathrm{ri}} \{ E_Q[g] : Q\in\mathcal{C} \}\quad\text{for}\quad
\mathcal{C}=\{ Q \in\mathcal{C} : H(Q,\mathcal{P})<+\infty\}\]
by Lemma \ref{lem:fundamental.lem}, 
the same arguments as for \cite[Theorem 3.4]{bouchard2015arbitrage} show that
\[ \inf\{m\in\mathbb{R} : m+h\Delta S\geq X\,\mathcal{P}\text{-q.s.} \text{ for some }h\in\mathbb{R}^d\}
=\sup_{Q\in\mathcal{M}(Y)} E_Q[X],\]
in particular $\sup_{Q\in\mathcal{M}'} E_Q[X]=\sup_{Q\in\mathcal{M}(Y)} E_Q[X]$.
For the transition to the multiperiod case define recursively $m_T:=m_T':=X$ and
\[m_t'(\omega):=\sup_{Q\in\mathcal{M}_t'(\omega)} E_Q[m_{t+1}'(\omega,\cdot)]
\quad\text{and}\quad
m_t(\omega)=\sup_{Q\in\mathcal{M}_t^Z(\omega)} E_Q[m_{t+1}(\omega,\cdot)],\]
for $0\leq t\leq T-1$ and $\omega\in\Omega_t$, where
\begin{align*} 
\mathcal{M}_t'(\omega)&:=\{Q\in\mathfrak{P}(\Omega_1): Q\ll P \text{ for some }P\in\mathcal{P}_t(\omega)
\text{ and } E_Q[\Delta S_{t+1}(\omega,\cdot)]=0\},\\
\mathcal{M}_t^Z(\omega)&:=\{Q\in\mathcal{M}_t'(\omega) : E_Q[Z]+E_Q[m_{t+1}(\omega,\cdot)^-]+H(Q,\mathcal{P}_t(\omega))<+\infty\}
\end{align*}
and $Z\colon\Omega_1\to[0,+\infty)$ an arbitrary universally measurable function.
A backward induction shows that $m_t=m_t'$ for each $t$.
Moreover, following the exact same arguments as in the part of the proof for Theorem \ref{thm:multiperiod}
which focuses on duality, one can show that 
$m_t(\omega)=\sup_{Q\in\mathcal{M}_t^T(\omega)} E_Q[X(\omega,\cdot)]$
where we recall
\[ \mathcal{M}_t^T(\omega)
=\bigg\{ Q\in\mathfrak{P}(\Omega_{T-t}): 
\begin{array}{l} 
(S_s(\omega,\cdot))_{t\leq s \leq T} \text{ is a $Q$-martingale and}\\
E_Q[X(\omega,\cdot)^-+Y(\omega,\cdot)] + H(Q,\mathcal{P}_t^T(\omega))<+\infty
\end{array}\bigg\} \]
so that $\mathcal{M}_0^T=\mathcal{M}(Y)$.
Since it is shown in (or rather within the proof of) 
\cite[Lemma 4.13]{bouchard2015arbitrage} that
\[\inf\{ m\in\mathbb{R} : m+(\vartheta\cdot S)_0^T\geq X\,\mathcal{P}\text{-q.s. for some }
\vartheta\in\Theta\}
=m_0',\]
the claim follows from $m_0'=m_0=\sup_{Q\in\mathcal{M}(Y)}E_Q[X]$.

\begin{proof}[\text{\bf Proof of Theorem \ref{thm:main.options}}]
	The proof is an induction over $e$.
	For $e=0$, the statement is a special case of Theorem \ref{thm:multiperiod}, 
	so assume that both claims (duality and existence) are true for $e-1\geq0$.
	By assumption, there is a Borel function $Z$ such that $|X|+|g^i|\leq Z$ for every $1\leq i\leq e$.
	Using the induction hypothesis, it follows that
	\begin{align}
	\label{eq:options.infinf}
	&\inf_{(\vartheta,\alpha)\in\Theta\times\mathbb{R}^e}
	\sup_{P\in\mathcal{P}} \log E_P[\exp(X+(\vartheta\cdot S)_0^T+\alpha g)]\\
	&=\inf_{\beta \in\mathbb{R}}\min_{(\vartheta,\hat{\alpha})\in\Theta\times\mathbb{R}^{e-1}} 
	\sup_{P\in\mathcal{P}} \log E_P[\exp(X+(\vartheta\cdot S)_0^T+\hat{\alpha}\hat{g}+\beta g^e)]
	\label{eq:options.inf.min} \\
	&=\inf_{\beta \in\mathbb{R}} \sup_{Q\in\mathcal{M}_{\hat{g}}} \big( E_Q[X]+\beta E_Q[g^e] -H(Q,\mathcal{P}) \big)
	=\inf_{\beta \in\mathbb{R}} \sup_{Q\in\mathcal{M}_{\hat{g}}} J(Q,\beta )\nonumber 
	\end{align}
	where $\hat{g}=(g^1,\dots,g^{e-1})$ and
	\[J\colon \mathcal{M}_{\hat{g}}\times\mathbb{R}\to\mathbb{R},
	\quad (Q,\beta )\mapsto E_Q[X]+\beta E_Q[g^e] -H(Q,\mathcal{P}).\] 
	It is already shown that
	$0\in\mathop{\mathrm{ri}} \{ E_Q[g^e] : Q\in\mathcal{M}_{\hat{g}}\}$,
	see \eqref{eq:superhedge.0.in.ri},
	which can be used exactly as in the proof of Theorem \ref{thm:1peroiod} to prove that 
	\begin{align}
	\label{eq:options.minimax}
	 \inf_{|\beta |\leq n} \sup_{Q\in\mathcal{M}_{\hat{g}}} J(Q,\beta)
	=\inf_{\beta\in\mathbb{R}} \sup_{Q\in\mathcal{M}_{\hat{g}}} J(Q,\beta)
	=\sup_{Q\in\mathcal{M}_{\hat{g}}} \inf_{\beta\in\mathbb{R}} J(Q,\beta)
	\end{align}
	for some $n\in\mathbb{N}$;  see \eqref{eq:for.existence.minimizer} for the first,
	and the text below \eqref{eq:supinfcondition} for the second equality.
	Hence
	\begin{align*}
	&\inf_{(\vartheta,\alpha)\in\Theta\times\mathbb{R}^e} 
	\sup_{P\in\mathcal{P}} \log E_P[\exp(X+(\vartheta\cdot S)_0^T+\alpha g)]\\
	&=\inf_{\beta \in\mathbb{R}} \sup_{Q\in\mathcal{M}_{\hat{g}}} J(Q,\beta)
	=\sup_{Q\in\mathcal{M}_{\hat{g}}} \inf_{\beta \in\mathbb{R}} J(Q,\beta)
	=\sup_{Q\in\mathcal{M}_g} \big( E_Q[X]-H(Q,\mathcal{P})\big)
	\end{align*}
	showing that duality holds.
	The first equality in \eqref{eq:options.minimax} together with the lower semicontinuity of 
	$\sup_{Q\in\mathcal{M}_{\hat{g}}} J(Q,\cdot)$
	imply that there is some $\beta^\ast\in\mathbb{R}$ such that 
	\[\sup_{Q\in\mathcal{M}_{\hat{g}}} J(Q,\beta^\ast)
	=\inf_{\beta\in\mathbb{R}}\sup_{Q\in\mathcal{M}_{\hat{g}}} J(Q,\beta).\]
	For this $\beta^\ast$, the induction hypotheses \eqref{eq:options.inf.min} guarantees the existence of an optional
	strategy $(\vartheta^\ast,\hat{\alpha}^\ast)\in\Theta\times\mathbb{R}^{e-1}$ showing that
	$(\vartheta^\ast,\alpha^\ast)\in\Theta\times\mathbb{R}^e$ is optimal for 
	\eqref{eq:options.infinf}, where $\alpha^\ast:=(\hat{\alpha}^\ast,\beta^\ast)$.
	This completes the proof.
\end{proof}

\begin{proof}[\text{\bf Proof of Theorem \ref{thm:limit.superhedg}}]
	Since $\Theta$ and $\mathbb{R}^e$ are vector-spaces, it follows from 
	Theorem \ref{thm:main.options} that
	\begin{align*}
	\pi_\gamma(X)
	&=\inf_{(\vartheta,\alpha)\in\Theta\times\mathbb{R}^e}\sup_{P\in\mathcal{P}}
	\frac{1}{\gamma}\log E_P\big[\exp\big(\gamma X + (\vartheta\cdot S)_0^T + \alpha g\big)\big]\\
	&= \frac{1}{\gamma}\sup_{Q\in\mathcal{M}_g} \big( E_Q[\gamma X] -H(Q,\mathcal{P}) \big)
	=\sup_{Q\in\mathcal{M}_g} \big( E_Q[X] -\frac{1}{\gamma}H(Q,\mathcal{P}) \big).
	\end{align*} 
	This formula implies both that $\pi_\gamma$ is increasing in $\gamma$ and, 
	by interchanging the suprema over $\gamma$ and $Q$,  that
	$\sup_\gamma \pi_\gamma(X)=\sup_{Q\in\mathcal{M}_g} E_Q[X]$. 
	The latter term coincides by \eqref{eq:superhedge.options.entropy} 
	with $\pi(X)$, hence the proof is complete.
\end{proof}

\appendix
\section{Technical proofs}
\label{sec:app.proofs}

We start by proving Remark \ref{rem:main.assumptions}, Remark \ref{rem:main.discussion},
and the statements of Section \ref{sec:examples}.

\begin{proof}[\text{\bf Proof of Remark \ref{rem:main.assumptions}}]
	1)
	Let $T=d=1$, $\Omega=\mathbb{R}$, $S_0=0$, $S_1(\omega)=\omega$,
	and define $\mathcal{P}=\mathop{\mathrm{conv}}\{\delta_x: x\in[0,1]\}$
	so that NA$(\mathcal{P})$ fails.
	Then in  \eqref{eq:optim.problem} the left-hand side is always larger or equal than $X(0)$,
	and the right-hand side equals $X(0)$ since $\mathcal{M}=\{\delta_0\}$.	
	For the choice $X=-1_{\{0\}}$, a short computation yields that the left-hand side actually equals 0, 
	showing that there is a gap.   

	2)
	Let again $T=d=1$, $\Omega=\mathbb{R}$, $S_0=0$, $S_1(\omega)=\omega$,
	and define $\mathcal{P}=\mathop{\mathrm{conv}}\{\delta_{-1},\delta_x: x\in(0,1]\}$.
	Then NA$(\mathcal{P})$ holds true and every martingale measure 
	$Q$ with $H(Q,\mathcal{P})<+\infty$ satisfies
	$Q(\{-1\})>0$. In particular, for $X:=-\infty 1_{\{-1\}}$ 
	the right-hand side of \eqref{eq:optim.problem} equals $-\infty$
	while the right-hand side satisfies
	\[ \inf_{h\in\mathbb{R}} \sup_{P\in\mathcal{P}} \log E_P[\exp(X+h\Delta S)]
	\geq \inf_{h\in\mathbb{R}} \lim_{x\downarrow 0} \log \frac{\exp(-\infty)+ \exp(hx)}{2}
	=\log\frac{1}{2} \]
	as $(\delta_{-1}+\delta_x)/2\in\mathcal{P}$ for every $x\in(0,1]$.
	To see that an optimal strategy $h\in\mathbb{R}$ needs not to exists, 
	take the same $X$ but let $\mathcal{P}=\{(\delta_{-1}+\delta_{1})/2\}$.
\end{proof}

\begin{proof}[\text{\bf Proof of Remark \ref{rem:main.discussion}}]
	We claim that for any probability $P$ satisfying the classical no-arbitrage,
	it is possible to construct $\mathcal{P}_t$ such that $\mathcal{P}=\{P\}$
	and {\rm NA}$(\mathcal{P}_t(\omega))$ holds for every $t$ and $\omega\in\Omega_t$
	and only sketch the proof. Write $P=P_0\otimes\cdots\otimes P_{T-1}$ for the kernels
	$P_t$ from Remark \ref{rem:kernels.measurable.and.derivative.measurable}
	and define $N_t:=\{\omega\in\Omega_t : \mathrm{NA}(P_t(\omega)) \text{ fails}\}$.
	Then it holds 
	\[ N_t=\pi \Big\{ (\omega,h)\in\Omega_t\times\mathbb{R}^d :
	\begin{array}{l} 
	P_t(\omega)(h\Delta S_{t+1}(\omega,\cdot)\geq 0)=1 \text{ and}\\
	P_t(\omega)(h\Delta S_{t+1}(\omega,\cdot)>0)>0 
	\end{array} \Big\} \]
	and by the classical fundamental theorem of asset pricing
	\[ N_t^c=\pi\{ (\omega,Q)\in\Omega_t\times\mathfrak{P}(\Omega_1) : 
	E_Q[\Delta S_{t+1}(\omega,\cdot)]=0 \text{ and } Q\sim P_t(\omega) \}.\]
	In both cases $\pi$ denotes the projection onto $\Omega_t$.
	It can be shown that both sets, which the projection acts on, are Borel. 
	Thus $N_t$ and $N_t^c$ are analytic sets.
	Now define $\mathcal{P}_t(\omega)=\{P_t(\omega)\}$ if $\omega\in N_t^c$ and
	$\mathcal{P}_t(\omega)=\{\delta_{S_t(\omega)}\}$ else.
	Then $\mathcal{P}_t$ has analytic graph and since $N_t$ is a zero set under $P$, 
	it follows that $\mathcal{P}=\{P\}$. 
\end{proof}

\begin{proof}[\text{\bf Proofs for Section \ref{sec:examples}}]

(a)
The graphs of \eqref{eq:P.robust.general}, \eqref{eq:Rt.dist}, and \eqref{eq:Rt.phitn} 
are Borel: For $\mathcal{P}_t$ defined by \eqref{eq:P.robust.general}, notice that  
\[ g\colon\Omega_t\times\mathfrak{P}(\Omega_1)\to[0,+\infty],
\quad(\omega,R)\mapsto \mathop{\mathrm{dist}}(R,P_t(\omega))/\varepsilon_t(\omega)\]
is Borel, hence $\mathop{\mathrm{graph}}\mathcal{P}_t=\{g\leq 1\}$ is Borel, and therefore analytic.
The proofs for \eqref{eq:Rt.dist} and \eqref{eq:Rt.phitn} are analogue.

(b)
If $\mathcal{R}_t$ has analytic graph, then so do $\hat{\mathcal{P}}_t$ and $\mathcal{P}_t$:
Define 
\[ g\colon\Omega_t\times\mathfrak{P}(\Omega_1)\to\Omega_t\times\mathfrak{P}(\Omega_1),
\quad(\omega,P)\mapsto (\omega,P\circ f_t(\omega,\cdot)^{-1})\]
and notice that $g$ is Borel by Lemma \ref{lem:integral.is.measurable}
and \cite[Proposition 7.26]{bertsekas1978stochastic}. 
Therefore, $\mathop{\mathrm{graph}}\hat{\mathcal{P}}_t=g(\mathop{\mathrm{graph}}\mathcal{R}_t)$ is an analytic
set, as the image of such set under a Borel function.
As for $\mathcal{P}_t$, define the Borel function
\begin{align*}
g_n\colon \big( (\Omega_t\times\mathfrak{P}(\Omega_1))^n\cap \Delta_n\big)\times  C_n&\to \Omega_t\times\mathfrak{P}(\Omega_1),\\
((\omega^i,P^i)_i,\lambda)&\mapsto (\omega^1,\lambda_1 P^1+\cdots+\lambda_nP^n)
\end{align*}
for every $n\in\mathbb{N}$, where
\begin{align*}
\Delta_n&:=\{ (\omega^i,P^i)_i\in (\Omega_t\times\mathfrak{P}(\Omega_1))^n : \omega^1=\omega^i \text{ for }1\leq i\leq n\},\\
C_n&:=\{ \lambda\in[0,+\infty) : \lambda_1+\dots+\lambda_n=1 \}.
\end{align*}
Therefore, as the countable union of the images under Borel functions of analytic sets,
\[ \mathop{\mathrm{graph}}\mathcal{P}_t
=\bigcup_n g_n\Big( \big(\mathop{\mathrm{graph}}\hat{\mathcal{P}}_t)^n\cap \Delta_n)\times C_n\big)\Big) \]
is again an analytic set.

(c)
On the no-arbitrage condition.
We only prove the claim for the Wasserstein distance of order $p$, i.e.~$\mathcal{R}_t$ given by
\eqref{eq:Rt.dist}, the proof for $\mathcal{R}_t$ given by \eqref{eq:Rt.phitn} works similar.
Fix $\omega\in\Omega_t$ and let $h\in\mathbb{R}$ such that 
$h\Delta S_{t+1}(\omega,\cdot)\geq 0$ $\mathcal{P}_t(\omega)$-q.s. 
If $f_t(\omega,\mathbb{R})=\{S_t(\omega)\}$, 
then trivially $h\Delta S_{t+1}(\omega,\cdot)= 0$ $\mathcal{P}_t(\omega)$-q.s. 
Otherwise, there are $y^\pm\in\mathbb{R}$ such that $\pm f(\omega,y^\pm)>0$ by assumption.
Now define $R^\pm:= \lambda^\pm \delta_{y^\pm}+(1-\lambda^\pm) \mathop{\mathrm{law}}X$, where 
$\lambda^\pm:=1\wedge 1/(\mathop{\mathrm{dist}}(\delta_{y^\pm},\mathop{\mathrm{law}}X)\varepsilon_t(\omega))$
is strictly positive since $X$ has finite $p$-th moment. By convexity,
\[\mathop{\mathrm{dist}}(R^\pm,\mathop{\mathrm{law}}X)
\leq \lambda^\pm \mathop{\mathrm{dist}}(\delta_{\pm x},\mathop{\mathrm{law}}X)
+(1-\lambda^\pm)\mathop{\mathrm{dist}}(\mathop{\mathrm{law}}X,\mathop{\mathrm{law}}X)
\leq\varepsilon_t(\omega)\]
so that $R^\pm\in\mathcal{R}_t(\omega)$.
Hence $hf(\omega,y^\pm)\geq 0$, which in turn implies $h=0$.

(d)
The Binomial and Black-Scholes model.
A computation as in a) shows that the graph of $\Phi_t$ defined by
\[ \Phi_t(\omega):=\big\{ (q,a,b) : p\in[\underline{p}_t(\omega),\overline{p}_t(\omega)],
	a\in[\underline{a}_t(\omega), \overline{a}_t(\omega)], b\in[\underline{b}_t(\omega),\overline{b}_t(\omega)] \big\} \]
is an analytic set.
Since 
\[g\colon\Omega_t\times\mathbb{R}^3\to\Omega_t\times\mathfrak{P}(\Omega_1),
\quad (\omega,p,a,b)\mapsto (\omega, p\delta_a+(1-p)\delta_b)\]
is continuous, it follows that $\mathop{\mathrm{graph}}\mathcal{R}_t=g(\mathop{\mathrm{graph}}\Phi_t)$
is an analytic set.
The proof for the Black-Scholes model works similar.
\end{proof}

The following lemma is related to \cite[Lemma 3.29]{follmer2011stochastic},
where $X$ is assumed to be bounded. 

\begin{lemma}
	\label{lem:rep.exponential.dominated}
	Let $X\colon\Omega\to\mathbb{R}$ be measurable and let $P\in\mathfrak{P}(\Omega)$.
	Then one has
	\[\log E_P[\exp(X)]=\sup_{Q\in\mathcal{A}} ( E_Q[X]-H(Q,P) )\]
	where $\mathcal{A}:=\{Q\in\mathfrak{P}(\Omega) : H(Q,P)+ E_Q[X^-]<+\infty\}$.
\end{lemma}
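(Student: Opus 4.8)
The plan is to prove the two inequalities separately: ``$\leq$'' follows from non-negativity of relative entropy, and ``$\geq$'' from approximating $X$ by a one-sided truncation. The bounded case \cite[Lemma 3.29]{follmer2011stochastic} is not needed directly.

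For ``$\leq$'', if $E_P[\exp(X)]=+\infty$ there is nothing to show, so assume $E_P[\exp(X)]<+\infty$ and set $\tilde P:=(\exp(X)/E_P[\exp(X)])\,P$, a probability measure equivalent to $P$. Fix $Q\in\mathcal{A}$; then $Q\ll P\sim\tilde P$ and $\log\frac{dQ}{d\tilde P}=\log\frac{dQ}{dP}-X+\log E_P[\exp(X)]$. Because $H(Q,P)<+\infty$ and the negative part of $\log\frac{dQ}{dP}$ is always $Q$-integrable (as $t\mapsto t\log t$ is bounded below), $\log\frac{dQ}{dP}\in L^1(Q)$; together with $E_Q[X^-]<+\infty$ this makes the positive part of $\log\frac{dQ}{d\tilde P}$ $Q$-integrable, so $H(Q,\tilde P)=E_Q[\log\frac{dQ}{d\tilde P}]$ is well defined in $[0,+\infty)$ and equals $H(Q,P)+\log E_P[\exp(X)]-E_Q[X]$ (an equality in $[-\infty,+\infty)$, since $E_Q[X]$ is defined in $(-\infty,+\infty]$). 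Non-negativity of $H(Q,\tilde P)$ then forces $E_Q[X]<+\infty$ and yields $E_Q[X]-H(Q,P)\leq\log E_P[\exp(X)]$; taking the supremum over $Q\in\mathcal{A}$ gives the claim.

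For ``$\geq$'', I would truncate only from above: set $Y_n:=X\wedge n$, so that $0<c_n:=E_P[\exp(Y_n)]\leq e^n$ and, by monotone convergence, $c_n\uparrow E_P[\exp(X)]$; then put $Q_n:=(\exp(Y_n)/c_n)\,P$. The step needing care is $Q_n\in\mathcal{A}$: using $Y_n^-=X^-$ for $n\geq1$ and $Y_n=X$ on $\{X<0\}$, one gets $Y_n^-\exp(Y_n)=X^-\exp(X)\,1_{\{X<0\}}\leq e^{-1}$ and $Y_n^+\exp(Y_n)\leq ne^n$, hence $Y_n\exp(Y_n)\in L^1(P)$; thus $E_{Q_n}[Y_n]\in\mathbb{R}$, $H(Q_n,P)=E_{Q_n}[Y_n]-\log c_n<+\infty$, and $E_{Q_n}[X^-]=c_n^{-1}E_P[X^-\exp(Y_n)]\leq c_n^{-1}e^{-1}<+\infty$. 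Finally $Y_n\leq X$ and $Y_n\in L^1(Q_n)$ give $E_{Q_n}[X]-H(Q_n,P)\geq E_{Q_n}[Y_n]-H(Q_n,P)=\log c_n$, so that $\sup_{Q\in\mathcal{A}}(E_Q[X]-H(Q,P))\geq\sup_n\log c_n=\log E_P[\exp(X)]$; this also covers the case $E_P[\exp(X)]=+\infty$.

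The only genuinely delicate point is this lower bound. The obvious candidate optimizer $(\exp(X)/E_P[\exp(X)])\,P$ need not lie in $\mathcal{A}$, since $E_P[X^+\exp(X)]$ — and hence its relative entropy — may be infinite even when $E_P[\exp(X)]<+\infty$; truncating $X$ from above repairs this, while keeping the truncation one-sided makes all the limits monotone and keeps $X^-$ controlled. Everything else is routine bookkeeping of $\pm\infty$ in the (well-defined) expectations.
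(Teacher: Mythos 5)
Your proof is correct and follows essentially the same route as the paper: both hinge on the tilted measures $Q_n$ with density $\exp(X\wedge n)/E_P[\exp(X\wedge n)]$, the elementary bound $X^-\exp(X\wedge n)\leq e^{-1}$ to get $Q_n\in\mathcal{A}$, and the entropy decomposition with respect to the tilted measure (your nonnegativity of $H(Q,\tilde P)$ is exactly the paper's Jensen step). The only difference is organizational: you prove the two inequalities separately, with weak duality handled directly at the untruncated level, whereas the paper establishes the exact duality for each truncation $X\wedge n$ and then interchanges the suprema over $n$ and $Q\in\mathcal{A}$ using monotone convergence (justified by $E_Q[X^-]<+\infty$).
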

\begin{proof}
	For each natural number $n$, define $Q_n$ by 
	\[\frac{dQ_n}{dP}:= \frac{\exp(X\wedge n)}{E_P[\exp(X\wedge n)]}.\]
	Then $Q_n$ is equivalent to $P$ and since
	$\exp(X\wedge n) X^-\leq 1$, it follows that $X^-$ is integrable 
	with respect to $Q_n$.
	By equivalence of $P$ and $Q_n$, one can write
	\[ \frac{dQ}{dP}=\frac{dQ}{dQ_n}\frac{dQ_n}{dP}
	\qquad\text{for any }Q\in\mathcal{A}.\] 
	Applying Jensen's inequality to the convex function
	$[0,\infty)\to[-1,\infty),\,x\mapsto x\log x$ with
	``$x=dQ/dQ_n$" yields
	\begin{align*} 
	H(Q,P) 
	&=E_{Q_n}\Big[\frac{dQ}{dQ_n}\log \frac{dQ}{dQ_n}\Big]+E_Q\Big[\log\frac{dQ_n}{dP}\Big]\\
	&\geq E_Q\Big[\log\frac{dQ_n}{dP}\Big]
	=E_Q[X\wedge n] - \log E_P[\exp(X\wedge n)]
	\end{align*}
	with equality if (and only if) $Q=Q_n$.
	Since the right-hand side is finite for $Q=Q_n$, 
	it follows that $H(Q_n,P)<+\infty$ and, therefore, $Q_n\in\mathcal{A}$.
	Rearranging the terms which appear in the inequality above yields
	\[\log E_P[\exp(X\wedge n)]\geq E_Q[X\wedge n]-H(Q,P) \]
	for all $Q\in\mathcal{A}$ with equality for $Q=Q_n\in\mathcal{A}$.
	This shows the claim if $X$ were bounded.
	The general case follows by letting $n$ tend to infinity.
	Indeed, since the set $\mathcal{A}$ does not depend on $n$,
	we can interchange two suprema and conclude that
	\[\log E_P[\exp(X)]
	=\sup_n \sup_{Q\in\mathcal{A}} (E_Q[X\wedge n]-H(Q,P))
	=\sup_{Q\in\mathcal{A}}(E_Q[X]-H(Q,P)).\]
	The use of the monotone  convergence theorem in the last step
	was justified because $E_Q[X^-]<+\infty$ for every $Q\in\mathcal{A}$.
\end{proof}

\begin{lemma}
\label{lem:absolute.continuity.kernels}
	Let $V$ and $W$ be two Polish spaces and $P,Q\in\mathfrak{P}(V\times W)$ 
	with representation $P=\mu\otimes K$, $Q=\mu'\otimes K'$ for measures $\mu,\mu'\in\mathfrak{P}(V)$
	and universally measurable kernels $K,K'\colon V\to\mathfrak{P}(W)$.
	Then one has
	\[Q\ll P \quad\text{if and only if}\quad  \mu'\ll\mu \text{ and } K'(v)\ll K(v) \]
	for $\mu'$-almost every $v$.
\end{lemma}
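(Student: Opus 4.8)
The plan is to prove both implications directly, using only the ordinary Radon--Nikodym theorem on the product space $V\times W$ together with the almost-sure uniqueness of disintegrations over a Polish (hence standard Borel) space. The ``if'' direction is routine; in the ``only if'' direction the claim $\mu'\ll\mu$ is immediate, and the real content is the statement about the kernels.

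For the ``if'' direction I would fix a measurable set $A\subset V\times W$ with $P(A)=0$ and let $A_v:=\{w:(v,w)\in A\}$ denote its section. Since $P=\mu\otimes K$, we have $0=P(A)=\int_V K(v)(A_v)\,\mu(dv)$, so $K(v)(A_v)=0$ for $\mu$-almost every $v$, hence (by $\mu'\ll\mu$) for $\mu'$-almost every $v$; the hypothesis $K'(v)\ll K(v)$ for $\mu'$-a.e.\ $v$ then gives $K'(v)(A_v)=0$ for $\mu'$-a.e.\ $v$, and integrating against $\mu'$ yields $Q(A)=0$. In the converse direction, testing $Q\ll P$ on rectangles $B\times W$ gives $\mu(B)=P(B\times W)=0\Rightarrow\mu'(B)=Q(B\times W)=0$, i.e.\ $\mu'\ll\mu$.

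The heart of the matter is to show $Q\ll P\Rightarrow K'(v)\ll K(v)$ for $\mu'$-a.e.\ $v$. Here I would pick a jointly measurable version $f:=dQ/dP$ on $V\times W$, so that $Q(dv,dw)=f(v,w)\,K(v)(dw)\,\mu(dv)$; taking the $V$-marginal and comparing with $Q(\cdot\times W)=\mu'$ shows $d\mu'/d\mu=c$ where $c(v):=\int_W f(v,w)\,K(v)(dw)$, and the two identities $\int_V c\,d\mu=Q(V\times W)=1$ and $\mu'(\{c=0\})=\int_{\{c=0\}}c\,d\mu=0$ give $0<c(v)<+\infty$ for $\mu'$-a.e.\ $v$. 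On the set $\{0<c<+\infty\}$, which carries both $\mu'$ and $Q$, one may rewrite $Q(dv,dw)=\big(c(v)^{-1}f(v,w)\,K(v)(dw)\big)\,\mu'(dv)$, and $w\mapsto c(v)^{-1}f(v,w)\,K(v)(dw)$ is a probability measure for $\mu'$-a.e.\ $v$. Thus $v\mapsto K'(v)$ and $v\mapsto c(v)^{-1}f(v,\cdot)\,K(v)$ are two probability kernels disintegrating $Q$ over the same first marginal $\mu'$, so by the almost-sure uniqueness of disintegrations over $W$ (the same fact used in the proof of Lemma~\ref{lem:entropie.sum.robust}) they agree $\mu'$-a.e.; in particular $K'(v)=c(v)^{-1}f(v,\cdot)\,K(v)\ll K(v)$ for $\mu'$-almost every $v$. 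The only delicate points in this step are the measurability of $v\mapsto c(v)$ (a Fubini/measurable-selection matter, handled as in Lemma~\ref{lem:integral.is.measurable}) and the invocation of disintegration uniqueness after the renormalization; an alternative would be to build, via a measurable Lebesgue decomposition, a jointly measurable $S\subset V\times W$ with $K(v)(S_v)=0$ for all $v$ whose sections carry the part of $K'(v)$ singular to $K(v)$, and then note $P(S)=0\Rightarrow Q(S)=0\Rightarrow K'(v)(S_v)=0$ for $\mu'$-a.e.\ $v$, but the renormalization argument is shorter and stays within tools already present in the paper.
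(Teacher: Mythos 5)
Your proof is correct, but your main argument for the key implication takes a genuinely different route from the paper. The paper's proof of the ``only if'' direction is essentially the alternative you sketch in your last sentence: it invokes Doob's theorem on the joint measurability of the Radon--Nikodym derivative $(R',R,w)\mapsto \frac{dR'}{dR}(w)$ (Dellacherie--Meyer, Theorem V.58) to build a universally measurable function $Z(v,w)$ realizing the Lebesgue decomposition of $K'(v)$ with respect to $K(v)$ fiberwise, characterizes $K'(v)\ll K(v)$ by $K'(v)(Z(v,\cdot)=+\infty)=0$, and then reaches a contradiction by evaluating $P$ and $Q$ on the jointly measurable set $\{Z=+\infty\}$, whose $v$-sections are $K(v)$-null. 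You instead work with a single global density $f=dQ/dP$ on the product space, identify $d\mu'/d\mu$ with $c(v)=\int_W f(v,w)\,K(v)(dw)$, renormalize fiberwise on $\{0<c<+\infty\}$, and conclude via almost-sure uniqueness of disintegrations that $K'(v)=c(v)^{-1}f(v,\cdot)K(v)\ll K(v)$ for $\mu'$-a.e.\ $v$; this chain is sound (the finiteness of $c$ $\mu'$-a.e.\ follows from $\int c\,d\mu=1$ together with $\mu'\ll\mu$, the measurability of $c$ is covered by Lemma \ref{lem:integral.is.measurable}, and the uniqueness of kernels is exactly the fact the paper itself uses in step (c) of the proof of Lemma \ref{lem:entropie.sum.robust}). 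What each approach buys: yours avoids the measurable--Radon--Nikodym machinery entirely, needing only the classical Radon--Nikodym theorem on the product plus disintegration uniqueness, and is arguably more elementary; the paper's construction produces an explicit, pointwise-defined measurable Lebesgue decomposition of the kernels (useful again in the proof of Lemma \ref{lem:entropie.sum}) and sidesteps the renormalization and marginal-identification bookkeeping. The ``if'' direction and the observation $\mu'\ll\mu$ are handled identically in both arguments.
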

\begin{proof}
	If $\mu'\ll\mu$ and $K'(v)\ll K(v)$ for $\mu'$-almost every $v$,
	it follows from the definition that $Q\ll P$.
	Indeed, for any Borel set $A\subset V\times W$ such that
	$0=P(A)=E_{\mu(dv)}[K(v)(A_v)]$, it holds $Q(A)=E_{\mu'(dv)}[K'(v)(A_v)]=0$.
	Here, $A_v:=\{w\in W: (v,w)\in A\}$.

	The other direction needs more work. The idea is to show that
	the generalized Radon-Nikodym derivative (see e.g.~\cite[Theorem A.13]{follmer2011stochastic})
	is measurable with respect to the kernels. 
	Assume that $Q\ll P$ and first notice that $\mu'\ll\mu$. 
	If this were not the case, then $\mu'(A)>0$ while $\mu(A)=0$ for some Borel set $A\subset V$
	which implies $Q(A\times W)=\mu'(A)>0$ but $P(A\times W)=0$.
	We proceed to show the absolute continuity of the kernels.
	Notice that the mapping 
	\[ \mathfrak{P}(W)\times\mathfrak{P}(W)\times W,\quad 
	(R',R,w)\mapsto \frac{dR'}{dR}(w)\]
	can be shown to be Borel, where $dR'/dR$ denotes the Radon-Nikodym derivative of the
	absolutely continuous part of $R'$ with respect to $R$.
	This result, due to Doob, can be found e.g.~in 
	\cite[Theorem V.58]{dellacherie2011probabilities} and the subsequent remark.
 	Hence 
	\[V\times W \to \mathfrak{P}(W)\times\mathfrak{P}(W)\times W,\quad
	(v,w)\mapsto (K(v),K'(v),w)\] 
	is universally measurable.
 	Thus, since $(R,R')\mapsto (R+R')/2$ is Borel, it follows that
	$Z\colon V\times W\to[0,+\infty]$,
	\[	Z(v,w):=	\frac{d K'(v)}{d(K(v)+K'(v))/2}(w)\Big(\frac{d K(v)}{d(K(v)+K'(v))/2}(w)\Big)^{-1}\]
	is universally measurable, with the convention $x/0:=+\infty$ for all $x\geq 0$. 
	A straightforward computation as in \cite[Theorem A.13]{follmer2011stochastic} yields
	$K(v)(Z(v,\cdot)=+\infty)=0$,
	\[ K'(v)(B)=K'(v)(B\cap\{Z(v,\cdot)=+\infty\}) + E_{K(v)}[1_BZ(v,\cdot)] \]
	for any universally measurable set $B\subset W$, and as a consequence that
	\[ K'(v)\ll K(v)\quad\text{if and only if}\quad K'(v)(Z(v,\cdot)=+\infty)=0.\]
	Heading for a contradiction, assume that the set of all such $v$ 
	has not full $\mu'$ measure and define the universally measurable set
	\[A:=\{(v,w): Z(v,w)=+\infty\}.\] 
	Then 
	\[Q(A)
	=E_{\mu'(dv)}[K'(v)(Z(v,\cdot)=+\infty)]
	>0,\]
	while on the other hand
	$P(A)=E_{\mu(dv)}[ K(v)(Z(v,\cdot)=+\infty)]=0$.
	This contradicts the absolute continuity of $Q$ with respect to $P$.
\end{proof}

\begin{proof}[\text{\bf Proof of Lemma \ref{lem:entropie.sum}}]
	The goal is to show that
	\begin{align}
	\label{eq:entropie.sum} 
	H(Q,P)=\sum_{s=t}^{T-1} E_Q[H(Q_s(\cdot),P_s(\cdot))].
	\end{align}

	(a)
	We first comment on the measurability of terms appearing later.
	Fix some $t\leq s\leq T-1$ and notice as in the proof of 
	Lemma \ref{lem:absolute.continuity.kernels} that
	\[\Omega_{s-t}\to\mathfrak{P}(\Omega_1)\times\mathfrak{P}(\Omega_1),\quad
	\bar{\omega}\mapsto (Q_s(\bar{\omega}),P_s(\bar{\omega}))\] 
	is universally measurable.
	Since the entropy $H$ is Borel by Lemma \ref{lem:H.is.borel}, one can check that 
	\[\Omega_{s-t}\to[0,+\infty],\quad \bar{\omega}\mapsto H(Q_s(\bar{\omega}),P_s(\bar{\omega}))\]
	is universally measurable.
	Similar, using Doob's result on the measurability of the Radon-Nikodym derivative as
	in the proof of Lemma \ref{lem:absolute.continuity.kernels}, it follows that
	\[ \Omega_{s-t}\times\Omega_1\to[0,+\infty],\quad
	(\bar{\omega},\omega')\mapsto\frac{dQ_s(\bar{\omega})}{dP_s(\bar{\omega})}(\omega') \]
	is universally measurable.
	Moreover, by \cite[Lemma 7.29]{bertsekas1978stochastic},
	the same holds true if $\bar{\omega}\in\Omega_{s-t}$ is fixed in the above
	mapping, and the latter is considered as a function of $\omega'$.
	
	(b)
	A direct application of Lemma \ref{lem:absolute.continuity.kernels} shows that
	\[ Q\ll P \quad\text{if and only if}\quad
	Q_s\ll P_s \quad Q_t\otimes\cdots\otimes Q_{s-1}\text{-almost surely}\]
	for all $t\leq s\leq T-1$, where in case $s=t$ the above should be understood as
	$Q_t\ll P_t$.
	This implies that whenever $Q$ is not absolutely continuous with respect to $P$, then 
	both sides in \eqref{eq:entropie.sum} are equal to $+\infty$. 
	Hence we may assume that $Q\ll P$. 
	Then $dQ/dP$ can be expressed as the product of $d Q_s(\cdot)/d P_s(\cdot)$, where
	$s$ ranges from $t$ to $T-1$.
	Therefore, for any $t\leq s\leq T-1$, it follows that
	\[E_Q\Big[ \Big(\log \frac{d Q_s(\cdot)}{d P_s(\cdot)}\Big)^-\Big]
	=E_{Q_t\otimes\cdots\otimes Q_{s-1}}\Big[
	E_{P_s(\cdot)}\Big[ \frac{d Q_s(\cdot)}{d P_s(\cdot)}\Big(\log \frac{d Q_s(\cdot)}{d P_s(\cdot)}\Big)^- \Big]\Big]
	\leq 1,\]	
	where the last inequality holds since $x(\log x)^-\leq 1$ for all $x\geq 0$.
	By integrability, the same steps may be repeated without the negative parts
	so that
	\[ H(Q,P)
	=E_Q\Big[\log\frac{dQ}{dP}\Big]
	=\sum_{s=t}^{T-1}E_Q\Big[\log \frac{dQ_s(\cdot)}{dP_s(\cdot)} \Big]
	=\sum_{s=t}^{T-1} E_Q[H(Q_s(\cdot),P_s(\cdot))] \]
	as claimed.
\end{proof}

\begin{proof}[\text{\bf Proof of Lemma \ref{lem:existence.optimizer.measurable}}]
	First, we claim that for any $\vartheta\in\Theta$ and $0\leq t\leq T-1$,
	there exists a universally measurable mapping
	$\hat{h}_t\colon \Omega_t\to\mathbb{R}^d$ such that
	\begin{align}
	\label{eq:optimal.h.hat} 
	&\mathcal{E}_t(\omega,x+(\vartheta\cdot S)_0^t(\omega))\\
	&=\sup_{P\in\mathcal{P}_t(\omega)}\log E_P[\exp( \mathcal{E}_{t+1}(\omega\otimes_t\cdot,
	x+ (\vartheta\cdot S)_0^t(\omega)+\hat{h}_t(\omega)\Delta S_{t+1}(\omega,\cdot)))]\nonumber
	\end{align}
	for all $\omega\in\Omega_t$.
	To that end, fix some $\vartheta\in\Theta$, $0\leq t\leq T-1$,
	and recall that $\mathcal{F}_t$ was defined as the universal completion
	of the Borel $\sigma$-field on $\Omega_t$.
	From the first part of the proof of Theorem \ref{thm:multiperiod},
	we already know that $\mathcal{E}_t(\omega,x)=\mathcal{D}_t(\omega)+x$
	for all $\omega\in\Omega_t$ and $x\in\mathbb{R}$ and that $\mathcal{D}_t$
	is upper semianalytic, in particular $\mathcal{F}_t$-measurable.
	This implies that $\mathcal{E}_t$ is 
	$\mathcal{F}_t\otimes\mathcal{B}(\mathbb{R})$-measurable.
	Define the function
	\[\phi(\omega,x,h):=\sup_{P\in\mathcal{P}_t(\omega)}
	\log E_P[\exp( \mathcal{D}_{t+1}(\omega\otimes_t\cdot) +x + h\Delta S_{t+1}(\omega,\cdot))].\]
	For fixed $x$ and $h$, it follows from
	\cite[Proposition 7.47]{bertsekas1978stochastic}
	(as in the first first part of the proof of Theorem \ref{thm:multiperiod})
	that $\phi(\cdot,x,h)$ is upper semianalytic.
	Moreover, for fixed $\omega$, an application of Fatou's lemma 
	(as in part (b) of the proof of Theorem \ref{thm:1peroiod})
	shows that 
	$\phi(\omega,\cdot,\cdot)$ is lower semicontinuous.
	Therefore, we can conclude by \cite[Lemma 4.12]{bouchard2015arbitrage}
	that $\phi$ is 
	$\mathcal{F}_t\otimes\mathcal{B}(\mathbb{R})\otimes\mathcal{B}(\mathbb{R}^d)$-measurable.
	Now fix $x\in\mathbb{R}$ and define the set-valued mapping
	\[\Phi(\omega):=\{h\in\mathbb{R}^d : \phi(\omega,x+ (\vartheta\cdot S)_0^t(\omega),h)
	=\mathcal{E}_t(\omega,x+(\vartheta\cdot S)_0^t(\omega))\}.\]
	By Theorem \ref{thm:1peroiod} it holds  $\Phi(\omega)\neq\emptyset$ and by the above its
	graph is in $\mathcal{F}_t\otimes\mathcal{B}(\mathbb{R}^d)$.
	Hence it follows by Theorem 5.5 in \cite{leese1978measurable}
	or rather the corollary and scholim after, that $\Phi$ admits an $\mathcal{F}_t$-measurable
	selector $\hat{h}_t$.
	
	To conclude the proof of the lemma, define 
	$\vartheta^\ast_s:=0$ for $s\leq t$, let $\hat{h}_t$ be an optimal strategy for
	time $t$ and define $\vartheta^\ast_{t+1}:=\hat{h}_t$.
	By the above, there is a universally
	measurable mapping $\hat{h}_{t+1}\colon\Omega_{t+1}\to\mathbb{R}$ 
	such that \eqref{eq:optimal.h.hat} holds for $t+1$.
	Define $\vartheta^\ast_{t+2}:=\hat{h}_{t+1}$.
	Proceeding in a recursive matter until $t=T$, we construct $\vartheta^\ast\in\Theta$ which 
	fulfills the requirements of the lemma.
\end{proof}

\section{Analytic sets}
\label{sec:app.analytic}
We briefly recall the used terminology and give a short overview on the theory of analytic sets;
for more details see e.g.~Chapter 7 in the book of 
Bertsekas and Shreve \cite{bertsekas1978stochastic}.
Throughout, fix two Polish spaces $V$ and $W$.
A subset of a Polish space is called analytic, if it is the image of a Borel 
set of another Polish space under a Borel function. 
Similarly, a function $f\colon V\to[-\infty,+\infty]$
is upper semianalytic, if $\{f\geq c\}\subset V$ is an analytic set for every real number $c$.
Further define $\mathcal{B}(V)$ to be the Borel $\sigma$-field on $V$ and
$\mathfrak{P}(V)$ to be the set of all probability measures on $\mathcal{B}(V)$.
The set $\mathfrak{P}(V)$ is endowed with the weak topology induced by all continuous bounded functions,
i.e.~$\sigma(\mathfrak{P}(V),C_b(V))$. Then $\mathfrak{P}(V)$ becomes a Polish space itself.
The set of universally measurable subsets of $V$ is defined as 
$\bigcap \{ \mathcal{B}(V)^P : P\in\mathfrak{P}(V)\}$, where 
$\mathcal{B}(V)^P$ is the completion of $\mathcal{B}(V)$ with respect to the probability $P$.
A function $f\colon V\to W$ is said to be universally measurable, 
if $\{f\in B\}$ is universally measurable for
every $B\in\mathcal{B}(W)$.
It follows from the definition that every Borel set is analytic, and from
Lusin's theorem (see \cite[Proposition 7.42]{bertsekas1978stochastic})
that every analytic set is universally measurable.
The same of course holds true if we replace sets by functions in the previous sentence.
A set-valued function $\Psi\colon V\to W$ is said to have analytic graph, if
\[ \mathop{\mathrm{graph}} \Psi
:=\{ (v,w) : v\in V,  w\in\Psi(v) \}\subset V\times W \]
is an analytic set. Finally, given a set $\mathcal{P}\subset\mathfrak{P}(V)$, 
a set $N\subset V$ is said to be $\mathcal{P}$-polar if $P(N)=0$ for all $P\in\mathcal{P}$.
Similarly, a property is said to hold $\mathcal{P}$-quasi surely (q.s.~for short),
if it holds outside a $\mathcal{P}$-polar set.

One can readily verify that $(v,P(dw))\mapsto E_P[X(v,\cdot)]$ is continuous, 
whenever $X\colon V\times W\to\mathbb{R}$ 
is uniformly continuous and bounded. The following lemma generalizes this.

\begin{lemma}[\text{\cite[Proposition 7.29/7.46/7.48]{bertsekas1978stochastic}}]
	\label{lem:integral.is.measurable}
	Let $X\colon V\times\mathfrak{P}(W)\times W\to[-\infty,+\infty]$
	be Borel / upper semianalytic / universally measurable. Then the mapping
	$V\times\mathfrak{P}(W)\to[-\infty,+\infty]$,
	$(v,P)\mapsto E_P[X(v,P,\cdot)]$
	is Borel / upper semianalytic / universally measurable.
\end{lemma}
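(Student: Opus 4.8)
The plan is to recognise the statement as a direct consequence of the standard results on integration of measurable functions against measurable stochastic kernels collected in Chapter 7 of \cite{bertsekas1978stochastic}. The only thing to observe beforehand is that the projection
\[ V\times\mathfrak{P}(W)\to\mathfrak{P}(W),\qquad (v,P)\mapsto P, \]
is continuous for the weak topology on $\mathfrak{P}(W)$, hence Borel, and therefore defines a Borel-measurable stochastic kernel on $W$ given the Polish space $V\times\mathfrak{P}(W)$. Viewing $X$ as a function on $\big(V\times\mathfrak{P}(W)\big)\times W$, the mapping $(v,P)\mapsto E_P[X(v,P,\cdot)]$ is precisely the integral of $X$ against this kernel, and, under the dictionary $x\leftrightarrow(v,P)$, $y\leftrightarrow w$, $q(dy\mid x)\leftrightarrow P(dw)$, the three assertions become the statements of \cite[Proposition 7.29]{bertsekas1978stochastic} (Borel case), \cite[Proposition 7.48]{bertsekas1978stochastic} (upper semianalytic case), and \cite[Proposition 7.46]{bertsekas1978stochastic} (universally measurable case).

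To reconcile the extended-integral conventions, I would write $X=X^++(X\wedge 0)$. Both summands inherit the regularity of $X$ (for the upper semianalytic case note that $\{X\wedge 0\geq c\}$ equals the whole space if $c\leq 0$ and equals $\{X\geq c\}$ otherwise, so $X\wedge 0$ is again upper semianalytic), and each has an unambiguously defined integral, namely $E_P[X^+]\in[0,+\infty]$ and $E_P[X\wedge 0]\in[-\infty,0]$. Applying the relevant proposition above to $X^+$ and to $X\wedge 0$ separately shows that $(v,P)\mapsto E_P[X^+]$ and $(v,P)\mapsto E_P[X\wedge 0]$ are of the required class, and then $E_P[X]=E_P[X^+]+E_P[X\wedge 0]$, which with the convention used throughout the paper equals $-\infty$ exactly on $\{E_P[X^-]=+\infty\}$, is too: in the Borel and universally measurable cases because the respective $\sigma$-fields are closed under the operations involved, and in the upper semianalytic case by \cite[Lemma 7.30]{bertsekas1978stochastic} on sums of upper semianalytic functions.

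I do not expect any genuine difficulty: the lemma is essentially a transcription of \cite[Chapter 7]{bertsekas1978stochastic} once the evaluation kernel is in place. The only step requiring a little care is the bookkeeping with $X^+$, $X\wedge 0$ and the $\infty-\infty$ convention in the second paragraph, and this is routine.
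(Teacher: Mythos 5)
Your proposal is correct and is essentially the paper's own proof: one views $(v,P)\mapsto P(dw)$ as a Borel stochastic kernel on $W$ given the Borel space $V\times\mathfrak{P}(W)$ and applies Propositions 7.29, 7.46 and 7.48 of \cite{bertsekas1978stochastic} according to the regularity of $X$. The additional bookkeeping with $X^+$ and $X\wedge 0$ to reconcile the $\infty-\infty$ convention is harmless but not needed (the paper applies the propositions directly to the extended-real-valued $X$), and which of 7.46/7.48 is cited for the upper semianalytic versus the universally measurable case is a mere labelling issue that does not affect the argument.
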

\begin{proof}
	The proof is an application of Proposition 7.29 / Proposition 7.46 / 
	Proposition 7.48 in \cite{bertsekas1978stochastic},
	depending on the given measurability. 
	Indeed, using the notation of \cite{bertsekas1978stochastic}, define the Borel-spaces
	$\mathcal{X}:= V\times\mathfrak{P}(W)$ and $\mathcal{Y}=W$
	as well as the Borel / upper semianalytic / universally measurable mapping 
	$f\colon \mathcal{X}\times\mathcal{Y}\to[-\infty,+\infty]$
	$f(x,y)= f(c,P,w):=X(v,P,w)$
	and Borel-kernel $q(dy,x)=q(dw,(v,P)):=P(dw)$.
	By the mentioned proposition the mapping
	$V\times\mathfrak{P}(W)=\mathcal{X}\to[-\infty,+\infty]$,
	$(v,P)=x\mapsto \int f(x,y)\,q(dy,x)=E_P[X(v,P,\cdot)]$
	is Borel / upper semianalytic / universally measurable.
\end{proof}

\begin{remark}
	\label{rem:kernels.measurable.and.derivative.measurable}
	By means of the disintegration theorem, every probability
	$P\in\mathfrak{P}(V\times W)$ can be written as $P=\mu\otimes K$, 
	where $\mu\in\mathfrak{P}(V)$ and $K\colon V\to\mathfrak{P}(W)$ is Borel.
	In fact, it is possible to construct the kernel $K$ in a way such that the mapping
	\[V\times\mathfrak{P}(V\times W)\to\mathfrak{P}(W),
	\quad (v,P)\mapsto K(v)\]
	is Borel; see \cite[Proposition 7.27]{bertsekas1978stochastic}.
\end{remark}

\bibliographystyle{abbrv}
%\bibliography{bib}

\end{document}